\theoremstyle{definition}
\newtheorem{definition}{Definition}
\newtheorem{theorem}{Theorem}
\newtheorem{lemma}{Lemma}
\newtheorem{corollary}{Corollary} 
\newtheorem{proposition}{Proposition} 
\newtheorem{example}{Example}
\theoremstyle{remark}
\newtheorem*{remark}{Remark}
\DeclareMathOperator{\tr}{Tr}
\DeclareMathOperator{\res}{Res}
\DeclareMathOperator{\rescl}{Res^{cl}}
\DeclareMathOperator{\id}{id}
\DeclareMathOperator{\spn}{span}
\numberwithin{equation}{section}
\begin{document}

\author[1]{Yana Staneva}
\affil[1]{Mathematical Institute \protect\\ University of Cologne \protect\\ 50931 Cologne, Germany \protect\\ \texttt{ystaneva@math.uni-koeln.de, \texttt{yanastaneva8@gmail.com} \protect\\ \texttt{mi.uni-koeln.de/$\sim$ystaneva/, github.com/yanastaneva8}}}
\title{Explicit Computations for the Classical and Quantum Integrability of the 3-Dimensional Rational Calogero-Moser System}
\maketitle

\begin{abstract}
{\it The integrability of the classical and quantum rational Calogero-Moser systems is verified explicitly via the Lax pair method for the case $n=3$. We provide an extensive survey of reflection groups and root systems. The Olshanetsky-Perelomov operators are constructed for a general root system via Dunkl operators, associated to root systems. The integrability of the quantum rational Calogero-Moser system is discussed via the Olshanetsky-Perelomov operators, which provide a set of commuting integrals of motion. The classical analogues of both the Dunkl and the Olshanetsky-Perelomov operators are also presented.}
\end{abstract}

\par {\bf Keywords:} integrable systems, finite-dimensional Hamiltonian systems, Calogero-Moser system, classical integrability, quantum integrability, Lax pair, Dunkl operator\\
\par {\bf AMS Subject Classification:} 37J35, 70H06 \\

\section*{Introduction}
\indent \par Conservation of energy is one of the fundamental scientific concepts, being well known by both the academic community and the general public. Natural questions to ask are: what other properties of a given dynamical system remain constant over time and how are they related to each other? The answers to these questions are subject to the study of integrable systems, which currently lies at the crossroad of many ares of mathematics and within the scope of interest of experts within various research fields, including, but not limited to Mathematical Physics, Dynamical Systems, Representation Theory, Symplectic Geometry.
\par Various integrable systems were discovered in the $19^{\text{th}}$ century, the Euler, Lagrange and Kovalevskaya tops, and the geodesic flow on an ellipsoid, to name a few \cite{tops}. The subject, however, remained somewhat dormant until the numerical results on the solitary waves phenomenon obtained by Zabusky \& Kruskal in 1965 \cite{zabusky}. Many equations have been studied since, the Korteweg-de Vries being the most prominent example, which yielded phenomenal results about the conserved quantities of the system at hand \cite{zabusky}. A considerable number of new integrable systems was discovered in the $70$s, among which are the Toda systems \cite{todasystem} and the main subject of this project -- the Calogero-Moser systems. These are both $n$-particle systems whose energy is given by a Hamiltonian of the form $H = \sum_{j=1}^{n} p_{j}^{2} + U(q)$, where in the Calogero-Moser case the potential is given by $U(q)=\sum_{1 \leq j < k \leq n} V(q_{j}-q_{k})$, describing the pairwise interaction between the $n$ particles \cite{Calogero}.
\par Calogero-Moser systems have been studied in both the classical and the quantum settings, and are generally classified as rational, trigonometric (hyperbolic) or elliptic, based on the form of the potential $U(q)$ \cite{OlshanetskyPerelomov}. There exist relativistic and non-relativistic versions as well \cite{Rujsenaars}. The main purpose of this project is to provide a somewhat complete and explicit overview of the rational Calogero-Moser system in both the classical and quantum sense. We investigate the integrability of the model via two separate techniques -- the Lax pair formalism and by constructing the integrals of motion via Dunkl operators. Further, we utilize a simple \emph{Mathematica} code in order to verify the explicit computations symbolically.
\par This paper is organized as follows. We begin by introducing the Hamiltonian formalism in the setting of classical mechanics in Section \ref{sec:classical}. Then we provide a working definition of integrability and introduce the Lax method for extracting integrals of motion. Then we present the classical rational Calogero-Moser system and illustrate its integrability by explicitly verifying the $n=3$ case via the \emph{Mathematica} code listed in Appendix \ref{appRational}.
\par Section \ref{sec:quantum} begins with a quick survey of quantum mechanics. Then we provide an analogous to the classical case definition of integrability and the Lax method. Hence we define the Hamiltonian of the quantum rational Calogero-Moser and explicitly show it is integrable in the $n=3$ case via the \emph{Mathematica} code listed in Appendix \ref{appQuantumRational}.
\par In Section \ref{sec:coxeter} we study reflection groups and root systems. We provide a detailed treatment of Coxeter graphs and Dynkin diagrams for root systems of different types.
\par Section \ref{sec:poly} is a quick survey of polynomial invariant theory, in turn needed to prove the main results in Section \ref{sec:Dunkl}, which outlines the construction of the integrals of motion of the quantum Calogero-Moser system via the Dunkl and Olshanetsky-Perelomov operators. Finally, we present the classical analogue of Dunkl operators in Section \ref{sec:dunklClassical}, which is then used in order to construct the classical Olshanetsky-Perelomov operator and thus, show the integrability of the Calogero-Moser system in the classical sense.
\par A working \emph{Wolfram Mathematica} code for the computations with the Lax pairs is provided in Appendices \ref{app}. 
\par I would like to take this opportunity to express my gratitude and dedicate this project to Dr Misha Feigin for his invaluable support both in academia and in life. This work is based on the author's master thesis submitted to the School of Mathematics and Statistics, University of Glasgow, UK back in 2017. The extended results exist due to the generous support provided by the Collaborative Research Centre/Transregio (CRC/TRR 191) on ``Symplectic Structures in Geometry, Algebra and Dynamics''.
%

\section{Classical Mechanics. Classical Rational Calogero-Moser.}\label{sec:classical}
\subsection{The Classical Frame and Poisson Brackets.}
\indent
\par  The \emph{configuration} (or \emph{state}) of an $n$-dimensional classical mechanical system is described via a set of $2n$ canonical variables (\emph{coordinates}) $\{q, p\}$, where $q = (q_{1}, \dots, q_{n})$ and $p = (p_{1}, \dots, p_{n})$ are called \emph{positions} and \emph{momenta}, respectively, or by a point $P = (q, p) \in \Gamma$, where $\Gamma$ is the $2n$-dimensional manifold describing the range of variation of the coordinates, called the phase space. Assume $\Gamma$ is compact, so that the system is confined on a bounded region of space and the energy of the system is bounded as well. The physical quantities (\emph{observables}), which describe the system are smooth functions $f(q,p) \in C^{\infty} ( \Gamma)$. Every state $P$ determines the values of the observables on that state and conversely, any state $P \in \Gamma$ is uniquely determined by the values of all the observables on that state. Thus, there exists a duality relation between states and observables~\cite{Strocchi}.
\par The change of an observable $f$ in time is measured by the time evolution of the canonical variables 
\begin{equation*}
q \to q_{t} = q(t), \quad p \to p_{t} = p(t), \quad f_{t} (q,p) = f(q_{t}, p_{t}).
\end{equation*}
 The time evolution of the variables is then given by Hamilton's equations of motion,
	\begin{equation}\label{hamilton}
	\dot{q_{j}} = \dfrac{\partial H }{\partial p_{j}} \qquad \text{ and } \qquad \dot{p_{j}} = - \dfrac{\partial H}{\partial q_{j}},
	\end{equation} 
where $H = H(q,p)$ is the Hamiltonian \cite{Strocchi}.
\par The observables generate an abelian algebra $\mathcal{A}$ of continuous functions on the compact phase space with a unit given by the constant function $1$ and with multiplication given by the point-wise product of functions. Further, $\mathcal{A}$ has identity $f=1$ due to the natural \emph{involution} (or $*$\emph{-operation}) given by the ordinary complex conjugation, which makes $\mathcal{A}$ a $*$-algebra. The time evolution of every observable $f_{t}(q,p) \in \mathcal{A}$ can be alternatively described by the differential equation
\begin{equation}\label{observablesChange}
\dfrac{d f_{t}(q,p)}{dt } = \sum_{j=1}^{n} \left( \dfrac{\partial f}{\partial q_{j}} \dfrac{d q_{j}}{d t} + \dfrac{\partial f}{\partial p_{j}} \dfrac{d p_{j}}{d t} \right).
\end{equation}
Substituting Hamilton's equations of motion \eqref{hamilton} in Equations \eqref{observablesChange}, we obtain
\begin{equation*}
\dfrac{d f_{t}(q,p)}{dt } = \sum_{j=1}^{n} \left( \dfrac{\partial f}{\partial q_{j}}  \dfrac{\partial H }{\partial p_{j}} + \dfrac{\partial f}{\partial p_{j}} \left( -  \dfrac{\partial H }{\partial q_{j}} \right) \right).
\end{equation*}
Then we define the following expression
\begin{equation}\label{poissonMap}
\{ f, g \} = \displaystyle{ \sum_{j=1}^{n} }\left( \dfrac{\partial f }{\partial p_{j}} \dfrac{\partial g }{\partial q_{j}}  - \dfrac{\partial f }{\partial q_{j}}  \dfrac{\partial g}{\partial p_{j}} \right) ,
\end{equation}
which allows us to concisely rewrite the time evolution of any observable $f$ as 
\begin{equation}\label{poissonBracket}
\dfrac{d f}{ dt} = \{ H, f \}.
\end{equation}
The bilinear map $\{ \cdot, \cdot \}:  C^{\infty} (\Gamma) \times C^{\infty} (\Gamma) \to C^{\infty} (\Gamma)$ given by Equation \eqref{poissonMap} is called the \emph{Poisson bracket} and it turns the algebra of classical observables $\mathcal{A}$ into a Lie algebra with a Lie bracket given by the Poisson bracket, due to the following properties, shown and proved in Proposition \ref{poissonBrackets} below.

\begin{proposition}\label{poissonBrackets}
The Poisson brackets satisfy the following properties, for $f,g, k \in C^{\infty}( \Gamma)$,
\begin{enumerate}
\item{Linearity:}
\begin{equation*} 
\{ \lambda f + \mu g, k\} = \lambda \{f,k\} + \mu \{g,k\}, \quad \forall \lambda, \mu \in \mathbb{R}.
\end{equation*}
\item{Anticommutativity:}
\begin{equation*} 
\{ f, g \} + \{ g, f \} = 0.
\end{equation*}
\item{Leibniz rule:}
\begin{equation*} 
\{  f, gk\} = \{f,g\}k + g\{f,k\}, \qquad \text{and} \qquad \{fg, k\} = \{f,k\}g + f \{g,k\}.
\end{equation*}
\item{Jacobi identity:}
\begin{equation*} 
\{f, \{g, k\} \} + \{ g, \{k, f\} \} + \{k, \{f, g \} \} = 0.
\end{equation*}
\end{enumerate}
\end{proposition}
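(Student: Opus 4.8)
The plan is to establish all four properties by direct computation from the definition \eqref{poissonMap}, treating the first three as essentially formal consequences of the elementary rules of calculus and reserving the real work for the Jacobi identity. Throughout I would exploit the fact that the coefficients appearing in \eqref{poissonMap} are the constants $\pm 1$, independent of the phase-space coordinates; this is the structural feature that ultimately makes everything work.

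Linearity follows by substituting $\lambda f + \mu g$ into the first slot and pulling $\lambda,\mu$ out using the linearity of each partial derivative, since $\partial(\lambda f + \mu g)/\partial p_j = \lambda\,\partial f/\partial p_j + \mu\,\partial g/\partial p_j$; the sum then splits into the two required pieces. Anticommutativity is immediate: writing out $\{f,g\}+\{g,f\}$ term by term, each summand of $\{g,f\}$ is the negative of a summand of $\{f,g\}$, because interchanging $f$ and $g$ exchanges the two products inside the bracket, so the total vanishes. For the Leibniz rule I would apply the product rule $\partial(gk)/\partial x = (\partial g/\partial x)\,k + g\,(\partial k/\partial x)$ to each derivative of $gk$ appearing in $\{f,gk\}$ and regroup to obtain $\{f,g\}k + g\{f,k\}$; the second identity $\{fg,k\} = \{f,k\}g + f\{g,k\}$ then follows at once by applying anticommutativity to reduce it to the first.

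The main obstacle is the Jacobi identity, the only genuinely computational statement. I would expand the cyclic sum $\{f,\{g,h\}\} + \{g,\{h,f\}\} + \{h,\{f,g\}\}$ completely, using the product rule whenever an outer bracket differentiates an inner one. Each of the three double brackets produces terms in which one function carries a second-order partial derivative while the other two carry first-order derivatives. The organizing idea is to collect all terms carrying a fixed second derivative: for instance, the terms in which $h$ is twice-differentiated arise from exactly two of the three double brackets, namely $\{f,\{g,h\}\}$ and $\{g,\{h,f\}\}$, and they cancel once one invokes the equality of mixed partials $\partial^2 h/\partial q_j\partial p_k = \partial^2 h/\partial p_k\partial q_j$ together with the antisymmetric sign pattern of the bracket. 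Repeating this for the second derivatives of $f$ and of $g$ exhausts every surviving term, leaving zero. The bookkeeping is lengthy but entirely mechanical; the essential point — and the reason no leftover terms appear — is precisely that the coefficients in \eqref{poissonMap} are constant, so the outer differentiation never acts on them and hence produces no terms in which all three functions carry only first derivatives.
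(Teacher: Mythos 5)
Your proposal is correct and follows essentially the same route as the paper: direct expansion from the definition for linearity, antisymmetry, and the Leibniz rule, and a full expansion of the cyclic sum with pairwise cancellation of second-derivative terms for the Jacobi identity (your grouping by which function carries the second derivative is exactly the pattern the paper encodes with its letter labels $a$–$l$). The only small deviations are cosmetic: you derive the second Leibniz identity from the first via anticommutativity rather than by a parallel computation, and you make explicit the structural reason (constant coefficients in the bracket) that no purely first-order terms survive in the Jacobi expansion.
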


\begin{proof}
\begin{enumerate}
\item{Linearity.} Using Equation \eqref{poissonMap} to write out the expression explicitly, we get
\begin{align*}
\{ \lambda f + \mu g, k \} &= \sum_{j} \left( \dfrac{\partial \left( \lambda f + \mu g \right)}{\partial p_{j}} \dfrac{\partial k}{ \partial q_{j}} - \dfrac{\partial \left( \lambda f + \mu g \right)}{\partial q_{j}} \dfrac{\partial k}{ \partial p_{j}} \right) \\
&= \sum_{j} \left( \lambda \dfrac{\partial f}{\partial p_{j}} + \mu \dfrac{\partial g}{ \partial p_{j}} \right) \dfrac{\partial k }{\partial q_{j}} - \sum_{j} \left( \lambda \dfrac{\partial f}{ \partial q_{j}} + \mu \dfrac{\partial g }{\partial q_{j}} \right) \dfrac{\partial k}{\partial p_{j}},
\end{align*}
where we use linearity of the ordinary partial derivatives. Next we bring the constant coefficients in front of the summations to get
\begin{align*}
 \{ \lambda f + \mu g, k \}      &= \lambda \sum_{j} \left( \dfrac{\partial f}{\partial p_{j}} \dfrac{\partial k}{\partial q_{j}} \right) + \mu \sum_{j} \left( \dfrac{\partial g}{\partial p_{j}} \dfrac{\partial k}{\partial q_{j}} \right) - \lambda \sum_{j} \left( \dfrac{\partial f}{\partial q_{j}} \dfrac{\partial k}{ \partial p_{j}} \right) - \mu \sum_{j} \left( \dfrac{\partial g}{\partial q_{j}} \dfrac{\partial k}{\partial p_{j}} \right).
\end{align*}
Grouping the summations with the same coefficients we obtain
\begin{align*}
\{ \lambda f + \mu g, k \} &= \lambda \sum_{j} \left( \dfrac{ \partial f}{\partial p_{j}} \dfrac{ \partial k}{\partial q_{j}}  - \dfrac{ \partial f}{\partial q_{j}} \dfrac{ \partial k}{\partial p_{j}} \right) + \mu \sum_{j} \left( \dfrac{ \partial g}{\partial p_{j}} \dfrac{ \partial k}{\partial q_{j}}  - \dfrac{ \partial g}{\partial q_{j}} \dfrac{ \partial k}{\partial p_{j}} \right) \\
&= \lambda \{f,k \} + \mu \{ g,k\},
\end{align*} 
using the definition of the Poisson bracket as per Equation \eqref{poissonMap} and we are done.
\item{Antisymmetry.} Again, writing out the expression using Equation \eqref{poissonMap} we get
\begin{align*}
\{ f,g \} + \{ g, f\} &= \sum_{j} \left( \dfrac{ \partial f}{\partial p_{j}} \dfrac{ \partial g}{\partial q_{j}}  - \dfrac{ \partial f}{\partial q_{j}} \dfrac{ \partial g}{\partial p_{j}} \right) + \left( \dfrac{ \partial g}{\partial p_{j}} \dfrac{ \partial f}{\partial q_{j}}  - \dfrac{ \partial g}{\partial q_{j}} \dfrac{ \partial f}{\partial p_{j}} \right) \\
&= \cancel{\sum_{j} \dfrac{ \partial f}{\partial p_{j}} \dfrac{ \partial g}{\partial q_{j}} } - \bcancel{\sum_{j} \dfrac{ \partial f}{\partial q_{j}} \dfrac{ \partial g}{\partial p_{j}}} + \bcancel{\sum_{j} \dfrac{ \partial g}{\partial p_{j}} \dfrac{ \partial f}{\partial q_{j}} } - \cancel{\sum_{j} \dfrac{\partial g}{\partial q_{j}} \dfrac{ \partial f}{\partial p_{j}} } \\
&= 0,
\end{align*}
since all terms with opposite signs cancel out as illustrated above.
\item{Leibniz rule.} We only prove the first of the two versions of the Leibniz rule. The second one follows the same logic. By Equation \eqref{poissonMap}, we have
\begin{align*}
\{f, gk \} &= \sum_{j} \left( \dfrac{\partial f}{\partial p_{j}} \dfrac{\partial  \left(gk\right)}{ \partial q_{j}} - \dfrac{\partial f}{\partial q_{j}} \dfrac{\partial \left( gk \right)}{\partial p_{j}}    \right) = \sum_{j} \left( \dfrac{\partial f}{\partial p_{j}} \left( \dfrac{\partial g}{\partial q_{j}}k + g\dfrac{\partial k}{\partial q_{j}}\right)  - \dfrac{\partial f}{\partial q_{j}} \left( \dfrac{\partial g}{\partial p_{j}} k + g \dfrac{\partial k}{\partial p_{j}} \right) \right),
\end{align*}
where we have applied the product rule. Expanding the brackets, we get
\begin{align*}
\{f, gk \} &= \sum_{j} \left( \dfrac{\partial f}{\partial p_{j}} \dfrac{\partial g}{\partial q_{j}}\right) k + \sum_{j} g \left( \dfrac{\partial f}{\partial p_{j}}  \dfrac{\partial k}{\partial q_{j}} \right)  - \sum_{j} \left( \dfrac{\partial f}{\partial q_{j}}  \dfrac{\partial g}{\partial p_{j}} \right) k - \sum_{j} g \left( \dfrac{\partial f}{\partial q_{j}}  \dfrac{\partial k}{\partial p_{j}} \right).
\end{align*}
Now we group terms based on the non-differentiated functions $k$ and $g$ to obtain
\begin{align*}
\{f, gk \} &= \sum_{j}  \left( \dfrac{ \partial f}{\partial p_{j}} \dfrac{ \partial g}{\partial q_{j}}  - \dfrac{ \partial f}{\partial q_{j}} \dfrac{ \partial g}{\partial p_{j}} \right) k + \sum_{j} g  \left( \dfrac{ \partial f}{\partial p_{j}} \dfrac{ \partial k}{\partial q_{j}}  - \dfrac{ \partial f}{\partial q_{j}} \dfrac{ \partial k}{\partial p_{j}} \right)  \\
&= \{ f,g\}k + g\{f,k\},
\end{align*}
as required.
\item{Jacobi identity.} In order to show that the Poisson brackets satisfy the Jacobi identity, we consider each term separately. Using the definition of the Poisson bracket as in Equation \eqref{poissonMap}, we obtain the following expression for the first term, namely
\begin{align*}
\{ f, \{g,k\}\} &= \left\{ f, \sum_{j} \left( \dfrac{ \partial f}{\partial p_{j}} \dfrac{ \partial g}{\partial q_{j}}  - \dfrac{ \partial f}{\partial q_{j}} \dfrac{ \partial g}{\partial p_{j}} \right) \right\}.
\end{align*}
Now after one more application of Equation \eqref{poissonMap} we get
\begin{align*}
\{ f, \{g,k\}\} &= \sum_{j} \left( \dfrac{ \partial f}{\partial p_{j}} \dfrac{ \partial }{\partial q_{j}} \left( \dfrac{\partial g}{\partial p_{j}} \dfrac{\partial k}{\partial q_{j}} - \dfrac{\partial g}{\partial q_{j}} \dfrac{\partial k}{\partial p_{j}}    \right) - \dfrac{ \partial f}{\partial q_{j}} \dfrac{ \partial }{\partial p_{j}} \left( \dfrac{\partial g}{\partial p_{j}} \dfrac{\partial k}{\partial q_{j}} - \dfrac{\partial g}{\partial q_{j}} \dfrac{\partial k}{\partial p_{j}}  \right) \right) \\
&= \sum_{j} \left(  \underbrace{\dfrac{ \partial f}{\partial p_{j}}  \dfrac{ \partial^{2} g}{\partial q_{j} \partial p_{j}} \dfrac{\partial k}{\partial q_{j}}}_{a} + \underbrace{\dfrac{ \partial f}{\partial p_{j}} \dfrac{ \partial g}{\partial p_{j}}  \dfrac{ \partial^{2} k}{\partial q_{j}^{2}}}_{b} -  \underbrace{\dfrac{ \partial f}{\partial p_{j}} \dfrac{ \partial^{2} g}{\partial q_{j}^{2}} \dfrac{ \partial k}{\partial p_{j}}}_{c} - \underbrace{\dfrac{ \partial f}{\partial p_{j}} \dfrac{ \partial g}{\partial q_{j}} \dfrac{ \partial^{2} k}{\partial p_{j} \partial q_{j}}}_{d}  \right. \\ 
& \left. \qquad - \underbrace{\dfrac{ \partial f}{\partial q_{j}}  \dfrac{ \partial^{2} g}{\partial p_{j}} \dfrac{ \partial k}{\partial q_{j}}}_{e}
 - \underbrace{\dfrac{ \partial f}{\partial q_{j}}  \dfrac{ \partial g}{\partial p_{j}}  \dfrac{ \partial^{2} k}{\partial p_{j} \partial q_{j}}}_{f} + \underbrace{\dfrac{ \partial f}{\partial q_{j}}  \dfrac{ \partial^{2} g}{\partial p_{j} \partial q_{j}}  \dfrac{ \partial k}{\partial p_{j}}}_{g}  + \underbrace{\dfrac{ \partial f}{\partial q_{j}} \dfrac{ \partial g}{\partial q_{j}} \dfrac{ \partial^{2} k}{\partial p_{j}^{2}}}_{h} 
\right),
\end{align*}
where the letter under each term is a label for the cancelation to be carried out later. Similarly for the second term using Equation \eqref{poissonMap} twice and underbracing the cancelation terms by letters yields
\begin{align*}
\{ g, \{k,f\}\} &= \left\{ g, \sum_{j} \left( \dfrac{ \partial k}{\partial p_{j}} \dfrac{ \partial f}{\partial q_{j}}  - \dfrac{ \partial k}{\partial q_{j}} \dfrac{ \partial f}{\partial p_{j}} \right) \right\} \\
&= \sum_{j} \left( \dfrac{ \partial g}{\partial p_{j}} \dfrac{ \partial }{\partial q_{j}} \left( \dfrac{\partial k}{\partial p_{j}} \dfrac{\partial f}{\partial q_{j}} - \dfrac{\partial k}{\partial q_{j}} \dfrac{\partial f}{\partial p_{j}}    \right) - \dfrac{ \partial g}{\partial q_{j}} \dfrac{ \partial }{\partial p_{j}} \left( \dfrac{\partial k}{\partial p_{j}} \dfrac{\partial f}{\partial q_{j}} - \dfrac{\partial k}{\partial q_{j}} \dfrac{\partial f}{\partial p_{j}}  \right) \right) \\
&= \sum_{j} \left(  \underbrace{\dfrac{ \partial f}{\partial p_{j}}  \dfrac{ \partial^{2} k}{\partial q_{j} \partial p_{j}} \dfrac{\partial f}{\partial q_{j}}}_{f} + \underbrace{\dfrac{ \partial g}{\partial p_{j}} \dfrac{ \partial k}{\partial p_{j}}  \dfrac{ \partial^{2} f}{\partial q_{j}^{2}}}_{i} -  \underbrace{\dfrac{ \partial g}{\partial p_{j}} \dfrac{ \partial^{2} k}{\partial q_{j}^{2}} \dfrac{ \partial f}{\partial p_{j}}}_{b} - \underbrace{\dfrac{ \partial g}{\partial p_{j}} \dfrac{ \partial k}{\partial q_{j}} \dfrac{ \partial^{2} f}{\partial p_{j} \partial q_{j}}}_{j}  \right. \\ 
& \left. \qquad - \underbrace{\dfrac{ \partial g}{\partial q_{j}}  \dfrac{ \partial^{2} k}{\partial p_{j}} \dfrac{ \partial f}{\partial q_{j}}}_{h}
 - \underbrace{\dfrac{ \partial g}{\partial q_{j}}  \dfrac{ \partial k}{\partial p_{j}}  \dfrac{ \partial^{2} f}{\partial p_{j} \partial q_{j}}}_{k} + \underbrace{\dfrac{ \partial g}{\partial q_{j}}  \dfrac{ \partial^{2} k}{\partial p_{j} \partial q_{j}}  \dfrac{ \partial f}{\partial p_{j}}}_{d}  + \underbrace{\dfrac{ \partial g}{\partial q_{j}} \dfrac{ \partial k}{\partial q_{j}} \dfrac{ \partial^{2} f}{\partial p_{j}^{2}}}_{l}
\right).
\end{align*}
In the same way we use Equation \eqref{poissonMap} twice and indicate via letters the pairing cancelations to get
\begin{align*}
\{ k, \{f,g\}\} &= \left\{ k, \sum_{j} \left( \dfrac{ \partial f}{\partial p_{j}} \dfrac{ \partial g}{\partial q_{j}}  - \dfrac{ \partial f}{\partial q_{j}} \dfrac{ \partial g}{\partial p_{j}} \right) \right\} \\
&= \sum_{j} \left( \dfrac{ \partial k}{\partial p_{j}} \dfrac{ \partial }{\partial q_{j}} \left( \dfrac{\partial f}{\partial p_{j}} \dfrac{\partial g}{\partial q_{j}} - \dfrac{\partial f}{\partial q_{j}} \dfrac{\partial g}{\partial p_{j}}    \right) - \dfrac{ \partial k}{\partial q_{j}} \dfrac{ \partial }{\partial p_{j}} \left( \dfrac{\partial f}{\partial p_{j}} \dfrac{\partial g}{\partial q_{j}} - \dfrac{\partial f}{\partial q_{j}} \dfrac{\partial g}{\partial p_{j}}  \right) \right) \\
&= \sum_{j} \left(  \underbrace{\dfrac{ \partial k}{\partial p_{j}}  \dfrac{ \partial^{2} f}{\partial q_{j} \partial p_{j}} \dfrac{\partial g}{\partial q_{j}}}_{k} + \underbrace{\dfrac{ \partial k}{\partial p_{j}} \dfrac{ \partial f}{\partial p_{j}}  \dfrac{ \partial^{2} g}{\partial q_{j}^{2}}}_{c} -  \underbrace{\dfrac{ \partial k}{\partial p_{j}} \dfrac{ \partial^{2} f}{\partial q_{j}^{2}} \dfrac{ \partial g}{\partial p_{j}}}_{i} - \underbrace{\dfrac{ \partial k}{\partial p_{j}} \dfrac{ \partial f}{\partial q_{j}} \dfrac{ \partial^{2} g}{\partial p_{j} \partial q_{j}}}_{g}  \right. \\ 
& \left. \qquad - \underbrace{\dfrac{ \partial k}{\partial q_{j}}  \dfrac{ \partial^{2} f}{\partial p_{j}} \dfrac{ \partial g}{\partial q_{j}}}_{l}
 - \underbrace{\dfrac{ \partial k}{\partial q_{j}}  \dfrac{ \partial f}{\partial p_{j}}  \dfrac{ \partial^{2} g}{\partial p_{j} \partial q_{j}}}_{a} + \underbrace{\dfrac{ \partial k}{\partial q_{j}}  \dfrac{ \partial^{2} f}{\partial p_{j} \partial q_{j}}  \dfrac{ \partial g}{\partial p_{j}}}_{j}  + \underbrace{\dfrac{ \partial k}{\partial q_{j}} \dfrac{ \partial f}{\partial q_{j}} \dfrac{ \partial^{2} g}{\partial p_{j}^{2}}}_{e}
\right).
\end{align*}
All letters from $a$ to $l$ appear twice among the three expressions above, thus, all terms cancel and prove that the Jacobi identity holds as required.
\end{enumerate}
\end{proof}
\subsection{Integrals of Motion and Classical Integrability.}
\par An observable $I \in C^{\infty}(\Gamma)$ is called an \emph{integral of motion} if it is constant in time. Using the Poisson bracket notation, $I$ is an integral of motion if $\dot{I} = \{ H, I \} = 0$. If this holds, it is said that the observables $H$ and $I$ are \emph{in involution} (or \emph{Poisson commute}).
\begin{definition}\label{liouvilleintegrabilitiy}
If an $n$-dimensional dynamical system has $n$ functionally independent smooth functions $I_{1}=H, \dots, I_{n}$ such that they are in pairwise involution, i.e. $\{ I_{i}, I_{j} \}=0$ for $i \neq j$, then the system is said to be \emph{integrable}.
\end{definition} 
\par Liouville showed that given an $n$-dimensional mechanical system with $n$ independent integrals of motion in involution, the equations of motion can be integrated by normal means of integration \cite{Arnold}, \cite{liouvilleint}. One way of identifying the integrals of motion is via the Lax Pair method, which we discuss next.
\subsection{Classical Integrability via the Lax Formalism.}
 \indent \par The Lax Pair method is a powerful tool, which was first formulated by Peter Lax in his remarkable paper \cite{Lax}. Given an integrable system, we wish to find a pair of matrices $L$ and $M$, such that the Lax equation 
	\begin{equation}\label{laxEqn}
	\dot{L} = [ L, M ]
	\end{equation} 
 is equivalent to Hamilton's equations of motion \eqref{hamilton}. The following proposition summarizes the key property of the Lax matrix $L$, which hints at the usefulness in determining the integrals of motion. 
 \begin{proposition}
 The Lax matrix $L = L(t)$ satisfies 
\begin{equation}\label{laxPairIff}
L(t) = U^{-1}(t)L(0)U(t), 
\end{equation}
where $U$ is such that $\dot{U} = UM$ and $U(0) = \mathbb{I}$, where $\mathbb{I}$ is the identity matrix. In particular, the eigenvalues of $L$ are independent of time. 
\end{proposition}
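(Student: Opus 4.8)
The plan is to verify that the proposed expression solves the Lax equation \eqref{laxEqn} with the correct initial data, and then to conclude by uniqueness of solutions. First I would settle the existence and invertibility of $U$. The defining relation $\dot U = UM$ is a linear matrix ordinary differential equation, so for $M=M(t)$ continuous in $t$ the solution $U$ with $U(0)=\mathbb{I}$ exists and is unique by Picard--Lindel\"of theory. To see that $U(t)$ is invertible for every $t$, I would apply Jacobi's formula, which gives $\frac{d}{dt}\det U = \det U\cdot \tr\!\left(U^{-1}\dot U\right) = \det U\cdot \tr M$, whence $\det U(t) = \exp\!\left(\int_0^t \tr M(s)\,ds\right)\neq 0$. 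Thus $U^{-1}(t)$ is well defined for all $t$.

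The crux is then a short computation. Differentiating the identity $U^{-1}U=\mathbb{I}$ and using $\dot U = UM$ gives $\frac{d}{dt}\bigl(U^{-1}\bigr) = -U^{-1}\dot U\,U^{-1} = -MU^{-1}$. With this in hand I would set $\tilde L(t):=U^{-1}(t)\,L(0)\,U(t)$ and differentiate by the product rule:
\begin{equation*}
\dot{\tilde L} = \frac{d}{dt}\bigl(U^{-1}\bigr)L(0)U + U^{-1}L(0)\,\dot U = -MU^{-1}L(0)U + U^{-1}L(0)UM = -M\tilde L + \tilde L M = [\tilde L, M].
\end{equation*}
Since $U(0)=\mathbb{I}$, we also have $\tilde L(0)=L(0)$. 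Hence $\tilde L$ satisfies the same first-order (linear in $L$) differential equation \eqref{laxEqn} with the same initial condition as $L$, and by uniqueness of solutions $L(t)=\tilde L(t)=U^{-1}(t)L(0)U(t)$, which is the claimed formula \eqref{laxPairIff}.

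Finally, the eigenvalue statement is immediate: for each fixed $t$ the matrix $L(t)$ is similar to $L(0)$ via conjugation by $U(t)$, so the two share the characteristic polynomial $\det\bigl(L(0)-\lambda\mathbb{I}\bigr)$ and therefore the same eigenvalues for all $t$. I expect the only genuine subtlety to be the justification of the invertibility of $U$ and the appeal to uniqueness, both of which require $M(t)$ to be regular enough (continuity in $t$ suffices) for the ODE theory to apply; the differentiation itself is routine. As an alternative that sidesteps the explicit conjugation when only spectral invariance is wanted, one could instead observe directly that every power trace is conserved, since $\frac{d}{dt}\tr\bigl(L^{k}\bigr) = \tr\bigl([L^{k},M]\bigr) = 0$, and the coefficients of the characteristic polynomial are symmetric functions of these traces.
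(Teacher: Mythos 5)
Your proposal is correct and follows essentially the same route as the paper: differentiate $U^{-1}$ via $\frac{d}{dt}(U^{-1})=-U^{-1}\dot U U^{-1}$, differentiate the conjugated matrix, substitute $\dot U=UM$ to recover $[L,M]$, and conclude spectral invariance from similarity. The additional points you raise --- existence and invertibility of $U$ via Jacobi's formula and the appeal to uniqueness of the initial-value problem --- are sound refinements that the paper leaves implicit, but they do not change the underlying argument.
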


\begin{proof}
By definition of the inverse, $UU^{-1}=\mathbb{I}$. Differentiating and rearranging we get an expression for the derivative of $U^{-1}$, namely
\begin{equation}\label{inverseDerivative}
\dfrac{dU^{-1}}{dt} = -U^{-1}\dfrac{dU}{dt}U^{-1}.
\end{equation} 
Now differentiating Equation \eqref{laxPairIff} and substituting with Equation \eqref{inverseDerivative}, we get
\begin{align*}
\dfrac{d L}{dt} &= \dfrac{dU^{-1}}{dt}L(0)U + U^{-1}L(0)\dfrac{dU}{dt} = -U^{-1}\dfrac{dU}{dt}U^{-1} L(0)U + U^{-1}L(0)\dfrac{dU}{dt}.
\end{align*}
Using the condition that $\dfrac{dU}{dt} = UM$ gives
\begin{align*}
\dfrac{d L}{dt} &= -\underbrace{U^{-1}U}_{\mathbb{I}} M U^{-1}L(0)U + U^{-1}L(0)UM  = -ML + LM = [L,M],
\end{align*}
hence, the Lax matrix $L$ as in Equation \eqref{laxPairIff} indeed provides the solution of the Lax equation \eqref{laxEqn} as required. Note that $L(t)$ and $L(0)$ are conjugate and therefore, have the same eigenvalues. Since $L(0)$ is constant and thus, has constant eigenvalues, it follows that the eigenvalues of $L(t)$ are also independent of time.
\end{proof}
Using the isospectral property of the Lax matrix $L$ we can identify the integrals of motion associated with the mechanical system.

\begin{theorem}
The integrals of motion of system \eqref{hamilton} with Lax pair \eqref{laxEqn} are given by the power traces of the Lax matrix $L$, that is
	\begin{equation}\label{integralsLax}
	I_{k} = \dfrac{1}{k} \tr (L^{k}).
	\end{equation}
\end{theorem}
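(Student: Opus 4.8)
The plan is to show that each $I_{k} = \frac{1}{k}\tr(L^{k})$ is constant along the flow of Hamilton's equations, i.e. that $\dot{I}_{k} = 0$. Since the preceding proposition already establishes the isospectral property of $L$, there are two natural routes, and I would take the direct differentiation argument as the main line because it is self-contained and never requires solving explicitly for the conjugating matrix $U$.

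First I would differentiate $\tr(L^{k})$ directly. Writing $\frac{d}{dt}L^{k} = \sum_{i=0}^{k-1} L^{i}\,\dot{L}\,L^{k-1-i}$ by the Leibniz rule for the matrix product, taking the trace, and invoking the cyclic invariance $\tr(ABC)=\tr(CAB)$ collapses all $k$ summands to a single term, giving
\begin{equation*}
\frac{d}{dt}\tr(L^{k}) = k\,\tr\bigl(L^{k-1}\dot{L}\bigr).
\end{equation*}
Then I would substitute the Lax equation $\dot{L} = [L,M] = LM - ML$ to obtain $k\bigl(\tr(L^{k}M) - \tr(L^{k-1}ML)\bigr)$. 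One final application of cyclicity, $\tr(L^{k-1}ML) = \tr(L\,L^{k-1}M) = \tr(L^{k}M)$, shows the two terms are equal, so their difference vanishes and $\frac{d}{dt}\tr(L^{k}) = 0$. Hence $I_{k}$ is an integral of motion for every $k$.

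As an alternative confirmation I would note the route through the previous proposition: from $L(t) = U^{-1}(t)L(0)U(t)$ the intermediate factors $UU^{-1}$ telescope, so $L(t)^{k} = U^{-1}(t)L(0)^{k}U(t)$, and cyclicity of the trace gives $\tr(L(t)^{k}) = \tr(L(0)^{k})$ at once. This makes transparent that the conserved quantities are precisely symmetric functions of the time-independent eigenvalues of $L$; the power sums $\tr(L^{k})$ are one convenient generating set, and the normalizing factor $\frac{1}{k}$ is purely conventional (chosen to simplify Newton-type identities among the $I_{k}$).

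The computation itself presents no real obstacle beyond repeated, careful use of the cyclic property together with the Leibniz rule for $\frac{d}{dt}L^{k}$. The genuinely substantive point, which I would flag rather than settle here, is that producing conserved quantities is not yet integrability in the sense of Definition \ref{liouvilleintegrabilitiy}: the $I_{k}$ must further be shown functionally independent and in pairwise involution. Indeed, for an $N\times N$ Lax matrix at most $N$ of the $\tr(L^{k})$ can be functionally independent (by the Cayley--Hamilton theorem, equivalently by Newton's identities relating power sums to the characteristic polynomial), so the theorem supplies a family of candidate integrals whose independence and Poisson-commutativity are then verified separately for the system at hand.
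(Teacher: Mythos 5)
Your proposal is correct, and in fact more explicit than what the paper provides: the paper states this theorem without a written proof, letting it rest on the immediately preceding proposition (the isospectral property $L(t)=U^{-1}(t)L(0)U(t)$) together with the remark that the integrals come from the characteristic polynomial of $L$. Your ``alternative confirmation'' --- telescoping $L(t)^{k}=U^{-1}(t)L(0)^{k}U(t)$ and applying cyclicity of the trace --- is precisely the argument the paper implicitly relies on, since constancy of the eigenvalues forces constancy of all symmetric functions of them, the power sums in particular. Your main line, the direct differentiation $\frac{d}{dt}\tr(L^{k})=k\,\tr(L^{k-1}\dot{L})=k\,\tr(L^{k-1}[L,M])=0$, is a genuinely self-contained alternative: it never invokes the existence of the conjugating matrix $U$ (which requires solving $\dot{U}=UM$) and works directly from the Lax equation, at the modest cost of the Leibniz expansion of $\frac{d}{dt}L^{k}$. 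Each route buys something: the conjugation argument makes the spectral interpretation transparent and explains \emph{why} the power traces are the natural conserved quantities, while your differentiation argument is the minimal computation needed for the stated claim. Your closing caveat --- that conservation alone does not give Liouville integrability, and that independence and involution must be checked separately --- is exactly the point the paper itself makes afterwards via the $r$-matrix discussion and the explicit $n=3$ verification, and the bound of at most $N$ independent power traces via Cayley--Hamilton is a correct and worthwhile observation. The normalization $\frac{1}{k}$ is indeed conventional, chosen here so that $I_{2}$ coincides with the Hamiltonian in the subsequent corollary.
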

\begin{remark}
The classical integrals of motion are obtained by the characteristic polynomial of the matrix $L$. Hence, they are uniquely determined by a homogeneous polynomial expression with a fixed highest term in the momenta $p_{j}$. Therefore, they are unique.
\end{remark}

The involutive property, however, is handled in a slightly more complicated manner. One way of showing that the integrals of motion given by Equation \eqref{integralsLax} Poisson commute with each other is utilizing a special object, called the \emph{$r$-matrix}, which satisfies the \emph{Yang-Baxter equation (YBE)} \cite{Babelon}. A thorough treatment and detailed discussion of the Lie theoretic perspective of the classical $r$-matrix is given in \cite{Semenov}. The general idea of the proof is that the existence of the $r$-matrix is equivalent to the eigenvalues of the matrix $L$ being in involution, which in turn guarantees the involution of the integrals of motion. We refer the reader to consult \cite{intlectures} for more details on the matter.

\par On a separate note, verifying explicitly the number of integrals of motion is usually straightforward. Hence, we can now present the system of interest.

\subsection{Classical Rational Calogero-Moser System.}
\indent \par
\begin{definition}
The Hamiltonian of the classical rational Calogero-Moser System \cite{Moser} is
	\begin{align}\label{CM}
	{H} =  \dfrac{1}{2} \sum_{j=1}^{n} {p}_{j}^{2} + \sum_{\substack{ i = 1 \\ i \neq j }}^{n} \dfrac{k^{2}}{(q_{j}-q_{i})^{2}}.
	\end{align}
\end{definition}
	It describes the motion of a system of $n$ particles on the line with parameter of interaction $k \in \mathbb{R}$, as illustrated below in Figure \ref{classical_calogero_fig}.

\begin{figure}
\centering
    	\begin{tikzpicture}
 	 \filldraw 
	         (0, -5) circle (0pt) node {} --
		(2,-5) circle (1pt) node[align=left,   below] {$q_{1}$} --
		(4,-5) circle (1pt) node[align=center, below] {$q_{2}$}     -- 
		(6,-5) circle (1pt) node[align=right,  below] {\dots} --
		(8,-5) circle (1pt) node[align=center, below] {$q_{i}$} --
		(10, -5) circle (1pt) node[align=right, below] {\dots} --
		(12,-5) circle (1pt) node[below] {$q_{n-1}$} --
		(14, -5) circle (1pt) node[align=right, below] {$q_{n}$} --
		(16, -5) circle (0pt) node {} 
		(3, -5.5) node {$\longleftrightarrow$}
		(3, -6) node {$\frac{k^{2}}{(q_{1}-q_{2})^{2}}$}
		(13, -5.5) node {$\longleftrightarrow$}
		(13, -6) node {$\frac{k^{2}}{(q_{n-1}-q_{n})^{2}}$}		;
	\end{tikzpicture} 
	\caption{The configuration of the classical rational Calogero-Moser system.}
	\label{classical_calogero_fig}
\end{figure}
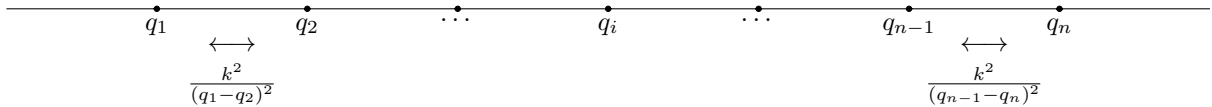
	Whenever we denote a generalized coordinate with a double subscript, i.e. $q_{ij}$ we shall mean the difference between the $i$-th and $j$-th coordinate, that is, $(q_{i}-q_{j})$. Similarly, $q_{ij}^{(k)}$ denotes $(q_{i}-q_{j})^{-k}$. Using Hamilton's equations of motion \eqref{hamilton} we get
\begin{align}
\begin{aligned}\label{hamiltonClassical}
&\dot{q_{i}} = \dfrac{\partial H}{\partial p_{i}} \implies \dot{q_{i}}= p_{i} \implies \ddot{q_{i}}=\dot{p_{i}},\\
& \dot{p_{i}} = -\dfrac{\partial H}{\partial q_{i}} \implies \dot{p_{i}} = \sum_{j=1}^{n} 2k^{2} q_{ij}^{(3)},
\end{aligned}
\end{align}
which are the equations of motion governing the dynamics of the rational Calogero-Moser system. Furthermore, it is an integrable system and we shall verify this statement via the Lax method. 

\subsection{Integrability of the Classical Calogero-Moser via Lax Formalism.}
\begin{theorem}
A Lax pair for the system \eqref{CM} is given by
	\begin{align}\label{laxPairMoserClassical}
	L_{rs} = p_{r} \delta_{rs} + (1-\delta_{rs}) \dfrac{ik}{q_{rs}}, \qquad M_{rs} = -\dfrac{ik (1-\delta_{rs})}{q_{rs}^{2}} + \delta_{rs} \sum_{\substack{t=1 \\ t \neq r}}^{n} \dfrac{ik}{q_{rt}^{2}},
	\end{align}
where $\delta$ is the Kronecker delta, $i = \sqrt{-1}$ and $r,s = 1, \dots, n$ \cite{MoserPaper} . 
\end{theorem}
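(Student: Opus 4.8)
The plan is to verify the Lax equation \eqref{laxEqn}, $\dot L = [L,M]$, entrywise, since that matrix identity is precisely a repackaging of Hamilton's equations \eqref{hamiltonClassical}. I would split the verification into the diagonal entries ($r=s$) and the off-diagonal entries ($r\neq s$), in each case computing the left-hand side $\dot L_{rs}$ from the equations of motion and the right-hand side $[L,M]_{rs}=\sum_{t}(L_{rt}M_{ts}-M_{rt}L_{ts})$ directly from the definitions in \eqref{laxPairMoserClassical}, then matching the two.

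For the left-hand side, the relation $\dot q_{i}=p_{i}$ from \eqref{hamiltonClassical} gives, for $r\neq s$, $\dot L_{rs}=ik\,\frac{d}{dt}(q_{rs}^{-1})=-ik\,(p_{r}-p_{s})/q_{rs}^{2}$, while on the diagonal $\dot L_{rr}=\dot p_{r}=\sum_{t\neq r}2k^{2}/q_{rt}^{3}$. The diagonal entries of the commutator are the easy case: the term with intermediate index $t=r$ vanishes, and for each $t\neq r$ the pair $L_{rt}M_{tr}-M_{rt}L_{tr}$ contributes exactly $2k^{2}/q_{rt}^{3}$ once one uses $q_{tr}=-q_{rt}$. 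Thus $[L,M]_{rr}$ matches $\dot L_{rr}$ immediately, with no cross terms to cancel.

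The off-diagonal case is where the real work lies. Expanding $[L,M]_{rs}$ and separating the intermediate index into the cases $t=r$, $t=s$, and $t\neq r,s$, the contributions that are linear in the momenta combine to $-ik(p_{r}-p_{s})/q_{rs}^{2}$, reproducing $\dot L_{rs}$. The remaining purely positional terms all carry a factor $(ik)^{2}=-k^{2}$ and must cancel among themselves. After writing $a=q_{rt}$ and $b=q_{ts}$, so that $q_{rs}=a+b$, this reduces to checking, for each fixed $t\neq r,s$, the algebraic identity $\tfrac{1}{a+b}(a^{-2}-b^{-2})+(ab^{2})^{-1}-(a^{2}b)^{-1}=0$, which in turn rests on the partial-fraction relation $q_{rt}^{-1}q_{ts}^{-1}=q_{rs}^{-1}(q_{rt}^{-1}+q_{ts}^{-1})$.

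The main obstacle is the bookkeeping in this last step. One must correctly split each single sum $\sum_{u\neq s}q_{su}^{-2}$ and $\sum_{u\neq r}q_{ru}^{-2}$ into its $u=r$ (respectively $u=s$) term and its $u\neq r,s$ remainder, observe that the two resulting $q_{rs}^{-2}$ pieces cancel, and then pair the surviving triple-index sums $\sum_{t\neq r,s}q_{rt}^{-1}q_{ts}^{-2}$ and $\sum_{t\neq r,s}q_{rt}^{-2}q_{ts}^{-1}$ term-by-term against the $q_{rs}^{-1}(q_{rt}^{-2}-q_{st}^{-2})$ contribution, using $q_{st}^{2}=q_{ts}^{2}$. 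Once this cancellation is organized and the momentum-linear parts are matched, equality of both sides for all $r,s$ establishes that \eqref{laxPairMoserClassical} is a Lax pair for \eqref{CM}.
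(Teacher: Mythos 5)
Your proposal is correct, and it takes a genuinely different route from the paper. The paper verifies the theorem only for $n=3$: it writes out the full $3\times 3$ matrices $L$, $M$, $\dot L$, computes all nine entries of $LM$ and $ML$ explicitly (with Mathematica support), subtracts, and then matches the result against Hamilton's equations, deferring the general case to the original reference. You instead argue for arbitrary $n$ by computing $[L,M]_{rs}=\sum_t(L_{rt}M_{ts}-M_{rt}L_{ts})$ entrywise, splitting the intermediate index into $t=r$, $t=s$, and $t\neq r,s$; I checked the key steps and they hold: the diagonal pairs each contribute $2k^2/q_{rt}^3$ via $q_{tr}=-q_{rt}$, the two $q_{rs}^{-2}$ pieces from the diagonal entries of $M$ cancel, and the residual positional terms vanish by the identity $\tfrac{1}{a+b}(a^{-2}-b^{-2})+(ab^2)^{-1}-(a^2b)^{-1}=0$ with $a=q_{rt}$, $b=q_{ts}$, $a+b=q_{rs}$, which is exactly the partial-fraction relation you cite. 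What your approach buys is generality and a clear identification of the single algebraic fact driving the cancellation; what the paper's approach buys is a completely explicit, symbolically verified record of every matrix entry in the case it actually treats. The one place to be slightly more careful in your write-up is the claim that the purely positional terms ``all carry a factor $(ik)^2=-k^2$'': the cross terms with $t\neq r,s$ carry $(ik)(-ik)=+k^2$, so the signs differ between the contributions, and getting them right is precisely what makes your three-term identity come out to zero rather than to $2/(a^2b)$.
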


\begin{proof}
We explicitly compute the case $n=3$. Using the proposed formula, the Lax matrices $L$ and $M$ are given by
\begin{align*}
L = \begin{bmatrix}
  p_{1} & ik q_{12}^{(1)} & ik q_{13}^{(1)} \\
  ik q_{21}^{(1)} &  p_{2} & ik q_{23}^{(1)} \\
  ik q_{31}^{(1)} & ik q_{32}^{(1)} &  p_{3}
 \end{bmatrix}, \qquad \text{ and } \qquad 
M= \begin{bmatrix}
 ik( q_{12}^{(2)}+ q_{13}^{(2)}) & - ik q_{12}^{(2)} & -ik q_{13}^{(2)} \\
  -ik q_{21}^{(2)} &   ik( q_{21}^{(2)}+ q_{23}^{(2)}) & -ik q_{23}^{(2)} \\
  -ik q_{31}^{(2)} & -ik q_{32}^{(2)} &   ik( q_{31}^{(2)}+ q_{32}^{(2)})
 \end{bmatrix}.
\end{align*}
Next we compute the total time derivative of the matrix $L$ to obtain the LHS of the Lax equation \eqref{laxEqn} to get
\begin{align}\label{classicalLaxDot}
\dot{L} =  \begin{bmatrix}
  \dot{p}_{1} & -ik(\dot{q}_{1}-\dot{q}_{2})  q_{12}^{(2)} & -ik(\dot{q}_{1}-\dot{q}_{3})  q_{13}^{(2)} \\
-ik(\dot{q}_{2}-\dot{q}_{1})  q_{21}^{(2)} & \dot{p}_{2} & -ik(\dot{q}_{2}-\dot{q}_{3})  q_{23}^{(2)} \\
-ik(\dot{q}_{3}-\dot{q}_{1})  q_{31}^{(2)}  & -ik(\dot{q}_{3}-\dot{q}_{2})  q_{32}^{(2)} &  \dot{p}_{3}
 \end{bmatrix}.
\end{align}
Now we need to compute the commutator $[L,M] = LM - ML$. To ease notation a bit, we shall denote terms of the form $(q_{rs}^{-w}+q_{rt}^{-w})$ by $(q_{rsrt}^{-w})$, where $r \to s \to t$ denote permutations of the indices in the respective order. Thus, we  calculate the product $LM$ to be
\begin{align*}
&LM = \\
&k^{2}\begin{bmatrix*}[l]
\dfrac{i}{k} p_{1}( q_{1213}^{(2)}) +  q_{1213}^{(3)} & -\dfrac{i}{k} p_{1}  q_{12}^{(2)} -  q_{12}^{(1)} ( q_{2123}^{(2)}) +  q_{13}^{(1)}  q_{32}^{(2)} &  -\dfrac{i}{k}p_{1} q_{13}^{(2)} +  q_{12}^{(1)} q_{23}^{(2)} -  q_{13}^{(1)}( q_{3132}^{(2)}) \\ && \\
- q_{21}^{(1)}( q_{1213}^{(2)})-\dfrac{i}{k}p_{2} q_{21}^{(2)}+ q_{23}^{(1)} q_{31}^{(2)} & q_{21}^{(3)}+\dfrac{i}{k}p_{2}( q_{2123}^{(2)})+ q_{23}^{(3)} &  q_{21}^{(1)} q_{13}^{(2)} - \dfrac{i}{k}p_{2} q_{23}^{(2)}- q_{23}^{(1)} ( q_{3132}^{(2)}) \\ && \\
- q_{31}^{(1)} ( q_{1213}^{(2)}) +  q_{32}^{(1)} q_{21}^{(2)}-\dfrac{i}{k}p_{3} q_{31}^{(2)} &  q_{31}^{(1)} q_{12}^{(2)}- q_{32}^{(1)} ( q_{2123}^{(2)}) -\dfrac{i}{k}p_{3} q_{32}^{(2)} &  q_{3132}^{(3)} + \dfrac{i}{k}p_{3}( q_{3132}^{(2)})
\end{bmatrix*},
\end{align*}
where we have used the fact that $i^{2}=-1$ and $ q_{rs}^{(1)} q_{sr}^{(2)} =  q_{rs}^{(3)}$  to slightly simplify the expressions. Now we calculate the second term of the commutator, that is, $ML$ as follows:
\begin{align*}
&ML= \\
& k^{2}\begin{bmatrix*}[l]
\dfrac{i}{k} ( q_{1213}^{(2)})p_{1}+q_{2131}^{-3} & -( q_{1213}^{(2)}) q_{12}^{(1)}-\dfrac{i}{k} q_{12}^{(2)}p_{2}+ q_{13}^{(2)} q_{32}^{(1)} & - ( q_{1213}^{(2)}) q_{13}^{(1)}+ q_{12}^{(2)} q_{23}^{(1)}-\dfrac{i}{k} q_{13}^{(2)}p_{3} \\ & & \\
-\dfrac{i}{k} q_{21}^{(2)}p_{1}-( q_{2123}^{(2)})q_{21}+ q_{23}^{(2)} q_{31}^{(1)} &  q_{12}^{(3)}+\dfrac{i}{k}( q_{2123}^{(2)})p_{2}+ q_{32}^{(3)} &  q_{21}^{(2)} q_{13}^{(1)}-( q_{2123}^{(2)}) q_{23}^{(1)}-\dfrac{i}{k} q_{23}^{(2)}p_{3} \\ &&\\
-\dfrac{i}{k} q_{31}^{(2)}p_{1}+ q_{32}^{(2)} q_{21}^{(1)}-( q_{3132}^{(2)})  q_{31}^{(1)} &  q_{31}^{(2)} q_{12}^{(1)}-\dfrac{i}{k} q_{32}^{(2)}p_{2}-( q_{3132}^{(2)}) q_{32}^{(1)} & q_{1323}^{-3}+\dfrac{i}{k}( q_{3132}^{(2)})p_{3}
\end{bmatrix*},
\end{align*}
where again we have used $i^{2}=-1$ and $ q_{rs}^{(1)} q_{sr}^{(2)} =  q_{rs}^{(3)}$. Next we subtract the two expressions to obtain the commutator of $L$ and $M$, that is
\begin{align}\label{classicalLaxCommutator}
LM-ML = \begin{bmatrix*}[l]
2k^{2} q_{1213}^{(3)} & -ik(p_{1}-p_{2}) q_{12}^{(2)} & -ik(p_{1}-p_{3}) q_{13}^{(2)} \\ && \\
 -ik(p_{2}-p_{1}) q_{21}^{(2)} & 2k^{2} q_{2123}^{(3)} &  -ik(p_{2}-p_{3}) q_{23}^{(2)} \\ && \\
 -ik(p_{3}-p_{1}) q_{31}^{(2)} &  -ik(p_{3}-p_{2}) q_{32}^{(2)} & 2k^{2} q_{3132}^{(3)}
\end{bmatrix*},
\end{align}
where in the $r^{\text{th}}$ diagonal entry we use the following identities
\begin{align*}
 q_{rs}^{(3)} -  q_{sr}^{(3)} =  q_{rs}^{(3)} +  q_{rs}^{(3)} \qquad \text{ and } \qquad ip_{r} ( q_{rsrt}^{(2)}) - i ( q_{rsrt}^{(2)}) p_{r}= 0, \quad r \neq s,t.
\end{align*}
Similarly, in the $rs^{\text{th}}$ off-diagonal position we utilized the relations 
\begin{align*}
- q_{rs}^{(1)}( q_{srst}^{(2)}) - (-( q_{srst}^{(2)}) q_{rs}^{(1)}) = 0 \qquad \text{ and } \qquad  q_{rs}^{(1)}q_{st}^{-1} -  q_{rs}^{(1)}q_{st}^{-1} = 0.
\end{align*}
Comparing Equations \eqref{classicalLaxDot} and \eqref{classicalLaxCommutator}, we get
\begin{align}
\begin{aligned}\label{classicalLaxResult}
 \dot{L}_{rr} &= \dot{p}_{r}= 2 q_{rs}^{(2)}+2 q_{rt}^{(2)}  = \sum_{\substack{u=1 \\ u \neq r }}^{3} 2k^{2}(q_{ru}^{(3)}) = [L,M]_{rr}, \\ 
\dot{L}_{rs} &= -ik (\dot{q}_{r}-\dot{q}_{s}) q_{rs}^{(2)} = -ik(p_{r}-p_{s}) q_{rs}^{(2)} = [L,M]_{rs} \quad \forall r \neq s, 
\end{aligned}
\end{align}
for the diagonal and off-diagonal elements respectively. Next we observe that Equations \eqref{classicalLaxResult} recover Hamilton's equations for the rational Calogero-Moser system \eqref{hamiltonClassical}, which confirms that the proposed Lax pair \eqref{laxPairMoserClassical} indeed satisfies the Lax equation \eqref{laxEqn}. This also holds for the general $n>3$ case, see \cite{MoserPaper} for the original claim. 
\end{proof}
We can now verify via direct computations the integrability of the rational Calogero-Moser system.
\begin{corollary}
The $n=3$ rational Calogero-Moser system is integrable in the sense of Definition \ref{liouvilleintegrabilitiy}.
\end{corollary}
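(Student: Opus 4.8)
The plan is to produce, for $n=3$, three functionally independent smooth functions in pairwise involution, as Definition \ref{liouvilleintegrabilitiy} demands. The candidates are handed to us by the previous theorem: put $I_1 = \tr L$, $I_2 = \tfrac12\tr(L^2)$ and $I_3 = \tfrac13\tr(L^3)$, with $L$ the Lax matrix of \eqref{laxPairMoserClassical}. By the isospectral property each $I_k$ is conserved, that is $\{H,I_k\}=0$. First I would evaluate the three traces. The diagonal of $L$ supplies the momentum contributions, while for $r\neq s$ one has $L_{rs}L_{sr} = (ik)^2/(q_{rs}q_{sr}) = k^2\,q_{rs}^{(2)}$ since $q_{sr}=-q_{rs}$. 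Hence $I_1 = p_1+p_2+p_3$, $I_2 = \tfrac12\sum_r p_r^2 + k^2\sum_{r<s} q_{rs}^{(2)}$, which coincides with the Hamiltonian \eqref{CM} in the normalization of \eqref{hamiltonClassical}, and $I_3 = \tfrac13\sum_r p_r^3 + (\text{terms of lower degree in }p)$ is the genuinely new third invariant.

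For functional independence I would inspect the top-degree-in-momenta parts, which are the power sums $\sum_r p_r$, $\tfrac12\sum_r p_r^2$, $\tfrac13\sum_r p_r^3$. The associated momentum Jacobian $\partial(I_1,I_2,I_3)/\partial(p_1,p_2,p_3)$ has rows $(1,1,1)$, $(p_1,p_2,p_3)$, $(p_1^2,p_2^2,p_3^2)$ to leading order, whose determinant is the Vandermonde $\prod_{r<s}(p_s-p_r)$. As this is a nonzero polynomial, the $3\times3$ minor is nonsingular on a dense open set, so $dI_1,dI_2,dI_3$ are linearly independent there; this furnishes the three independent integrals required for $n=3$.

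The heart of the matter is the involution $\{I_i,I_j\}=0$, which I would settle by two structural observations rather than by brute force. Since $I_2=H$, the brackets $\{I_2,I_1\}$ and $\{I_2,I_3\}$ coincide with $\{H,I_1\}$ and $\{H,I_3\}$, both of which vanish because $I_1$ and $I_3$ are integrals of motion by the theorem. For the remaining pair, note that $I_1=p_1+p_2+p_3$ is the total momentum and generates the simultaneous translation $q_r\mapsto q_r+c$; because every entry of $L$, and therefore $I_3$, depends on positions only through the differences $q_{rs}=q_r-q_s$, the bracket $\{I_1,I_3\}=\sum_j \partial I_3/\partial q_j$ vanishes by translation invariance. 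The same remark kills $\{I_1,I_2\}$, so all three pairs are accounted for.

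The main obstacle is $\{I_2,I_3\}=\{H,I_3\}$: taken head-on this is an unwieldy cubic-in-momenta computation, but it is already guaranteed to vanish by the isospectral theorem (conservation of $\tfrac13\tr L^3$), so no direct expansion is logically needed. In keeping with the computational spirit of this paper, I would nonetheless confirm $\{I_1,I_3\}=\{I_2,I_3\}=0$ symbolically using the \emph{Mathematica} routine of Appendix \ref{appRational}. Having exhibited three functionally independent integrals $I_1$, $I_2=H$, $I_3$ in pairwise involution, integrability in the sense of Definition \ref{liouvilleintegrabilitiy} follows.
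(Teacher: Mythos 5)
Your proposal is correct, and it reaches the involution claims by a genuinely different route than the paper. The identification of the integrals is the same: you read $I_1,I_2,I_3$ off the power traces of $L$, recover total momentum and the Hamiltonian, and accept the \emph{Mathematica} output for $I_3$. Where you diverge is in how the brackets are killed. The paper computes all three Poisson brackets by hand: $\{I_1,H\}$ and $\{I_1,I_3\}$ by writing out every term of $\sum_j \partial(\cdot)/\partial q_j$ and cancelling pairs, and then $\{H,I_3\}$ by a (rather terse) appeal to the two previous computations. You instead use two structural facts: brackets with $I_2=H$ vanish because the Lax theorem already makes every $\tfrac1k\tr L^k$ a conserved quantity, and the bracket $\{I_1,I_3\}=\sum_j\partial I_3/\partial q_j$ vanishes because $L$, hence $I_3$, depends on positions only through differences $q_r-q_s$, i.e.\ by translation invariance. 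Both observations are sound given the preceding theorems of the paper, and the translation-invariance argument is exactly what the paper's term-by-term cancellation is secretly doing. Your route is cleaner and, incidentally, avoids the slightly shaky final step in the paper, where $\{H,I_3\}=\sum_j\bigl(p_j\,\partial I_3/\partial q_j - p_j^2\,\partial H/\partial q_j\bigr)$ is declared zero "using" the earlier unweighted sums even though the momentum weights change the computation. You also supply a functional-independence argument via the Vandermonde determinant of the leading momentum parts, which Definition \ref{liouvilleintegrabilitiy} requires but the paper passes over in silence; that is a genuine improvement. The only cost of your approach is that it leans on the general Lax/isospectrality theorem rather than exhibiting the cancellations explicitly, which runs somewhat against the paper's stated aim of verifying everything by direct computation; your closing remark that the brackets should still be confirmed symbolically addresses this adequately.
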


\begin{proof}
First we need to identify the three integrals of motion. We use Equation \eqref{integralsLax} to read them from the power traces of the Lax matrix $L$. The first integral of motion is given by
\begin{align*}
I_{1} = \sum_{r=1}^{3} L = p_{1}+p_{2}+p_{3} = \sum_{r=1}^{3} p_{r},
\end{align*}
which corresponds to conservation of momentum. In order to compute the second one, we need to evaluate $L^{2}$, yielding
\begin{align*}
&L^{2} = LL =  \begin{bmatrix}
  p_{1} & ik q_{12}^{(1)} & ik q_{13}^{(1)} \\
  ik q_{21}^{(1)} &  p_{2} & ik q_{23}^{(1)} \\
  ik q_{31}^{(1)} & ik q_{32}^{(1)} &  p_{3}
 \end{bmatrix}  \begin{bmatrix}
  p_{1} & ik q_{12}^{(1)} & ik q_{13}^{(1)} \\
  ik q_{21}^{(1)} &  p_{2} & ik q_{23}^{(1)} \\
  ik q_{31}^{(1)} & ik q_{32}^{(1)} &  p_{3}
 \end{bmatrix} \\ & \\
&= 
\begin{bmatrix*}
p_{1}^{2}+k^{2} q_{1213}^{(2)} & ikp_{1} q_{12}^{(1)}+ik q_{12}^{(1)}p_{2}-k^{2} q_{13}^{(1)} q_{32}^{(1)} & ikp_{1} q_{13}^{(1)}-k^{2} q_{12}^{(1)} q_{23}^{(1)}+ik q_{13}^{(1)}p_{3} \\ && \\
ik q_{21}^{(1)}p_{1}+ikp_{2} q_{21}^{(1)} - k^{2} q_{23}^{(1)} q_{31}^{(1)} & k^{2} q_{21}^{(2)}+p_{2}^{2}+k^{2} q_{23}^{(2)} & -k^{2} q_{21}^{(1)} q_{13}^{(1)}+ikp_{2} q_{23}^{(1)}+ik q_{23}^{(1)}p_{3} \\ && \\
ik q_{31}^{(1)}p_{1}-k^{2} q_{32}^{(1)} q_{21}^{(1)}+ikp_{3} q_{31}^{(1)} & -k^{2} q_{31}^{(1)} q_{12}^{(1)}+ik q_{32}^{(1)}p_{2}+ikp_{3} q_{32}^{(1)} & k^{2} q_{3132}^{(2)}+p_{3}^{2}
\end{bmatrix*},
\end{align*}
where we have used that
\begin{align*}
 q_{rs}^{(1)} q_{sr}^{(1)} = -  q_{rs}^{(2)}
\end{align*} 
to simplify the expressions in the diagonal entries. Thus, using Equation \eqref{integralsLax} we obtain
\begin{align*}
I_{2} &= \dfrac{1}{2} \sum_{r=1}^{3} L^{2} = \dfrac{1}{2} \left(   p_{1}^{2}+ k^{2} q_{12}^{(2)}+ k^{2} { q_{13}^{(2)}} + k^{2}{  q_{21}^{(2)}}+p_{2}^{2}+ k^{2}{ q_{23}^{(2)}} +   k^{2} { q_{31}^{(2)}}+ k^{2}{ q_{32}^{(2)}}+p_{3}^{2} \right) \\
&= \dfrac{1}{2} (p_{1}^{2}+p_{2}^{2}+p_{3}^{2}) + \dfrac{1}{2} ( 2 k^{2}  q_{12}^{(2)} + 2 k^{2}  q_{23}^{(2)} + 2 k^{2}  q_{13}^{(2)} ) \\
&= \dfrac{1}{2} \sum_{r=1}^{3} p_{r}^{2} + \sum_{\substack{s =1 \\ s \neq r }}^{3} k^{2}  q_{rs}^{(2)} = H,
\end{align*}
which coincides with the Hamiltonian \eqref{CM} of the rational Calogero-Moser when $n=3$. Thus, what we have shown is that energy is a conserved quantity. Finally, in order to compute the last integral of motion $I_{3}$ we need an expression for $L^{3}$, which is calculated via Wolfram Mathematica using the code in Appendix \ref{appRational}. The entries of $L^{3}$ are then
\begin{align*}
L_{11}^{3} &=  p_1^3+2 k^2 \left(  q_{31}^{(2)}+ q_{21}^{(2)}\right) p_1+k^2 \left(p_2  q_{21}^{(2)}+p_3  q_{31}^{(2)}\right), \\
L_{12}^{3} &=  \frac{k \left(k I_{1} q_{31} q_{32} q_{21}^3-i \left(\left(k^2+\left(p_1^2+p_2 p_1+p_2^2\right)
   q_{32}^2\right) q_{31}^2+k^2 q_{32}^2\right) q_{21}^2-i k^2 q_{31}^2 q_{32}^2\right)}{q_{21}^3 q_{31}^2 q_{32}^2} ,\\ 
L_{13}^{3} &=  \frac{k \left(-k I_{1}q_{21} q_{32} q_{31}^3-i k^2 q_{32}^2 q_{31}^2-i q_{21}^2 \left(\left(k^2+\left(p_1^2+p_3
   p_1+p_3^2\right) q_{32}^2\right) q_{31}^2+k^2 q_{32}^2\right)\right)}{q_{21}^2 q_{31}^3 q_{32}^2} ,\\
L_{21}^{3} &= 
 \frac{k \left(k I_{1}q_{31} q_{32} q_{21}^3+i \left(\left(k^2+\left(p_1^2+p_2 p_1+p_2^2\right) q_{32}^2\right) q_{31}^2+k^2 q_{32}^2\right) q_{21}^2+i k^2 q_{31}^2 q_{32}^2\right)}{q_{21}^3 q_{31}^2 q_{32}^2}, \\
L_{22}^{3} &=  p_2^3+2 k^2
   \left( q_{32}^{(2)}+  q_{21}^{(2)}\right) p_2+k^2 \left( p_1  q_{21}^{(2)}+ p_3  q_{32}^{(2)}\right), \\
L_{23}^{3} &= \frac{k \left(k I_{1}q_{21} q_{31} q_{32}^3-i k^2 q_{31}^2 q_{32}^2-i q_{21}^2 \left(\left(k^2+\left(p_2^2+p_3
   p_2+p_3^2\right) q_{32}^2\right) q_{31}^2+k^2 q_{32}^2\right)\right)}{q_{21}^2 q_{31}^2 q_{32}^3}, \\
L_{31}^{3} &=  \frac{k \left(-k I_{1}q_{21} q_{32} q_{31}^3+i k^2 q_{32}^2 q_{31}^2+i q_{21}^2 \left(\left(k^2+\left(p_1^2+p_3 p_1+p_3^2\right) q_{32}^2\right) q_{31}^2+k^2 q_{32}^2\right)\right)}{q_{21}^2 q_{31}^3 q_{32}^2}, \\
L_{32}^{3} &=  \frac{k \left(k
   I_{1}q_{21} q_{31} q_{32}^3+i k^2 q_{31}^2 q_{32}^2+i q_{21}^2 \left(\left(k^2+\left(p_2^2+p_3 p_2+p_3^2\right) q_{32}^2\right) q_{31}^2+k^2 q_{32}^2\right)\right)}{q_{21}^2 q_{31}^2 q_{32}^3}, \\
L_{33}^{3} &=  p_3^3+2 k^2
   \left( q_{32}^{(2)}+ q_{31}^{(2)}\right) p_3+k^2 \left( p_1  q_{31}^{(2)}+ p_2  q_{32}^{(2)}\right) ,\\
\end{align*}
where $I_{1} = p_{1}+p_{2}+p_{3}$. Then using Equation \eqref{integralsLax} we get
\begin{align*}
I_{3} = \dfrac{1}{3} \tr L^{3} &=\dfrac{1}{3} \left(p_1^3+2 k^2 \left( q_{31}^{(2)}+ q_{21}^{(2)}\right) p_1+k^2 \left(p_2q_{21}^{(2)}+ p_3 q_{31}^{(2)}\right) +   p_2^3+2 k^2 \left( q_{32}^{(2)}+ q_{21}^{(2)}\right) p_2+ \right. \\
& \left. \qquad k^2 \left( p_1 q_{21}^{(2)}+p_3 q_{32}^{(2)}\right) +  p_3^3+2 k^2  \left( q_{32}^{(2)}+ q_{31}^{(2)}\right) p_3+k^2 \left( p_1 q_{31}^{(2)}+ p_2 q_{32}^{(2)}\right) \right) \\
&= \dfrac{1}{3} \sum_{r=1}^{3} p_{r}^{3} + \dfrac{k^{2}}{3} \sum_{r=1}^{3} \left( 2 p_{r} \sum_{\substack{s=1 \\ s \neq r}}  q_{rs}^{(2)} + p_{r} \sum_{\substack{s=1 \\ s \neq r}}  q_{rs}^{(2)} \right) \\
&=  \dfrac{1}{3} \sum_{r=1}^{3} p_{r}^{3} + k^{2}  \sum_{r=1}^{3} p_{r}\sum_{\substack{s=1 \\ s \neq r}}  q_{rs}^{(2)} ,
\end{align*}
which is the third and last integral of motion of the $n=3$ rational Calogero-Moser. Next we need to verify the involutivity of the integrals of motion. Consider the Poisson bracket of $I_{1}$ and $I_{2} = H$. We get
\begin{align*}
\{ I_{1},H \} &= \sum_{j=1}^{3} \left(    \dfrac{\partial I_{1} }{\partial p_{j}} \dfrac{\partial H }{\partial q_{j}}  - \dfrac{\partial I_{1} }{\partial q_{j}}  \dfrac{\partial H}{\partial p_{j}}  \right) \\
&= \dfrac{\partial I_{1} }{\partial p_{1}} \dfrac{\partial H }{\partial q_{1}} -\cancel{ \dfrac{\partial I_{1} }{\partial q_{1}}  \dfrac{\partial H}{\partial p_{1}}} + \dfrac{\partial I_{1} }{\partial p_{2}} \dfrac{\partial H }{\partial q_{2}} - \cancel{ \dfrac{\partial I_{1} }{\partial q_{2}}  \dfrac{\partial H}{\partial p_{2}}} + \dfrac{\partial I_{1} }{\partial p_{3}} \dfrac{\partial H }{\partial q_{3}} - \cancel{ \dfrac{\partial I_{1} }{\partial q_{3}}  \dfrac{\partial H}{\partial p_{3}}}, 
\end{align*}
where the cancellation is due to the independence of $I_{1}$ on $q_{j}$, $\forall j=1,2,3$. Using the fact that $\dfrac{\partial I_{1}}{\partial p_{j}} = 1 $ for $j=1,2,3$, we obtain
\begin{align}\label{poissonI1H}
\{ I_{1},H \} &= \dfrac{\partial H }{\partial q_{1}} +  \dfrac{\partial H }{\partial q_{2}} + \dfrac{\partial H }{\partial q_{3}} =\cancel{ -2k^{2} q_{12}^{(3)}} - \bcancel{2k^{2} q_{13}^{(3)}} - \xcancel{2k^{2} q_{23}^{(3)}} +\cancel{ 2k^{2} q_{12}^{(3)}} + \bcancel{2k^{2} q_{13}^{(3)}} + \xcancel{2k^{2} q_{23}^{(3)}} = 0.
\end{align}
Next we verify the involution of $I_{1}$ and $I_{3}$ and again, using the independence of $I_{1}$ on any of the $q_{j}$, we simply write
\begin{align}\label{poissonI1I3}
\begin{aligned}
\{ I_{1}, I_{3} \} &=\dfrac{\partial I_{3}}{\partial q_{1}} +  \dfrac{\partial I_{3} }{\partial q_{2}} + \dfrac{\partial I_{3} }{\partial q_{3}} \\
&= 2k^{2} \left( p_{1} \left(  q_{31}^{(3)} + q_{21}^{(3)} \right) + p_{2} q_{21}^{(3)} + p_{3}  q_{31}^{(3)} - p_{2} (q_{21}^{(3)} +  q_{23}^{(3)} ) + p_{1}  q_{12}^{(3)} + \right. \\
& \left. \qquad \quad + p_{3}  q_{32}^{(3)} - p_{3}( q_{32}^{(3)} +  q_{31}^{(3)}) -p_{1}  q_{31}^{(3)} -p_{2} q_{32}^{(3)}  \right) \\
&= 0,
\end{aligned}
\end{align}
where we have used that $p_{r}  q_{rs}^{(3)} + p_{r}  q_{sr}^{(3)}=0$ and we have cancelled the terms with same indices and opposite signs. Finally, we compute the Poisson bracket of the Hamiltonian $H$ and the third integral of motion $I_{3}$, which yields
\begin{align*}
\{ H, I_{3} \} =   \sum_{j=1}^{3} \dfrac{\partial H }{\partial p_{j}} \dfrac{ \partial I_{3}}{\partial q_{j}} - \dfrac{\partial I_{3} }{\partial p_{j}} \dfrac{\partial H }{\partial q_{j}},
 \end{align*}
where we note that $\dfrac{\partial H}{\partial p_{j}} = p_{j}$ and $\dfrac{\partial I_{3}}{\partial p_{j}} = p_{j}^{2}$. Hence, we get
\begin{align*}
\{ H, I_{3} \} = \sum_{j=1}^{3} p_{j} \dfrac{\partial I_{3}}{\partial q_{j}} - p_{j}^{2} \dfrac{\partial H }{\partial q_{j}} = 0
\end{align*}
using the results in \eqref{poissonI1H} and \eqref{poissonI1I3}. Hence, all three integrals of motion are in involution, and the rational Calogero-Moser system for $n=3$  is integrable in the sense of Definition \ref{liouvilleintegrabilitiy}.
\end{proof}

\section{Quantum Mechanics. Quantum Rational Calogero-Moser.}\label{sec:quantum}

\indent \par \emph{Canonical quantization} is a procedure initially introduced in Paul Dirac's PhD thesis \cite{dirac1} and further developed in \cite{dirac2}. The aim of the technique is to quantize a classical theory, while preserving formal structures and properties, such as symmetries and integrability. The Poisson brackets formulation of the classical theory \ref{poissonBracket} are generally only partially preserved in the canonical quantization \cite{quantizationproblem}. However, this is not the case in the Calogero-Moser system as we shall now see.
\par The quantization of the classical mechanical system \eqref{hamilton} is obtained by replacing the canonical momentum $p_{j}$ in the Hamiltonian $H$ with the \emph{quantum momentum operator $\widehat{p}_{j}$} by
\begin{equation}\label{quantumMomentum}
p_{j} \mapsto -i \hbar \dfrac{\partial}{\partial q_{j}} \equiv \widehat{p_{j}}, \quad j = 1, \dots, n,
\end{equation}
where $i = \sqrt{-1}$ and $\hbar$ is Planck's constant (set $\hbar=1$). In this way we obtain a new quantum Hamiltonian, which we shall denote by $\widehat{H}$. Hence, the state of a quantum mechanical system is described via the set $\{ \widehat{q}, \widehat{p} \}$, where $\widehat{q}=(\widehat{q}_{1}, \dots, \widehat{q}_{n})$ and $\widehat{p}=(\widehat{p}_{1}, \dots, \widehat{p}_{n})$ are the quantum coordinates and momenta operators, respectively. The quantum variables satisfy the following relations,
\begin{align}\label{heisenbergCommutation}
[\widehat{q}_{j},\widehat{p}_{k} ] = i \delta_{jk}, \qquad [\widehat{q}_{j}, \widehat{q}_{k}] = [\hat{p}_{j} , \hat{p}_{k} ] =0, \qquad \text{ where } j,k=1, \dots, n,
\end{align}
which are known as the \emph{Heisenberg commutation relations}. The vector space over $\mathbb{C}$ generated by $\widehat{q}$ and $\widehat{p}$ with Lie brackets given by Equation \eqref{heisenbergCommutation} is the \emph{Heisenberg Lie algebra} \cite{Strocchi}. The time evolution equations for $\widehat{q}$ and $\widehat{p}$ are given by the \emph{Heisenberg equations of motion} 
	\begin{equation}\label{heisenbergEqns}
	\dot{\widehat{q}_{j}} = i [\widehat{H}, \widehat{q}_{j}] \qquad \text{ and } \qquad \dot{\widehat{p}}_{j} = i[\widehat{H},\widehat{p}_{j}],
	\end{equation} 
	in analogy with Hamilton's Equations \eqref{hamilton}.
\par The quantum analogues of the classical observables are operators. An operator $F$ satisfies Heisenberg's equation 
 	\begin{equation*}
	\dot{F} = i [ \widehat{H}, F],
	\end{equation*} 
in analogy with Equations \eqref{poissonBracket} \cite{Strocchi}.
\subsection{Quantum Integrals of Motion and Quantum Integrability.}
Given an operator $J $, such that $[ \widehat{H} , J ] =0$, $J$ is called a \emph{quantum integral of motion}.
\begin{definition}
If an $n$-dimensional quantum dynamical system has $n$ algebraically independent operators $J_{1} = \widehat{H}, \dots, J_{n}$ such that they are pairwisely commutative, i.e. $[ J_{i}, J_{j} ]=0$ for $i \neq j$, then the system is said to be \emph{quantum integrable} \cite{Rujsenaars} .

\end{definition}
\par It is of great importance to note that upon quantization, we obtain operators which \emph{do not} necessarily commute. Thus, we need to carefully consider the order of multiplication, when doing computations in the quantum setting. The following identity turns out to be quite useful and clearly illustrates the lack of commutativity.
\begin{lemma}\label{keyLemma}
The commutator of the quantum momentum operator $\hat{p}_{i}$ and any function $f(\widehat{q}_{i})$ is the total derivative of the function with respect to the $\widehat{q}_{i}$ up to a scale, i.e.
\begin{align}\label{eqkeyLemma}
[ \widehat{p}_{i}, f(\widehat{q}_{i}) ] = -if'(\widehat{q}_{i}).
\end{align}
\end{lemma}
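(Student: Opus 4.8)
The plan is to verify the identity by letting both sides act on an arbitrary smooth test function $\psi = \psi(\widehat{q}_i)$ and then invoking the fact that an operator equality holds precisely when the two operators agree on all such $\psi$. Recall from the quantization prescription \eqref{quantumMomentum} that, with $\hbar = 1$, the momentum operator is $\widehat{p}_i = -i\,\partial/\partial \widehat{q}_i$, while the position operator $\widehat{q}_i$ acts as multiplication by the coordinate; consequently $f(\widehat{q}_i)$ is simply the operator of multiplication by the function $f$. The only structural input beyond this is the ordinary Leibniz rule for the derivative $\partial/\partial\widehat{q}_i$.

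First I would expand the commutator applied to $\psi$:
\begin{equation*}
[\widehat{p}_i, f(\widehat{q}_i)]\,\psi = \widehat{p}_i\big(f(\widehat{q}_i)\psi\big) - f(\widehat{q}_i)\,\widehat{p}_i\psi.
\end{equation*}
In the first term I apply $\widehat{p}_i = -i\,\partial/\partial\widehat{q}_i$ to the product $f\psi$ and use the product rule, obtaining $-i\big(f'(\widehat{q}_i)\psi + f(\widehat{q}_i)\psi'\big)$. In the second term the multiplication operator $f(\widehat{q}_i)$ passes through unchanged, giving $-i\,f(\widehat{q}_i)\psi'$. Subtracting, the two terms containing $\psi'$ cancel, and what remains is $-i\,f'(\widehat{q}_i)\psi$. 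Since $\psi$ was an arbitrary smooth function, the factor $\psi$ may be stripped off to yield the operator identity \eqref{eqkeyLemma}.

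I would note that the only conceptual point — and the one place a reader can go wrong — is keeping track of operator ordering: $f(\widehat{q}_i)$ is a multiplication operator and therefore commutes with $\psi$ but not with $\widehat{p}_i$, and the non-commutativity is exactly the $f'$ term produced by differentiating $f$ rather than $\psi$. This is the operator-theoretic shadow of the Heisenberg relations \eqref{heisenbergCommutation}; indeed an alternative route would be to check \eqref{eqkeyLemma} first for $f$ a monomial in $\widehat{q}_i$ using $[\widehat{q}_i,\widehat{p}_i]$ repeatedly and then extend to general (say analytic) $f$ by linearity and a formal-power-series argument, but the direct test-function computation is cleaner and requires no such extension. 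There is no genuine obstacle here; the main thing to be careful about is that the claimed equality is an identity of \emph{operators}, which is precisely why applying both sides to an arbitrary $\psi$ and only then cancelling it is the correct framing.
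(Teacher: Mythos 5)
Your proposal is correct and follows essentially the same route as the paper: expand the commutator using $\widehat{p}_i = -i\,\partial_{\widehat{q}_i}$, apply the Leibniz rule, and cancel the terms in which the derivative passes through $f$. The only difference is presentational — you make the computation explicit on a test function $\psi$ before stripping it off, whereas the paper carries out the identical cancellation directly at the level of operators.
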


\begin{proof}
Writing out the definition of the quantum momentum operator explicitly and substituting into the commutator relation, we get
\begin{align*}
[ \widehat{p}_{i}, f(\widehat{q}_{i}) ]  &= -i \partial_{\widehat{q}_{i}} f(\widehat{q}_{i}) - f(\widehat{q}_{i})\left( -i \partial_{\widehat{q}_{i}} \right) =\cancel{i f(\widehat{q}_{i}) \partial_{\widehat{q}_{i}}} - \cancel{if(\widehat{q}_{i}) \partial_{\widehat{q}_{i}}} -i f'(\widehat{q}_{i})    =- i f'(\widehat{q}_{i}),
\end{align*}
where prime denotes derivative with respect to $\widehat{q}_{i}$.
\end{proof}
\par Thus, when conjugating by a first order partial differential operator, we think of this operation as computing the ordinary derivative of the function at hand. Next we present the quantum analogue of the Lax method, which is used in a similar way to extract the integrals of motion.

\subsection{Integrability via the Lax Formalism.}\label{quantumLaxSection}
\indent
\par The quantum analogue of the Lax pair method consists of finding a pair of operator-valued matrices $\widehat{L}$ and $\widehat{M}$ such that the quantum Lax equation
	\begin{equation}\label{quantumLax1}
	\dot{\widehat{L}}= i \left[\widehat{L} , \widehat{M}  \right],
	\end{equation} 
is equivalent to the Heisenberg equation of motion
	\begin{equation}\label{quantumLax2}
	\dot{\widehat{L}} =i \left[\tilde{H} ,\widehat{L}  \right],
	\end{equation} 
	where $\tilde{H} = \widehat{H} \mathbb{I}$, $\mathbb{I}$ is the identity matrix \cite{Jurco}.
	Then under some extra assumptions, which we shall discuss later on, 
the set of integrals of motion is given by summing the elements of the powers of $\widehat{L}$, i.e.
	\begin{equation}\label{integralsQuantumLax}
	J_{m} = \dfrac{1}{m} \sum_{j,k \neq 1}^{n} \left( \widehat{L}^{m} \right)_{jk}.
	\end{equation}
\par In contrast with the classical setting, where the Poisson bracket of any two integrals of motion is well defined, in the quantum setting this is not always the case when we compute the commutator of two operators \cite{quantizationproblem}. The proof we provide here is outlined in great detail in \cite{OlshanetskyPerelomov}.
\begin{proposition}
The quantum integrals of motion given by \eqref{integralsQuantumLax} commute with the Hamiltonian of the system, i.e. $[\widehat{J}_{k} , \widehat{H} ] = 0$, for all $k=1,\dots, n$.
\end{proposition}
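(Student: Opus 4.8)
The plan is to reduce everything to two structural facts: the equivalence of the two quantum Lax equations, and a vanishing property of the row and column sums of $\widehat{M}$. First I would combine Equations \eqref{quantumLax1} and \eqref{quantumLax2}: since both express $\dot{\widehat{L}}$, the factors of $i$ cancel and I obtain the operator-matrix identity
\begin{equation*}
[\tilde{H}, \widehat{L}] = [\widehat{L}, \widehat{M}].
\end{equation*}
Because $\tilde{H} = \widehat{H}\,\mathbb{I}$ is a scalar matrix, its $(j,k)$ commutator entry with any operator-valued matrix $A$ is simply $[\widehat{H}, A_{jk}]$. This observation is the bridge that converts a matrix identity into a statement about the single operator $\widehat{H}$.

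Next I would promote this to the $m$-th power of $\widehat{L}$. The map $A \mapsto [\tilde{H}, A]$ is a derivation of the algebra of operator-valued matrices (the Leibniz rule for a commutator), so
\begin{equation*}
[\tilde{H}, \widehat{L}^{m}] = \sum_{a=0}^{m-1} \widehat{L}^{a}\,[\tilde{H}, \widehat{L}]\,\widehat{L}^{m-1-a} = \sum_{a=0}^{m-1} \widehat{L}^{a}\,[\widehat{L}, \widehat{M}]\,\widehat{L}^{m-1-a}.
\end{equation*}
Writing $[\widehat{L}, \widehat{M}] = \widehat{L}\widehat{M} - \widehat{M}\widehat{L}$ and setting $T_{a} = \widehat{L}^{a}\widehat{M}\widehat{L}^{m-a}$, each summand becomes $T_{a+1} - T_{a}$, and the sum telescopes to
\begin{equation*}
[\tilde{H}, \widehat{L}^{m}] = T_{m} - T_{0} = [\widehat{L}^{m}, \widehat{M}].
\end{equation*}

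Finally I would apply the functional that sums all matrix entries. On the left, using the scalar-matrix remark, $\sum_{j,k}[\tilde{H},\widehat{L}^{m}]_{jk} = \sum_{j,k}[\widehat{H}, (\widehat{L}^{m})_{jk}] = [\widehat{H}, \, mJ_{m}]$. On the right, expanding the matrix commutator and interchanging the order of summation produces two groups of terms: one carrying a factor $\sum_{k}\widehat{M}_{lk}$ (a row sum of $\widehat{M}$) and one carrying a factor $\sum_{j}\widehat{M}_{jl}$ (a column sum). The ``extra assumptions'' alluded to in Subsection \ref{quantumLaxSection} are precisely that both of these sums vanish for the Calogero-Moser $\widehat{M}$; granting them, the entire right-hand side is zero, whence $[\widehat{H}, J_{m}] = 0$ for each $m$.

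The main obstacle is not the telescoping, which is purely formal, but justifying that the row and column sums of $\widehat{M}$ vanish in the operator (rather than purely numerical) setting, and that the entry-wise manipulations remain legitimate when the entries of $\widehat{L}$ and $\widehat{M}$ are non-commuting differential operators. For the explicit $\widehat{M}$ the vanishing follows from the antisymmetry $q_{rs} = -q_{sr}$, so that $q_{rs}^{2} = q_{sr}^{2}$, exactly as in the diagonal and off-diagonal cancellations used to verify the classical pair \eqref{laxPairMoserClassical}. One must additionally confirm that $\widehat{M}$ contributes no non-commuting terms beyond those removed by these sums, which is where Lemma \ref{keyLemma} controlling $[\widehat{p}_{i}, f(\widehat{q}_{i})]$ becomes essential.
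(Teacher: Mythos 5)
Your argument is correct, but it is not the route the paper takes for this Proposition. Your proof is the Ujino--Wadati-style argument: equate the two Lax equations to get $[\tilde{H},\widehat{L}]=[\widehat{L},\widehat{M}]$, telescope the derivation $A\mapsto[\tilde{H},A]$ to obtain $[\tilde{H},\widehat{L}^{m}]=[\widehat{L}^{m},\widehat{M}]$, and kill the right-hand side by summing all entries and invoking the vanishing of the row and column sums of $\widehat{M}$. This is in fact exactly the computation the paper defers to the Remark following the quantum Calogero--Moser Lax pair theorem, where the sum-to-zero condition \eqref{sumToZero} is finally stated and $\bigl[\tilde{H},\sum_{j,k}\widehat{L}^{m}_{jk}\bigr]=\sum_{j,k}[\widehat{L}^{m},\widehat{M}]_{jk}=0$ is asserted. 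The paper's own proof of the Proposition instead follows Olshanetsky--Perelomov: it uses the recursion $iJ_{k-1}=\frac{1}{n-k+1}\bigl[\sum_{l}\widehat{q}_{l},J_{k}\bigr]$ together with the Jacobi identity to reduce everything to the top integral $J_{n}$, then analyzes the explicit two-particle structure $J_{n}=a_{12}(\widehat{p}_{1}\widehat{p}_{2}-f_{12}^{2})+b_{1}\widehat{p}_{1}+b_{2}\widehat{p}_{2}+c'_{12}$ and argues that the potentially ill-defined ordering terms are absent, so that $[J_{n},\widehat{H}]$ vanishes whenever the classical $\{I_{n},H\}$ does. What each buys: your version is shorter, purely algebraic, and yields $[\widehat{H},J_{m}]=0$ for every $m$ at once, but it hinges entirely on the sum-to-zero property of the specific $\widehat{M}$ (the ``extra assumptions'' the Proposition's general setting has not yet supplied), which you correctly flag and verify from $\widehat{q}_{rs}^{(2)}=\widehat{q}_{sr}^{(2)}$; your distributivity manipulations with non-commuting entries are legitimate since only left/right linearity of the operator product is used. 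The paper's version avoids assuming anything about $\widehat{M}$ but pays for it with the recursion machinery and an appeal to the classical involutivity, and in exchange sketches the stronger claim that the $J_{k}$ commute among themselves, not just with $\widehat{H}$.
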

\begin{proof}
The classical integrals $I_{k}$, which can be defined as the coefficients of the characteristic polynomial $\det [L - \lambda \mathbb{I} ]$, where $L$ is the Lax matrix and $\lambda$ the corresponding eigenvalue as in Equation \eqref{laxEqn}, correspond in the quantum case to the operators ${J}_{k}$, which can in turn be considered as the determinant $\det [ \widehat{L} - \widehat{\lambda} \mathbb{I} ]$ for the corresponding quantum analogues as illustrated in great detail in \cite{CalogeroMarcioro}. However, the vanishing of the Poisson bracket of any two classical integrals of motion $\{ I_{j}, I_{k} \}$ does not necessarily guarantee the vanishing of the commutator of two quantum integrals $[{J}_{j}, {J}_{k} ]$. It is sufficient to show that the $n$-th quantum integral of motion $J_{n}$ commutes with $\widehat{H}$, since we have the following recursive relation 
\begin{align}\label{integralsrecursive}
i {J}_{k-1} = \dfrac{1}{n-k+1} \left[ \left( \sum_{l=1}^{n} \widehat{q}_{l} \right) , {J}_{k} \right].
\end{align}
Using the Jacobi identity for the operators $\displaystyle{\sum_{l=1}^{n} \widehat{q}_{l}}$, $\widehat{H}$ and ${J}_{k}$ we have
\begin{align}\label{op41}
\left[ \left[ \sum_{l=1}^{n} \widehat{q}_{l} , \widehat{H} \right] , {J}_{k} \right] = \left[ \sum_{l=1}^{n} \widehat{q}_{l} , \left[ \widehat{H}, {J}_{k} \right] \right] - \left[ \widehat{H}, \left[ \sum_{l=1}^{n} \widehat{q}_{l}, {J}_{k} \right] \right],
\end{align}
where the last term clearly indicates that if ${J}_{k}$ is an integral of motion, then so is ${J}_{k-1}$. As shown in \cite{CalogeroMarcioro}, the $n$-th operator ${J}_{n}$ can be expressed as
\begin{align*}
{J}_{n} = A_{12} \left( \widehat{p}_{1} \widehat{p}_{2} - f_{12}^{2} \right) + B_{1} \widehat{p}_{1} + B_{2} \widehat{p}_{2} + B_{12} f_{12} + C_{12},
\end{align*}
where $f_{12} = f(\widehat{q}_{1},\widehat{q}_{2})$ accounts for the potential of the Hamiltonian, and the coefficients $A_{12}$, $B_{1}$, $B_{2}$ and $C_{12}$ are independent of $f_{12}$, $\widehat{p}_{1}$ and $\widehat{p}_{2}$, \cite{OP22}. Next we take the commutator of $\widehat{H}$ and ${J}_{k}$ yielding
\begin{align*}
\left[\widehat{H}, {J}_{n} \right] &= \left[ \left( \dfrac{1}{2} \widehat{p}_{1} + \dfrac{1}{2} \widehat{p}_{2} + f_{12} \right) , \left(A_{12} \left( \widehat{p}_{1} \widehat{p}_{2} - f_{12}^{2} \right) + B_{1} \widehat{p}_{1} + B_{2} \widehat{p}_{2} + B_{12} f_{12} + C_{12} \right) \right] \\
&= B_{1} f_{12}^{2} \widehat{p}_{1} + B_{2} f_{12}^{2}\widehat{p}_{2} + \dfrac{B_{12}}{2} \widehat{p}_{1}^{2}f_{12} +  \dfrac{B_{12}}{2} \widehat{p}_{2}^{2}f_{12} - B_{1} \widehat{p}_{1}  f_{12}^{2} -B_{2} \widehat{p}_{2} f_{12}^{2} - \dfrac{B_{12}}{2} f_{12}\widehat{p}_{1}^{2} -\dfrac{B_{12}}{2} f_{12}\widehat{p}_{2}^{2} + \dots \\
&= 2(B_{2}-B_{1}) f_{12} f'_{12} + \dfrac{B_{12}}{2} \left( (\widehat{p}_{1}-\widehat{p}_{2}) f'_{12} + f'_{12} (\widehat{p}_{1} - \widehat{p}_{2})\right) + \dots ,
\end{align*}
where we have discarded all terms with coefficients $A_{12}$ and $C_{12}$ as they are straightforward to handle and have applied Lemma \ref{keyLemma} subsequently.
The first term of the expression above is a well-defined operator, but the second is somewhat ambiguous. We need to ensure that they vanish as well. We use a different expression for ${J}_{n}$, initially introduced in \cite{Wojciechowski}, namely
\begin{align*}
\widehat{J}_{n} = \exp \left( - \dfrac{g^{2}}{2} \sum_{k,l} f^{2}(q_{k}, q_{l} ) \dfrac{\partial }{\partial \widehat{p}_{k}} \dfrac{\partial }{\partial \widehat{p}_{l}} \right) \widehat{p}_{1} \cdots \widehat{p}_{n},
\end{align*}
which clearly illustrates that the only quadratic terms in the operator ${J}_{n}$ are of the form $f_{kl}$, and in our particular instance, $f_{12}$. This allows us to rewrite Equation \eqref{op41} as
\begin{align}\label{commutej}
{J}_{n} = a_{12} \left( \widehat{p}_{1} \widehat{p}_{2} - f^{2}_{12} \right) + b_{1} \widehat{p}_{1} + b_{2} \widehat{p}_{2} + c'_{12},
\end{align}
where once again $a_{12}$, $b_{1}$, $b_{2}$ and $c'_{12}$ are independent of $f_{12}$, $\widehat{p}_{1}$ and $\widehat{p}_{2}$. Now we commute ${J}_{n}$ as in Equation \eqref{commutej} with $\widehat{H}$ to get
\begin{align}
[{J}_{n}, \widehat{H} ] &= \left[ a_{12} \left( \widehat{p}_{1} \widehat{p}_{2} - f^{2}_{12} \right) + b_{1} \widehat{p}_{1} + b_{2} \widehat{p}_{2} + c'_{12} , \widehat{H} \right] \nonumber \\
&= \left[ a_{12} \widehat{p}_{1} \widehat{p}_{2} ,  \widehat{H} \right] - \left[  a_{12} f^{2}_{12} , \widehat{H} \right] + \left[ b_{1}, \widehat{p}_{1} \right] + \left[ b_{2} \widehat{p}_{2}, \widehat{H} \right] + \left[ c'_{12}, \widehat{H} \right], \label{cancelquantumclassical}
\end{align}
which all yield well-defined operators. Assuming the vanishing of the Poisson bracket of the corresponding classical integral $I_{n}$ with the Hamiltonian, i.e. $\{ H, I_{n}\}=0$, all the commutative relations in Equation \eqref{cancelquantumclassical} cancel, which shows that the operators ${J}_{k}$ are indeed integrals of motion. 
\par Now the operators ${J}_{k}$ are functionally independent as polynomials in $\widehat{p}_{j}$ of degree $k$. Using the recursive relation \eqref{integralsrecursive}, we can present any integral in a form similar to the one in Equation \eqref{cancelquantumclassical}, which only contains well-defined operators. Therefore, the commutator of any two integrals of motion vanishes as long as their classical analogues Poisson commute.
\end{proof}
\par So far we have provided a consistent methodology for extracting the integrals of motion from a given Lax pair and we have shown that they indeed commute, under some assumptions. We are now ready to present the quantum version of the rational Calogero-Moser system and explore its integrability via the Lax pair method.

\subsection{Quantum Rational Calogero-Moser System.}
\indent \par
\begin{definition}
The Hamiltonian of the quantum Calogero-Moser System \cite{Calogero} is
	\begin{align}\label{quantumCM}
	\widehat{H} =  \dfrac{1}{2} \sum_{j=1}^{n} \widehat{p}_{j}^{2} +  \sum_{\substack{i = 1 \\ i \neq j}}^{n} g{\widehat{q}_{ji}^{(2)}}.
	\end{align}
\end{definition}

	It describes the motion of a system of $n$ particles on the line with parameter of interaction $g \in \mathbb{R}$. As before, $\widehat{q}_{ji}^{(2)}$ denotes $(\widehat{q}_{j}-\widehat{q}_{i})^{-2}$. We remark that $g=k(k+\hbar)$, where $\hbar$ is Planck's constant and $k \in \mathbb{R}$. However, we set $\hbar=1$. The interaction within the system of particles is illustrated below in Figure \ref{quantum_calogero_fig}.

\begin{figure}
\centering
    	\begin{tikzpicture}
 	 \filldraw 
	         (0, -5) circle (0pt) node {} --
		(2,-5) circle (1pt) node[align=left,   below] {$\widehat{q}_{1}$} --
		(4,-5) circle (1pt) node[align=center, below] {$\widehat{q}_{2}$}     -- 
		(6,-5) circle (1pt) node[align=right,  below] {\dots} --
		(8,-5) circle (1pt) node[align=center, below] {$\widehat{q}_{i}$} --
		(10, -5) circle (1pt) node[align=right, below] {\dots} --
		(12,-5) circle (1pt) node[below] {$\widehat{q}_{n-1}$} --
		(14, -5) circle (1pt) node[align=right, below] {$\widehat{q}_{n}$} --
		(16, -5) circle (0pt) node {} 
		(3, -5.5) node {$\longleftrightarrow$}
		(3, -6) node {$\frac{k(k+1)}{(\widehat{q}_{1}-\widehat{q}_{2})^{2}}$}
		(13, -5.5) node {$\longleftrightarrow$}
		(13, -6) node {$\frac{k(k+1)}{(\widehat{q}_{n-1}-\widehat{q}_{n})^{2}}$}		;
	\end{tikzpicture} 
	\caption{The configuration of the quantum rational Calogero-Moser system.}
	\label{quantum_calogero_fig}
\end{figure}
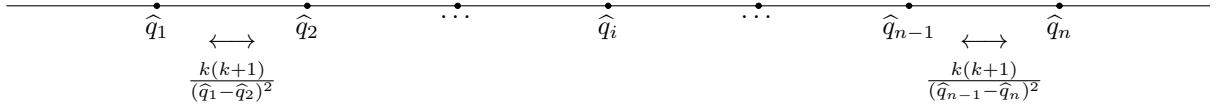

\par Using Heisenberg's equations of motion as in \eqref{heisenbergEqns}, we obtain
\begin{align*}
 \dot{\widehat{q}}_{r} &= i [ \widehat{H}, \widehat{q}_{r} ] = \dfrac{i}{2} \left(   \sum_{j=1}^{n} \widehat{p}_{j}^{2} +  \sum_{\substack{i = 1 \\ i \neq j}}^{n} g \widehat{q}_{ji}^{(2)} \right)\widehat{q}_{r} - \dfrac{i}{2} \widehat{q}_{r} \left(  \sum_{j=1}^{n} \widehat{p}_{j}^{2} + \sum_{\substack{i = 1 \\ i \neq j}}^{n} g \widehat{q}_{ji}^{(2)} \right),
\end{align*}
where we note that all terms with coefficients $g$ cancel since multiplication by the canonical coordinates $\widehat{q}_{j}$, $\forall j$ is commutative and results in opposite signs of the terms. Further, we observe that applying the quantum momentum operator $\widehat{p}_{s}$ to a coordinate $\widehat{q}_{r}$ where $r \neq s$ is automatically zero. Therefore, we only consider the case when $r=s$ and after applying Lemma \ref{keyLemma} we obtain
\begin{align}\label{quantumHeisenbergQ}
 \dot{\widehat{q}}_{r} &= \dfrac{i}{2} \left( \widehat{p}_{r}^{2} \widehat{q}_{r} - \widehat{q}_{r} \widehat{p}_{r}^{2} \right) = - \dfrac{i}{\cancel{2}} \cancel{2} i \widehat{p}_{r} = \widehat{p}_{r}.
\end{align}
Next, using the second Heisenberg equation of motion as in \eqref{heisenbergEqns} we get
\begin{align*}
 \dot{\widehat{p}}_{r} &= i [\widehat{H}, \widehat{p}_{r} ] = -i \left( \widehat{p}_{r} \left(  \sum_{j=1}^{n} \widehat{p}_{j}^{2} + \sum_{\substack{i = 1 \\ i \neq j}}^{n} g \widehat{q}_{ji}^{(2)}\right) -  \left(  \sum_{j=1}^{n} \widehat{p}_{j}^{2} + \sum_{\substack{i = 1 \\ i \neq j}}^{n} g \widehat{q}_{ji}^{(2)}\right) \widehat{p}_{r}  \right),
\end{align*}
where we perform the following cancellations:
\begin{align*}
\widehat{p}_{r} \widehat{p}_{r}^{2} - \widehat{p}_{r}^{2} \widehat{p}_{r} = \widehat{p}_{r}^{3} - \widehat{p}_{r}^{3} = 0 
\end{align*}
and for $r \neq s$,
\begin{align*}
 \widehat{p}_{r} \widehat{p}_{s}^{2} - \widehat{p}_{s}^{2} \widehat{p}_{r} = -i \partial_{\widehat{q}_{r}} \partial^{2}_{\widehat{q}_{s}} - \partial^{2}_{\widehat{q}_{s}}(-i\partial_{\widehat{q}_{r}}) = i ( \partial^{2}_{\widehat{q}_{s}} \partial_{\widehat{q}_{r}} - \partial_{\widehat{q}_{r}} \partial^{2}_{\widehat{q}_{s}} ) = 0
\end{align*}
due to the symmetry of ordinary first order partial derivatives and the fact that $i \neq 0$. Thus, we are left with the terms with coefficient $g$, which we handle as before. Applying the quantum momentum operator $\widehat{p}_{s}$ to a term of the form $\widehat{q}_{rt}$, for $s \neq r, t$ gives a zero, therefore we are left with the commutator
\begin{align}\label{quantumHeisenbergEqns}
 \dot{\widehat{p}}_{r} =  i \left[ \widehat{p}_{r}, g \sum_{\substack{ s = 1 \\ s \neq r}}^{n}  \widehat{q}_{rs}^{(2)}  \right] = i \left[ -i \partial_{\widehat{q}_{r}}, g\sum_{\substack{ s = 1 \\ s \neq r}}^{n}  \widehat{q}_{rs}^{(2)}  \right] =  \sum_{\substack{ s = 1 \\ s \neq r}}^{n} 2g \widehat{q}_{rs}^{(3)},
\end{align}
after applying Lemma \ref{keyLemma}.

The quantum rational Calogero-Moser system is also integrable and we proceed by verifying this statement via the Lax method.
\subsection{Integrability of the Quantum Calogero-Moser System.}
We begin by verifying that the Lax pair given in \cite{Ujino} indeed satisfies the Lax equation.
\begin{theorem}
	A Lax pair for the quantum rational Calogero-Moser model is given by
	\begin{align*}
	\widehat{L}_{rs} =\widehat{p}_{r} \delta_{rs} + (1-\delta_{rs}) ik \widehat{q}_{rs}^{(1)}, \qquad \widehat{M}_{rs} = -k(1-\delta_{rs}) \widehat{q}_{rs}^{(2)} + k \delta_{rs} \sum_{\substack{t=1 \\ t \neq r}}^{n} \widehat{q}_{rt}^{(2)},
	\end{align*}
for $r,s=1,\dots, n$ \cite{Ujino}.
\end{theorem}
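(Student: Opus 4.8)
The plan is to verify the quantum Lax equation \eqref{quantumLax1} directly, specializing to $n=3$ exactly as in the classical verification of \eqref{laxPairMoserClassical}, but now tracking operator ordering. Write $\widehat{L} = P + ik\,Q$ with $P = \diag(\widehat{p}_1,\widehat{p}_2,\widehat{p}_3)$ and $Q_{rs} = (1-\delta_{rs})\widehat{q}_{rs}^{(1)}$. The organizing observation is that the proposed $\widehat{M}$ is, entry by entry, exactly $-i$ times the classical $M$ of \eqref{laxPairMoserClassical}, i.e. $M_{rs} = i\,\widehat{M}_{rs}$. Hence $i[\widehat{L},\widehat{M}] = [\widehat{L},M]$, so the right-hand side of \eqref{quantumLax1} is precisely the classical commutator \eqref{classicalLaxCommutator} recomputed with the $\widehat{p}_r$ now treated as operators. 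The task therefore reduces to isolating those terms in \eqref{classicalLaxCommutator} that are modified because $\widehat{p}_r$ no longer commutes with functions of $\widehat{q}$.

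Next I would compute the left-hand side $\dot{\widehat{L}}$ from the Heisenberg dynamics. The diagonal entries are $\dot{\widehat{L}}_{rr} = \dot{\widehat{p}}_r$, already given by \eqref{quantumHeisenbergEqns}. For an off-diagonal entry I would differentiate $\widehat{L}_{rs} = ik\,\widehat{q}_{rs}^{(1)}$ using $\dot{f} = i[\widehat{H},f]$ together with $\dot{\widehat{q}}_r = \widehat{p}_r$ from \eqref{quantumHeisenbergQ}; because $[\widehat{p}_j^2, f(\widehat{q})]$ yields a \emph{symmetrized} product of a momentum and $f'$, this produces $-\tfrac{ik}{2}\big((\widehat{p}_r-\widehat{p}_s)\widehat{q}_{rs}^{(2)} + \widehat{q}_{rs}^{(2)}(\widehat{p}_r-\widehat{p}_s)\big)$, which by one application of Lemma \ref{keyLemma} equals $-ik\,\widehat{q}_{rs}^{(2)}(\widehat{p}_r-\widehat{p}_s) + 2k\,\widehat{q}_{rs}^{(3)}$. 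The purely positional correction $2k\,\widehat{q}_{rs}^{(3)}$, absent in \eqref{classicalLaxResult}, is the first place where quantum effects enter, and it must reappear on the right-hand side.

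On the right-hand side I would split $[\widehat{L},M] = [P,M] + ik[Q,M]$. The term $ik[Q,M]$ involves only functions of $\widehat{q}$, so its entries are computed exactly as classically: it reproduces the vanishing positional off-diagonal part and the diagonal $2k^2\sum_{s\neq r}\widehat{q}_{rs}^{(3)}$ of \eqref{classicalLaxCommutator}. The genuinely new contribution is $[P,M]$. Its off-diagonal entry, after moving $\widehat{p}_r$ past $M_{rs}=-ik\,\widehat{q}_{rs}^{(2)}$ via Lemma \ref{keyLemma}, gives $-ik\,\widehat{q}_{rs}^{(2)}(\widehat{p}_r-\widehat{p}_s) + 2k\,\widehat{q}_{rs}^{(3)}$, matching $\dot{\widehat{L}}_{rs}$ \emph{including} the $2k\,\widehat{q}_{rs}^{(3)}$ correction; its diagonal entry is $[\widehat{p}_r, M_{rr}] = -i\,\partial_{\widehat{q}_r}M_{rr}$, an extra diagonal piece proportional to $k$ that is identically zero in the classical setting.

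The main obstacle, and the crux of the verification, is the diagonal coefficient. Combining the classical $2k^2$ from $ik[Q,M]$ with the quantum correction $[\widehat{p}_r,M_{rr}]$ from $[P,M]$ shifts the effective coupling away from the naive $k^2$ by a term linear in $k$. For \eqref{quantumLax1} to reproduce the Heisenberg equation \eqref{quantumHeisenbergEqns}, this shifted coupling must equal the coupling $g$ of the Hamiltonian \eqref{quantumCM} — which is exactly the quantum relation $g = k(k+\hbar)$ recorded below \eqref{quantumCM}, the sign of the linear shift being fixed by the convention $\widehat{p}_j = -i\partial_{\widehat{q}_j}$ of \eqref{quantumMomentum}. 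Once this identification is in place, $\dot{\widehat{L}} = i[\widehat{L},\widehat{M}]$ holds entry by entry; since \eqref{quantumLax2} is merely the Heisenberg equation for $\widehat{L}$, the two Lax equations are equivalent and together recover \eqref{quantumHeisenbergQ}--\eqref{quantumHeisenbergEqns}. Finally, I would note that, just as for \eqref{laxPairMoserClassical}, the $n=3$ computation goes through unchanged for general $n$.
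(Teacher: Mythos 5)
Your proposal is correct and proves the same statement, but it takes a genuinely different route from the paper's proof. The paper proceeds by brute force: it writes out all nine entries of $\widehat{L}\widehat{M}$ and of $\widehat{M}\widehat{L}$ for $n=3$, subtracts, and only then applies Lemma \ref{keyLemma} entry by entry to normal-order the surviving momentum--position products, with the coupling shift $k^{2}\to g$ emerging at the very end on the diagonal. Your decomposition $\widehat{L}=P+ik\,Q$ with $P=\diag(\widehat{p}_{1},\widehat{p}_{2},\widehat{p}_{3})$, combined with the observation that the classical $M$ of \eqref{laxPairMoserClassical} equals $i\widehat{M}$ so that $i[\widehat{L},\widehat{M}]=[P,M]+ik[Q,M]$, does not appear in the paper and buys two real simplifications: the purely positional commutator $ik[Q,M]$ is computed by commutative algebra, so every cancellation already established in \eqref{classicalLaxCommutator} carries over verbatim; and all quantum corrections are isolated in $[P,M]$, whose off-diagonal entries produce exactly the $2k\,\widehat{q}_{rs}^{(3)}$ ordering term that your symmetrized computation of $\dot{\widehat{L}}_{rs}$ also produces on the left-hand side, and whose diagonal entries $[\widehat{p}_{r},M_{rr}]$ produce the linear-in-$k$ shift of the coupling. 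Your treatment of the left-hand side via the symmetrized form of $[\widehat{p}_{j}^{2},f(\widehat{q})]$ is also more careful than the paper's. One caution on the crux step: with the convention $\widehat{p}_{j}=-i\partial_{\widehat{q}_{j}}$ a direct evaluation gives $ik[\widehat{p}_{r},\widehat{q}_{rs}^{(2)}]=-2k\,\widehat{q}_{rs}^{(3)}$, so the shifted diagonal coupling comes out as $k(k-1)$ rather than the $k(k+1)$ asserted in the paper; the two are related by $k\mapsto -k$ and define the same system, so your decision to flag the sign as convention-dependent is sound, but if you carry the computation out you should state explicitly which sign your conventions produce rather than leave it implicit.
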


\begin{proof}
We explicitly verify the case $n=3$ as it is a clear illustration on the procedure used to handle the lack of commutativity of the momenta operators. Using the proposed expressions for the Lax matrices $\widehat{L}$ and $\widehat{M}$ we get
\begin{align*}
\widehat{L} = \begin{bmatrix}
  \widehat{p}_{1} & ik \widehat{q}_{12}^{(1)} & ik \widehat{q}_{13}^{(1)} \\
  ik \widehat{q}_{21}^{(1)} &   \widehat{p}_{2} & ik \widehat{q}_{23}^{(1)} \\
  ik \widehat{q}_{31}^{(1)} & ik \widehat{q}_{32}^{(1)} &   \widehat{p}_{3}
 \end{bmatrix}, \qquad \text{ and } \qquad 
\widehat{M}= \begin{bmatrix}
 k( \widehat{q}_{12}^{(2)}+ \widehat{q}_{13}^{(2)}) & - k \widehat{q}_{12}^{(2)} & -k \widehat{q}_{13}^{(2)} \\
  -k \widehat{q}_{21}^{(2)} &   k( \widehat{q}_{21}^{(2)}+ \widehat{q}_{23}^{(2)}) & -k \widehat{q}_{23}^{(2)} \\
  -k \widehat{q}_{31}^{(2)} & -k \widehat{q}_{32}^{(2)} &   k( \widehat{q}_{31}^{(2)}+ \widehat{q}_{32}^{(2)})
 \end{bmatrix}.
\end{align*}
In order to obtain $\dot{\widehat{L}}$ as in Equation \eqref{quantumLax2} we first multiply the Hamiltonian operator by the $3 \times 3$ identity matrix to get
\begin{align*}
\tilde{H} = \widehat{H} \mathbb{I}_{3} = 
\begin{bmatrix*}
\dfrac{1}{2}\displaystyle{\sum_{r=1}^{3}} p_{r}^{2} +g\displaystyle{\sum_{\substack{s=1 \\ s \neq r}}^{3}} \widehat{q}_{rs}^{(2)} & 0 & 0 \\
 0 & \dfrac{1}{2}\displaystyle{\sum_{r=1}^{3}} p_{r}^{2} +g\displaystyle{\sum_{\substack{s=1 \\ s \neq r}}^{3}}  \widehat{q}_{rs}^{(2)}  & 0 \\
 0 & 0 &\dfrac{1}{2}\displaystyle{\sum_{r=1}^{3}} p_{r}^{2} +g\displaystyle{\sum_{\substack{s=1 \\ s \neq r}}^{3}}  \widehat{q}_{rs}^{(2)}  \\
\end{bmatrix*},
\end{align*} 
which we then use to compute the commutator $\tilde{H}$ and $\widehat{L}$. We have
\begin{align*}
\dot{\widehat{L}} = i \left[ \tilde{H} ,\widehat{L} \right] = i
\begin{bmatrix*}
\widehat{H} \widehat{p}_{1} - \widehat{p}_{1} \widehat{H} & \widehat{H} \left(i k  \widehat{q}_{12}^{(1)}\right) -\left(i k  \widehat{q}_{12}^{(1)}\right) \widehat{H} & H \left(i k
    \widehat{q}_{13}^{(1)}\right)-\left(i k  \widehat{q}_{13}^{(1)}\right) H \\
\widehat{H} \left(i k  \widehat{q}_{21}^{(1)}\right)-\left(i k  \widehat{q}_{21}^{(1)}\right)\widehat{H} & \widehat{H} \widehat{p}_{2} - \widehat{p}_{2} \widehat{H} &\widehat{H} \left(i k  \widehat{q}_{23}^{(1)}\right)-\left(i k  \widehat{q}_{23}^{(1)}\right)\widehat{H} \\
 \widehat{H} \left(i k  \widehat{q}_{31}^{(1)}\right)-\left(i k  \widehat{q}_{31}^{(1)}\right)\widehat{H}&\widehat{H} \left(i k  \widehat{q}_{32}^{(1)}\right)-\left(i k  \widehat{q}_{32}^{(1)}\right)\widehat{H} &\widehat{H} \widehat{p}_{3} - \widehat{p}_{3} \widehat{H} \\
\end{bmatrix*}.
\end{align*}
For simplicity, we consider the diagonal and off-diagonal entries separately. Taking a factor of $-1$ in front of the resulting expressions, we get
\begin{align*}
\dot{\widehat{L}}_{rr} = -i [\widehat{p}_{r}, \widehat{H} ] \qquad \text{ and } \qquad \dot{\widehat{L}}_{rs} = -i [k\widehat{q}_{rs}^{(1)}, \widehat{H} ].
\end{align*}
Hence, we apply Lemma \ref{keyLemma} to the diagonal entries and taking into account the commutativity of the momentum-free terms, we obtain
\begin{align*}
\dot{\widehat{L}}_{rr} = \sum_{\substack{ s = 1 \\ s \neq r}}^{n} 2g \widehat{q}_{rs}^{(3)} = \dot{\widehat{p}}_{r} \quad \text{ and } \quad \dot{\widehat{L}}_{rs} =- i \left[ k \widehat{q}_{rs}^{(1)} , \sum_{t=1}^{3} \widehat{p}_{t}^{2} \right]  = -ik ( \dot{\widehat{q}}_{r}-\dot{\widehat{q}}_{s}) \widehat{q}_{rs}^{(2)} -2ik  \widehat{q}_{rs}^{(3)}, \quad r \neq s,
\end{align*}
by Equation \eqref{quantumHeisenbergEqns}. Thus, the LHS of the Lax equation \eqref{quantumLax1} is
\begin{align}\label{quantumLaxDot}
\dot{\widehat{L}} =  \begin{bmatrix}
  \dot{\widehat{p}}_{1} & -ik(\dot{\widehat{q}}_{1}-\dot{\widehat{q}}_{2})  \widehat{q}_{12}^{(2)} -2ik  \widehat{q}_{12}^{(3)} & -ik(\dot{\widehat{q}}_{1}-\dot{\widehat{q}}_{3})  \widehat{q}_{13}^{(2)}- 2ik  \widehat{q}_{13}^{(3)} \\
-ik(\dot{\widehat{q}}_{2}-\dot{\widehat{q}}_{1})  \widehat{q}_{21}^{(2)} - 2ik \widehat{q}_{21}^{(3)} & \dot{\widehat{p}}_{2} & -ik(\dot{\widehat{q}}_{2}-\dot{\widehat{q}}_{3})  \widehat{q}_{23}^{(2)} - 2ik  \widehat{q}_{23}^{(3)}\\
-ik(\dot{\widehat{q}}_{3}-\dot{\widehat{q}}_{1})  \widehat{q}_{31}^{(2)} - 2ik  \widehat{q}_{31}^{(3)} & -ik(\dot{\widehat{q}}_{3}-\dot{\widehat{q}}_{2})  \widehat{q}_{32}^{(2)} - 2ik  \widehat{q}_{32}^{(3)}&  \dot{\widehat{p}}_{3}
 \end{bmatrix}.
\end{align}
For the RHS of the Lax equation \eqref{quantumLax1} we split the calculations in two bits. Denoting terms of the form $( \widehat{q}_{rs}^{(2)}+ \widehat{q}_{rt}^{(2)})$ by $( \widehat{q}_{rsrt}^{(2)})$, we first compute $\widehat{L}\widehat{M}$, yielding
\begin{align*}
&\widehat{L}\widehat{M} =\\
&ik^{2}\begin{bmatrix*}[l]
- \dfrac{i}{k}   \widehat{p}_{1}    \widehat{q}_{1213}^{(2)}-  \widehat{q}_{12}^{(1)}   \widehat{q}_{21}^{(2)}-  \widehat{q}_{13}^{(1)}  \widehat{q}_{31}^{(2)} &
 \dfrac{i}{k}    \widehat{p}_{1}    \widehat{q}_{12}^{(2)}+  \widehat{q}_{12}^{(1)}    \widehat{q}_{2123}^{(2)}-  \widehat{q}_{13}^{(1)}  \widehat{q}_{32}^{(2)} & 
\dfrac{i}{k} \widehat{p}_{1}   \widehat{q}_{13}^{(2)}-  \widehat{q}_{12}^{(1)}  \widehat{q}_{23}^{(2)}+  \widehat{q}_{13}^{(1)}  \widehat{q}_{3132}^{(2)} \\ && \\
  \dfrac{i}{k}  \widehat{p}_{2}   \widehat{q}_{21}^{(2)}+  \widehat{q}_{21}^{(1)} \widehat{q}_{1213}^{(2)} +  \widehat{q}_{23}^{(1)} \widehat{q}_{31}^{(2)} &
- \dfrac{i}{k}   \widehat{p}_{2}   \widehat{q}_{2123}^{(2)}-   \widehat{q}_{21}^{(1)} \widehat{q}_{12}^{(2)}-  \widehat{q}_{23}^{(1)}  \widehat{q}_{32}^{(2)}&
\dfrac{i}{k} \widehat{p}_{2}    \widehat{q}_{23}^{(2)}-  \widehat{q}_{21}^{(1)}  \widehat{q}_{13}^{(2)}+  \widehat{q}_{23}^{(1)}  \widehat{q}_{3132}^{(2)}\\ && \\
  \dfrac{i}{k}   \widehat{p}_{3}  \widehat{q}_{31}^{(2)}+ \widehat{q}_{31}^{(1)}  \widehat{q}_{1213}^{(2)}+  \widehat{q}_{32}^{(1)}  \widehat{q}_{21}^{(2)} &
\dfrac{i}{k}  \widehat{p}_{3}   \widehat{q}_{32}^{(2)}-    \widehat{q}_{31}^{(1)}  \widehat{q}_{12}^{(2)}+ \widehat{q}_{32}^{(1)}  \widehat{q}_{2123}^{(2)}&
- \dfrac{i}{k} \widehat{p}_{3}   \widehat{q}_{3132}^{(2)}-  \widehat{q}_{31}^{(1)}  \widehat{q}_{13}^{(2)}-  \widehat{q}_{32}^{(1)}  \widehat{q}_{23}^{(2)} \\
\end{bmatrix*},
\end{align*}
and similarly for $\widehat{M}\widehat{L}$ we get
\begin{align*}
&\widehat{M}\widehat{L}= \\
&i k^{2}\begin{bmatrix*}[l]
 - \dfrac{i}{k}  \widehat{q}_{1213}^{(2)}  \widehat{p}_{1}- \widehat{q}_{12}^{(2)}  \widehat{q}_{21}^{(1)}- \widehat{q}_{13}^{(2)}  \widehat{q}_{31}^{(1)} &
\dfrac{i}{k}  \widehat{q}_{12}^{(2)}  \widehat{p}_{2}-  \widehat{q}_{13}^{(2)} \widehat{q}_{32}^{(1)}+ \widehat{q}_{1213}^{(2)} \widehat{q}_{12}^{(1)} &
\dfrac{i}{k} \widehat{q}_{13}^{(2)} \widehat{p}_{3}-  \widehat{q}_{12}^{(2)}  \widehat{q}_{23}^{(1)}+ \widehat{q}_{1213}^{(2)} \widehat{q}_{13}^{(1)} \\ && \\
\dfrac{i}{k}  \widehat{q}_{21}^{(2)} \widehat{p}_{1}-  \widehat{q}_{23}^{(2)}  \widehat{q}_{31}^{(1)}+  \widehat{q}_{2123}^{(2)}  \widehat{q}_{21}^{(1)} & 
-\dfrac{i}{k} \widehat{q}_{2123}^{(2)} \widehat{p}_{2} -   \widehat{q}_{21}^{(2)}  \widehat{q}_{12}^{(1)}-  \widehat{q}_{23}^{(2)} \widehat{q}_{32}^{(1)} & 
\dfrac{i}{k} \widehat{q}_{23}^{(2)} \widehat{p}_{3}-  \widehat{q}_{21}^{(2)}  \widehat{q}_{13}^{(1)}+  \widehat{q}_{2123}^{(2)} \widehat{q}_{23}^{(1)}\\ && \\
\dfrac{i}{k} \widehat{q}_{31}^{(2)}  \widehat{p}_{1}-  \widehat{q}_{32}^{(2)}  \widehat{q}_{21}^{(1)}+  \widehat{q}_{3132}^{(2)}  \widehat{q}_{31}^{(1)} & 
\dfrac{i}{k}  \widehat{q}_{32}^{(2)}\widehat{p}_{2}-  \widehat{q}_{31}^{(2)} \widehat{q}_{12}^{(1)}+ \widehat{q}_{3132}^{(2)} \widehat{q}_{32}^{(1)} & 
- \dfrac{i}{k} \widehat{q}_{3132}^{(2)} \widehat{p}_{3}-  \widehat{q}_{31}^{(2)}  \widehat{q}_{13}^{(1)}-  \widehat{q}_{32}^{(2)}  \widehat{q}_{23}^{(1)}\\
\end{bmatrix*}.
\end{align*}
The reader is now invited to consider the two products above and observe the significance of the order of occurrence of the quantum momentum operators $\widehat{p}_{r}$. There are several patterns we shall point out. First we focus on the diagonal elements of the products. They are of the form
\begin{align*}
 (\widehat{L}\widehat{M})_{rr} = k \widehat{p}_{r}  \widehat{q}_{rsrt}^{(2)} -ik^{2} \widehat{q}_{rs}^{(1)} \widehat{q}_{sr}^{(2)} -i k^{2}  \widehat{q}_{rt}^{(1)}\widehat{q}_{tr}^{-2} \quad \text{and} \quad (\widehat{M}\widehat{L})_{rr} = k  \widehat{q}_{rsrt}^{(2)} \widehat{p}_{r} -ik^{2} \widehat{q}_{rs}^{(2)} \widehat{q}_{sr}^{(1)} - ik^{2}  \widehat{q}_{rt}^{(2)}\widehat{q}_{tr}^{-1},
\end{align*}
 where we can write terms the form $- \widehat{q}_{rs}^{(1)} \widehat{q}_{sr}^{(2)}$ as $ \widehat{q}_{sr}^{(3)}$, $r \neq s,t $ due to the permutation of the indices in the odd power term caused by the negative sign. Computing the difference of $\widehat{L}\widehat{M}$ and $\widehat{M}\widehat{L}$ and multiplying by $i$ we then get
\begin{align*}
i\left(\widehat{L}\widehat{M} - \widehat{M}\widehat{L} \right)_{rr} = ik \left[ \widehat{p}_{r},   \widehat{q}_{rsrt}^{(2)}\right] + 2k^{2} \widehat{q}_{rs}^{(3)} + 2k^{2} \widehat{q}_{rt}^{(3)},
\end{align*}
since $i\left(ik^{2} \widehat{q}_{rs}^{(3)}-(-ik^{2} \widehat{q}_{rs}^{(3)})\right)=2k^{2} \widehat{q}_{rs}^{(3)}$. After applying Lemma \ref{keyLemma} we get
\begin{align}\label{quantumLaxDiagonal}
i\left(\widehat{L}\widehat{M} - \widehat{M}\widehat{L} \right)_{rr} =2k \widehat{q}_{rsrt}^{(3)} + 2k^{2} \widehat{q}_{rsrt}^{(3)}=2g  \widehat{q}_{rsrt}^{(3)},
\end{align}
for $g=k(k+1)$. Now consider the off-diagonal entries. For $r\neq s,t$ we have
\begin{align*}
 (\widehat{L}\widehat{M})_{rs} = -k\widehat{p}_{r} \widehat{q}_{rs}^{(2)}+ik^{2} \widehat{q}_{rs}^{(1)} \widehat{q}_{srst}^{(2)}-ik^{2} \widehat{q}_{rt}^{(1)} \widehat{q}_{ts}^{(2)} \quad \text{and} \quad  
(\widehat{M}\widehat{L})_{rs} =  -k \widehat{q}_{rs}^{(2)}\widehat{p}_{s}-ik^{2} \widehat{q}_{rt}^{(2)} \widehat{q}_{ts}^{(1)}+ik^{2} \widehat{q}_{rsrt}^{(2)} \widehat{q}_{rs}^{(1)},
\end{align*}
which we group by coefficients powers of $k$ yielding
\begin{align*}
 (\widehat{L}\widehat{M})_{rs} &= -k\widehat{p}_{r} \widehat{q}_{rs}^{(2)}+ik^{2}\left(\underbrace{ \widehat{q}_{rs}^{(1)} \widehat{q}_{sr}^{(2)}}_{\alpha}+ \widehat{q}_{rs}^{(1)} \widehat{q}_{st}^{(2)}- \widehat{q}_{rt}^{(1)} \widehat{q}_{ts}^{(2)}\right) \quad \text{and} \\
(\widehat{M}\widehat{L})_{rs} &=  -k \widehat{q}_{rs}^{(2)}\widehat{p}_{s}+ik^{2}\left(- \widehat{q}_{rt}^{(2)} \widehat{q}_{ts}^{(1)}+\underbrace{ \widehat{q}_{rs}^{(2)} \widehat{q}_{rs}^{(1)}}_{\alpha}+ \widehat{q}_{rt}^{(2)} \widehat{q}_{rs}^{(1)}\right) ,
\end{align*}
where the underbraced terms indicate opposite signs, which will result in cancellation immediately. Now consider for a moment the rest of the momentum-free terms and ignore the coefficient $k^{2}$. Computing the commutator and multiplying by $i$ we get
\begin{align*}
- \widehat{q}_{rs}^{(1)} \widehat{q}_{st}^{(2)}+ \widehat{q}_{rt}^{(1)} \widehat{q}_{ts}^{(2)}+ \widehat{q}_{rt}^{(2)} \widehat{q}_{ts}^{(1)}- \widehat{q}_{rt}^{(2)} \widehat{q}_{rs}^{(1)} &=  \widehat{q}_{st}^{(2)}( \widehat{q}_{rs}^{(1)}- \widehat{q}_{rt}^{(1)}) + \widehat{q}_{rt}^{(2)}( \widehat{q}_{rs}^{(1)}- \widehat{q}_{ts}^{(1)})  \\
&=  \widehat{q}_{st}^{(2)} \left( \dfrac{\cancel{\widehat{q}_{r}}-\widehat{q}_{t}-\cancel{\widehat{q}_{r}}+\widehat{q}_{s}}{\widehat{q}_{rs}\widehat{q}_{rt}} \right) + \widehat{q}_{rt}^{(2)} \left( \dfrac{\widehat{q}_{t}-\cancel{\widehat{q}_{s}}-\widehat{q}_{r}+\cancel{\widehat{q}_{s}}}{\widehat{q}_{rs}\widehat{q}_{ts}} \right) \\
&=  \widehat{q}_{st}^{(2)} \dfrac{\widehat{q}_{st}}{\widehat{q}_{rs}\widehat{q}_{rt}} +  \widehat{q}_{rt}^{(2)} \dfrac{\widehat{q}_{tr}}{\widehat{q}_{rs}\widehat{q}_{ts}},
\end{align*}
which enables us to cancel the quadratic terms. Hence, we get
\begin{align*}
- \widehat{q}_{rs}^{(1)} \widehat{q}_{st}^{(2)}+ \widehat{q}_{rt}^{(1)} \widehat{q}_{ts}^{(2)}+ \widehat{q}_{rt}^{(2)} \widehat{q}_{ts}^{(1)}- \widehat{q}_{rt}^{(2)} \widehat{q}_{rs}^{(1)} &= \widehat{q}_{st}^{-1} \widehat{q}_{rs}^{(1)} \widehat{q}_{rt}^{(1)} +  \widehat{q}_{rt}^{(1)} \widehat{q}_{rs}^{(1)} \widehat{q}_{ts}^{(1)} = 0,
\end{align*}
since $\widehat{q}_{st}+ \widehat{q}_{ts}^{(1)}=0$. Thus, computing the difference of the two products and multiplying by the imaginary unit yields
\begin{align}
i\left(\widehat{L}\widehat{M} - \widehat{M}\widehat{L} \right)_{rs} &= -ik\left( \widehat{p}_{r} \widehat{q}_{rs}^{(2)}- \widehat{q}_{rs}^{(2)}\widehat{p}_{s} \right)   = -ik(-i(\partial_{\widehat{q}_{r}}  \widehat{q}_{rs}^{(2)} - (-i \partial_{\widehat{q}_{s}}  \widehat{q}_{rs}^{(2)} +2i  \widehat{q}_{rs}^{(3)} )) \nonumber \\ 
&= -ik (-i( \partial_{\widehat{q}_{r}}-\partial_{\widehat{q}_{s}})) -2ik  \widehat{q}_{rs}^{(3)},
 \label{quantumLaxOffdiagonal}
\end{align}
where we have used Lemma \ref{keyLemma} via $ \widehat{q}_{rs}^{(2)}\widehat{p}_{s} = \widehat{p}_{s}  \widehat{q}_{rs}^{(2)} + 2i  \widehat{q}_{rs}^{(3)}$, after applying the least common denominator in the momentum-free term.
Comparing the expression in \eqref{quantumLaxDiagonal} with the first Equation of \eqref{quantumLaxDot} we get
\begin{align*}
\dot{\widehat{L}}_{rr} = \dot{\widehat{p}}_{r}  = 2g  \widehat{q}_{rsrt}^{(3)} = \sum_{\substack{ s = 1 \\ s \neq r}}^{3} 2g \widehat{q}_{rs}^{(3)} = i\left[ \widehat{L}, \widehat{M} \right]_{rr},
\end{align*}
which indeed recovers Heisenberg's equation of motion for the momentum operator as in \eqref{quantumHeisenbergEqns}. Hence, comparing the expression derived in Equation \eqref{quantumLaxOffdiagonal} with the second statement of \eqref{quantumLaxDot} we see that
\begin{align*}
\dot{\widehat{L}}_{rs} = -ik ( \dot{\widehat{q}}_{r}-\dot{\widehat{q}}_{s}) \widehat{q}_{rs}^{(2)} -2ik  \widehat{q}_{rs}^{(3)} = -ik (-i( \partial_{\widehat{q}_{r}}-\partial_{\widehat{q}_{s}})) -2ik  \widehat{q}_{rs}^{(3)} = i\left(\widehat{L}\widehat{M} - \widehat{M}\widehat{L} \right)_{rs},
\end{align*}
implying that
\begin{align*}
\dot{\widehat{q}}_{r} = -i \partial_{\widehat{q}_{r}} = \widehat{p}_{r}
\end{align*}
by Equation \eqref{quantumMomentum} and we recover Heisenberg's equation for the coordinate as in \eqref{quantumHeisenbergQ}. Thus, we conclude that the proposed Lax pair indeed satisfies the Lax equation \eqref{quantumLax1}. 
\end{proof}
\begin{remark}
Notice that the Lax matrix $\widehat{M}$ exhibits the following property
\begin{equation}\label{sumToZero}
\sum_{j=1}^{n} \widehat{M}_{ij} = \sum_{j=1}^{n} \widehat{M}_{ji} = 0 \qquad \text{ for any }  i,
\end{equation}
which we refer to as \emph{sum-to-zero condition} and is the extra assumption we mentioned in Subsection \ref{quantumLaxSection} \cite{Wadati}. Next we observe that
\begin{align*}
\left[ \tilde{H}, \sum_{j,k=1}^{n}  \widehat{L}^{m}_{jk}\right] = \sum_{j,k=1}^{n} \left[ \widehat{L}^{m}, \widehat{M} \right]_{jk} = 0,
\end{align*}
which shows that a set of quantum integrals of motion is given by
\begin{align*}
J_{m} = \dfrac{1}{m} \sum_{j,k=1}^{n}  \widehat{L}^{m}_{jk},
\end{align*}
that is, by summing the powers of elements of the Lax matrix $\widehat{L}$ \cite{Ujino}. Note that when first introduced, the factor of $m^{-1}$ in front of the summation was introduced ``merely for convenience'' \cite{Ujino}. But as we shall see below, it is crucial in order to identify the Hamiltonian of the system with one of the integrals of motion. 
\end{remark}
Hence, we proceed by identifying the integrals of motion and showing that they commute.
\begin{corollary}
The $n=3$ quantum rational Calogero-Moser system is integrable.
\end{corollary}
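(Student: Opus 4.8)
The plan is to follow the classical corollary step by step, but keeping track of operator ordering throughout. First I would read off the three candidate integrals from the remark's formula $J_{m} = \frac{1}{m}\sum_{j,k=1}^{3}\widehat{L}^{m}_{jk}$. Summing the entries of $\widehat{L}$ itself, the off-diagonal terms $ik\widehat{q}_{rs}^{(1)}$ cancel in pairs because $\widehat{q}_{rs}^{(1)} = -\widehat{q}_{sr}^{(1)}$, leaving the total momentum $J_{1} = \widehat{p}_{1}+\widehat{p}_{2}+\widehat{p}_{3}$. For $J_{2}$ I would form $\widehat{L}^{2}$, sum all nine entries, and divide by $2$; here the key point is that the operator-ordering corrections supplied by Lemma \ref{keyLemma} (each reordering of a $\widehat{p}_{r}$ past a $\widehat{q}_{rs}^{(1)}$ produces a term proportional to $\widehat{q}_{rs}^{(2)}$, while the momentum-linear off-diagonal pieces cancel antisymmetrically and the purely coordinate quadratic pieces cancel by the partial-fraction identity used in the Lax verification) combine with the coordinate $k^{2}$ terms to upgrade the classical coupling $k^{2}$ to the quantum coupling $g = k(k+1)$, so that $J_{2} = \widehat{H}$ exactly. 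The factor $m^{-1} = \frac{1}{2}$ is precisely what makes this identification work. Finally, I would compute $\widehat{L}^{3}$ symbolically with the \emph{Mathematica} code of Appendix \ref{appQuantumRational}, sum its entries, and divide by $3$ to obtain $J_{3}$.

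With the three integrals in hand, I would verify the two requirements of the definition. Algebraic independence is immediate: $J_{1}$, $J_{2}$, $J_{3}$ are polynomials in the momenta of degrees $1$, $2$, $3$, whose leading momentum symbols are the power sums $\sum_{r}\widehat{p}_{r}$, $\sum_{r}\widehat{p}_{r}^{2}$, $\sum_{r}\widehat{p}_{r}^{3}$, and these are algebraically independent. For involution there are three commutators to check. The two brackets involving $J_{1}$ follow from translation invariance: since $[\sum_{l}\widehat{p}_{l}, \widehat{q}_{r}-\widehat{q}_{s}] = 0$ by the Heisenberg relations \eqref{heisenbergCommutation}, the total momentum $J_{1}$ commutes with every function of the coordinate differences and trivially with every $\widehat{p}_{r}$; as $J_{2}$ and $J_{3}$ are built solely from these ingredients, $[J_{1},J_{2}] = [J_{1},J_{3}] = 0$.

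The remaining bracket $[J_{2},J_{3}] = [\widehat{H},J_{3}]$ is the crux. Here I would invoke the Proposition established above, which shows $[J_{k},\widehat{H}] = 0$ for every $k$ via the sum-to-zero condition \eqref{sumToZero} on $\widehat{M}$ together with the recursion \eqref{integralsrecursive}; alternatively, to stay in the explicit spirit of the $n=3$ computation, one can confirm $[\widehat{H},J_{3}] = 0$ directly from the symbolic form of $J_{3}$, reducing each commutator $[\widehat{p}_{r},f(\widehat{q})]$ to $-if'$ by Lemma \ref{keyLemma} and checking that the resulting well-defined operators cancel. The main obstacle is exactly this non-commutativity: whereas the classical Poisson-bracket cancellations were symmetric, here terms such as $\widehat{p}_{r}\widehat{q}_{rs}^{(2)}$ and $\widehat{q}_{rs}^{(2)}\widehat{p}_{r}$ differ by the lower-order correction $2i\widehat{q}_{rs}^{(3)}$, and one must confirm that these corrections cancel in the final sum rather than merely the leading symbols. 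Once all three commutators vanish, the three algebraically independent integrals $J_{1}$, $J_{2}=\widehat{H}$, $J_{3}$ place the $n=3$ quantum rational Calogero-Moser system in the integrable class of the definition.
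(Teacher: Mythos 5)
Your proposal is correct and, for the construction of the integrals, follows the paper's route essentially verbatim: read $J_{1}$, $J_{2}$, $J_{3}$ off the summed powers of $\widehat{L}$, watch the reordering commutators $[\widehat{p}_{r},\widehat{q}_{rs}^{(1)}]$ upgrade the coupling from $k^{2}$ to $g=k(k+1)$ so that $J_{2}=\widehat{H}$, and obtain $J_{3}$ symbolically. The two brackets involving $J_{1}$ you dispatch by translation invariance, where the paper grinds them out term by term with Lemma \ref{keyLemma}; that is a cosmetic difference. The genuine divergence is the final bracket $[J_{2},J_{3}]$. The paper closes it with a Jacobi-identity manoeuvre, arguing $[J_{1},[J_{2},J_{3}]]=0$ and then asserting $[J_{2},J_{3}]=0$ --- an implication that does not hold for an arbitrary operator in the kernel of $\mathrm{ad}_{J_{1}}$ and would need a supplementary argument (e.g.\ that $[J_{2},J_{3}]$ is a polynomial in momenta of bounded degree on which $\mathrm{ad}_{J_{1}}$ acts injectively, or the recursion \eqref{integralsrecursive}). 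You instead invoke the earlier Proposition that every $J_{k}$ commutes with $\widehat{H}$ (via the sum-to-zero condition \eqref{sumToZero}), which, since $J_{2}=\widehat{H}$, gives $[J_{2},J_{3}]=0$ directly and avoids that gap; your fallback of verifying $[\widehat{H},J_{3}]=0$ symbolically is also sound, provided you actually carry out the cancellation of the order-correction terms $2i\widehat{q}_{rs}^{(3)}$ that you correctly flag as the crux. You also explicitly record algebraic independence via the leading momentum symbols, a requirement of the definition that the paper's proof leaves implicit. In short: same skeleton, but your handling of the last commutator is the more defensible of the two.
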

\begin{proof}
We present the first three integrals of motion of the quantum rational Calogero-Moser system, namely
\begin{align*}
J_{1} &= \sum_{r,s=1}^{n} \widehat{L}_{rs} =     \widehat{p}_{1} + ik \widehat{q}_{12}^{(1)} + ik \widehat{q}_{13}^{(1)}+  ik \widehat{q}_{21}^{(1)} +   \widehat{p}_{2} + ik \widehat{q}_{23}^{(1)} +   ik \widehat{q}_{31}^{(1)} + ik \widehat{q}_{32}^{(1)} +  \widehat{p}_{3} = \sum_{r=1}^{3} \widehat{p}_{r},
\end{align*}
where we have used $ \widehat{q}_{rs}^{(1)} =  \widehat{q}_{sr}^{(1)}$ to cancel the terms with coefficients $ik$. In order to compute the second integral of motion $J_{2}$ we derive the explicit form of $\widehat{L}^{2}$ as
\begin{align*}
&\widehat{L}^{2} =  \begin{bmatrix}
  \widehat{p}_{1} & ik \widehat{q}_{12}^{(1)} & ik \widehat{q}_{13}^{(1)} \\
  ik \widehat{q}_{21}^{(1)} &   \widehat{p}_{2} & ik \widehat{q}_{23}^{(1)} \\
  ik \widehat{q}_{31}^{(1)} & ik \widehat{q}_{32}^{(1)} &   \widehat{p}_{3}
 \end{bmatrix} \begin{bmatrix}
  \widehat{p}_{1} & ik \widehat{q}_{12}^{(1)} & ik \widehat{q}_{13}^{(1)} \\
  ik \widehat{q}_{21}^{(1)} &   \widehat{p}_{2} & ik \widehat{q}_{23}^{(1)} \\
  ik \widehat{q}_{31}^{(1)} & ik \widehat{q}_{32}^{(1)} &   \widehat{p}_{3}
 \end{bmatrix}\\ & \\
&=\begin{bmatrix*}
\widehat{p}_{1}^{2}+k^{2} \widehat{q}_{1213}^{(2)} & ik\widehat{p}_{1} \widehat{q}_{12}^{(1)}+ik \widehat{q}_{12}^{(1)}\widehat{p}_{2}-k^{2} \widehat{q}_{1332}^{(1)} & ik\widehat{p}_{1} \widehat{q}_{13}^{(1)}-k^{2} \widehat{q}_{1223}^{(1)}+ik \widehat{q}_{13}^{(1)}\widehat{p}_{3} \\ && \\
ik \widehat{q}_{21}^{(1)}\widehat{p}_{1}+ik\widehat{p}_{2} \widehat{q}_{21}^{(1)} - k^{2} \widehat{q}_{2331}^{(1)} & k^{2} \widehat{q}_{21}^{(2)}+\widehat{p}_{2}^{2}+k^{2} \widehat{q}_{23}^{(2)} & -k^{2} \widehat{q}_{2113}^{(1)}+ik\widehat{p}_{2} \widehat{q}_{23}^{(1)}+ik \widehat{q}_{23}^{(1)}\widehat{p}_{3} \\ && \\
ik \widehat{q}_{31}^{(1)}\widehat{p}_{1}-k^{2} \widehat{q}_{3221}^{(1)}+ik\widehat{p}_{3} \widehat{q}_{31}^{(1)} & -k^{2} \widehat{q}_{3112}^{(1)}+ik \widehat{q}_{32}^{(1)}\widehat{p}_{2}+ik\widehat{p}_{3} \widehat{q}_{32}^{(1)} & k^{2} \widehat{q}_{3132}^{(2)}+\widehat{p}_{3}^{2}
\end{bmatrix*}.
\end{align*}
To simplify the calculations, we take advantage of decomposing the sum via
\begin{align*}
J_{m} = \dfrac{1}{m} \sum_{r=1}^{n} \widehat{L}_{rr}^{m} + \dfrac{1}{m} \sum_{\substack{r,s=1 \\ s \neq r}}^{n} \widehat{L}^{m}_{rs},
\end{align*}
as in \cite{Ujino}. Summing the diagonal elements of the resulting matrix using the above we get
\begin{align}\label{quantumLaxIntegral21}
\dfrac{1}{2} \sum_{r=1}^{3} \widehat{L}^{2}_{rr} = \dfrac{1}{2}( \widehat{p}_{1}^{2} + \widehat{p}_{2}^{2} + \widehat{p}_{3}^{2} )  +\dfrac{1}{2} ( 2 k^{2}  \widehat{q}_{12}^{(2)} + 2 k^{2}  \widehat{q}_{23}^{(2)} + 2 k^{2}  \widehat{q}_{13}^{(2)} ) = \dfrac{1}{2} \sum_{r=1}^{3} p_{r}^{2} + \sum_{\substack{s =1 \\ s \neq r }}^{3} k^{2}  \widehat{q}_{rs}^{(2)} ,
\end{align}
where we have used $ \widehat{q}_{rs}^{(2)} = -  \widehat{q}_{sr}^{(2)}$. Now summing the off-diagonal elements we get
\begin{align*}
 \dfrac{1}{2} \sum_{\substack{r,s=1 \\ s \neq r}}^{3} \widehat{L}^{2}_{rs}= & ik \left( [\widehat{p}_{1},  \widehat{q}_{12}^{(1)} ] +[ \widehat{p}_{2},  \widehat{q}_{21}^{(1)} ] + [ \widehat{p}_{1},  \widehat{q}_{13}^{(1)} ] + [\widehat{p}_{3},  \widehat{q}_{31}^{(1)}] + [\widehat{p}_{2},  \widehat{q}_{23}^{(1)} ] + [\widehat{p}_{3},  \widehat{q}_{32}^{(1)}] \right)  \\
& \quad - k^{2} \left(  \widehat{q}_{13}^{(1)}+  \widehat{q}_{32}^{(1)} +  \widehat{q}_{12}^{(1)} +  \widehat{q}_{23}^{(1)} +  \widehat{q}_{23}^{(1)} +  \widehat{q}_{31}^{(1)} +  \widehat{q}_{13}^{(1)} +  \widehat{q}_{32}^{(1)} +  \widehat{q}_{21}^{(1)} +  \widehat{q}_{31}^{(1)} +  \widehat{q}_{12}^{(1)}  \right),
\end{align*}
since we can rewrite elements involving multiplication by the quantum momenta operators $\widehat{p}_r$ as
\begin{align*}
ik \widehat{p}_{r}  \widehat{q}_{rs}^{(1)} + ik  \widehat{q}_{sr}^{(1)} \widehat{p}_{r} =ik\widehat{p}_{r}  \widehat{q}_{rs}^{(1)} -ik  \widehat{q}_{rs}^{(1)}  \widehat{p}_{r}  = ik [\widehat{p}_{r}, \widehat{q}_{rs}^{(1)}].
\end{align*}
Using the fact that $ \widehat{q}_{rs}^{(1)} =  \widehat{q}_{sr}^{(1)}$ we cancel all terms with coefficient $k^{2}$ to obtain
\begin{align*}
 \dfrac{1}{2} \sum_{\substack{r,s=1 \\ s \neq r}}^{3} \widehat{L}^{2}_{rs}= & ik \left( [\widehat{p}_{1},  \widehat{q}_{12}^{(1)} ] +[ \widehat{p}_{2},  \widehat{q}_{21}^{(1)} ] + [ \widehat{p}_{1},  \widehat{q}_{13}^{(1)} ] + [\widehat{p}_{3},  \widehat{q}_{31}^{(1)}] + [\widehat{p}_{2},  \widehat{q}_{23}^{(1)} ] + [\widehat{p}_{3},  \widehat{q}_{32}^{(1)}] \right),
\end{align*}
and upon applying Lemma \ref{keyLemma} we have
\begin{align}\label{quantumLaxIntegral22}
 \dfrac{1}{2} \sum_{\substack{r,s=1 \\ s \neq r}}^{3} \widehat{L}^{2}_{rs}= ik \sum_{\substack{r,s=1 \\ r \neq s}} ( -i  \widehat{q}_{rs}^{(2)} ) .
\end{align}
Then adding Equations \eqref{quantumLaxIntegral21} and \eqref{quantumLaxIntegral22} we get

\begin{align*}
J_{2} &= \sum_{r,s=1}^{3}( \widehat{L}^{2} )_{rs} = \dfrac{1}{2} \sum_{r=1}^{3} \widehat{p}_{r}^{2} + \sum_{\substack{s = 1 \\ s \neq r}}^{3} k(k+1) \widehat{q}_{rs}^{(2)} = \widehat{H}. \\
\end{align*}
In order to compute $\widehat{L}^{3}$ and $J_{3}$ we use the Wolfram Mathematica code in Appendix \ref{appQuantumRational}. The third integral $J_{3}$ is given by
\begin{align*}
J_{3} = \dfrac{1}{3} \sum_{r,s=1}^{3} \left( \widehat{L}^{3} \right)_{rs} = \dfrac{1}{3} \sum_{r=1}^{3}\widehat{p}_{r}^{3} + \dfrac{1}{3}g \sum_{\substack{s=1 \\ s \neq r}}^{3} \left( \widehat{p}_{r}  \widehat{q}_{rs}^{(2)} +  \widehat{q}_{rs}^{(1)}\widehat{p}_{r} \widehat{q}_{rs}^{(1)} +  \widehat{q}_{rs}^{(2)}\widehat{p}_{r} \right),
\end{align*}
where once again we point out the importance of the order of appearance of the quantum momentum operator $\widehat{p}_{r}$. We now check that the first two integrals commute. 
\begin{align*}
[ J_{1}, J_{2} ] &= \left[ (\widehat{p}_{1} + \widehat{p}_{2} + \widehat{p}_{3} ), \widehat{H} \right]= [\widehat{p}_{1}, \widehat{H}] + [\widehat{p}_{2}, \widehat{H} ] + [\widehat{p}_{3}, \widehat{H} ].
\end{align*}
We shall illustrate what happens to each commutator separately. By Lemma \ref{keyLemma} we have
\begin{align*}
[\widehat{p}_{1}, \widehat{H} ] = 2ig \left( - \widehat{q}_{12}^{(3)}- \widehat{q}_{13}^{(3)} \right), \quad [\widehat{p}_{2}, \widehat{H} ] = 2ig (-\widehat{q}_{21}^{(3)} -  \widehat{q}_{23}^{(3)} ), \quad \text{ and } \quad  [\widehat{p}_{3}, \widehat{H} ] = 2ig (- \widehat{q}_{31}^{(3)} -  \widehat{q}_{32}^{(3)} ).
\end{align*}
Adding all three commutators we get
\begin{align*}
[ J_{1}, J_{2} ] &=2ig  \left( - \widehat{q}_{12}^{(3)}- \widehat{q}_{13}^{(3)}-\widehat{q}_{21}^{(3)} -  \widehat{q}_{23}^{(3)} - \widehat{q}_{31}^{(3)} -  \widehat{q}_{32}^{(3)} \right) = 0,
\end{align*}
which all cancel since $\widehat{q}_{rs}^{-(2k+1)} - \widehat{q}_{sr}^{-(2k+1)} = 0$, $\forall k \in \mathbb{Z}$. Next we check the commutativity of $J_{1}$ and $J_{3}$.
 Writing out the explicit form of $J_{3}$ we have
\begin{align*}
J_{3} = \widehat{p}_{1}^{3} + \widehat{p}_{2}^{3} + \widehat{p}_{3}^{3} +  \dfrac{1}{3} g \begin{cases}\begin{drcases}
\widehat{p}_{1} \widehat{q}_{1213}^{(2)} +  \widehat{q}_{12}^{(1)} \widehat{p}_{2}\widehat{q}_{12}^{(1)} +   \widehat{q}_{13}^{(1)} \widehat{p}_{3}\widehat{q}_{13}^{(1)}  +  \widehat{q}_{12}^{(2)} \widehat{p}_{2} + \widehat{q}_{13}^{(2)} \widehat{p}_{3}  \\ \\
+ \widehat{p}_{2}  \widehat{q}_{2123}^{(2)} +  \widehat{q}_{21}^{(1)} \widehat{p}_{1}\widehat{q}_{21}^{(1)} +   \widehat{q}_{23}^{(1)} \widehat{p}_{3}\widehat{q}_{23}^{(1)}  +  \widehat{q}_{21}^{(2)} \widehat{p}_{1} + \widehat{q}_{23}^{(2)} \widehat{p}_{3} \\ \\
 +\widehat{p}_{3}  \widehat{q}_{3132}^{(2)} +   \widehat{q}_{31}^{(1)} \widehat{p}_{1}\widehat{q}_{31}^{(1)} +   \widehat{q}_{32}^{(1)} \widehat{p}_{2}\widehat{q}_{32}^{(1)}  +  \widehat{q}_{31}^{(2)} \widehat{p}_{1} + \widehat{q}_{32}^{(2)} \widehat{p}_{2}
\end{drcases}
\end{cases}.
\end{align*}
Now applying Lemma \ref{keyLemma} to the middle terms, we get
\begin{align*}
J_{3} = \widehat{p}_{1}^{3} + \widehat{p}_{2}^{3} + \widehat{p}_{3}^{3} +  \dfrac{1}{3} g \begin{cases}\begin{drcases}
\widehat{p}_{1} \widehat{q}_{1213}^{(2)} +  \widehat{q}_{12}^{(1)}\left(\widehat{q}_{12}^{(1)} \widehat{p}_{2}- \cancel{i \widehat{q}_{12}^{(2)}}\right) +  \widehat{q}_{13}^{(1)}\left(\widehat{q}_{13}^{(1)} \widehat{p}_{2}-\bcancel{i \widehat{q}_{13}^{(2)}}\right)  +  \widehat{q}_{12}^{(2)} \widehat{p}_{2} + \widehat{q}_{13}^{(2)} \widehat{p}_{3}   \\ \\
+ \widehat{p}_{2}  \widehat{q}_{2123}^{(2)} +  \widehat{q}_{21}^{(1)}\left(\widehat{q}_{21}^{(1)} \widehat{p}_{1}-\cancel{i \widehat{q}_{21}^{(2)}}\right) +  \widehat{q}_{23}^{(1)}\left(\widehat{q}_{23}^{(1)} \widehat{p}_{3}-\xcancel{i \widehat{q}_{23}^{(2)}}\right) + \widehat{q}_{21}^{(2)} \widehat{p}_{1} + \widehat{q}_{23}^{(2)} \widehat{p}_{3}   \\ \\
 +\widehat{p}_{3}  \widehat{q}_{3132}^{(2)} + \widehat{q}_{31}^{(1)}\left(\widehat{q}_{31}^{(1)} \widehat{p}_{1}-\bcancel{i \widehat{q}_{31}^{(2)}}\right) +  \widehat{q}_{32}^{(1)}\left(\widehat{q}_{32}^{(1)} \widehat{p}_{2}-\xcancel{i \widehat{q}_{32}^{(2)}}\right) + \widehat{q}_{31}^{(2)} \widehat{p}_{1} + \widehat{q}_{32}^{(2)} \widehat{p}_{2}
\end{drcases}
\end{cases},
\end{align*}
where we have canceled the respectively crossed terms via $\widehat{q}_{rs}^{-(2k+1)} - \widehat{q}_{sr}^{-(2k+1)} = 0$, $\forall k \in \mathbb{Z}$. Hence, we rewrite $J_{3}$ as follows
\begin{align*}
J_{3} & = \widehat{p}_{1}^{3} + \widehat{p}_{2}^{3} + \widehat{p}_{3}^{3} +  \dfrac{1}{3} g \begin{cases}\begin{drcases}
\widehat{p}_{1} \widehat{q}_{1213}^{(2)} + 2\widehat{q}_{12}^{(2)} \widehat{p}_{2} + 2 \widehat{q}_{13}^{(2)} \widehat{p}_{3}  \\ \\
+ \widehat{p}_{2}  \widehat{q}_{2123}^{(2)} +  2\widehat{q}_{21}^{(2)} \widehat{p}_{1} + 2\widehat{q}_{23}^{(2)} \widehat{p}_{3}  \\ \\
 +\widehat{p}_{3}  \widehat{q}_{3132}^{(2)} +  2\widehat{q}_{31}^{(2)} \widehat{p}_{1} + 2\widehat{q}_{32}^{(2)} \widehat{p}_{2}
\end{drcases}
\end{cases} \\
&= \sum_{r=1}^{3} \left( \widehat{p}_{r}^{3} + \dfrac{1}{3} g \sum_{\substack{s=1 \\ s \neq r}}^{3} \left( \widehat{p}_{r} \widehat{q}_{rs}^{(2)} + 2\widehat{q}_{rs}^{(2)} \widehat{p}_{s} \right) \right).
\end{align*}
Next we consider the commutator of $J_{1}$ and $J_{3}$. Note that all terms involving cubed momenta $\widehat{p}_{r}^{3}$ for $r=1,2,3$ commute with $\widehat{p}_{s}$ for $s=1,2,3$ due to properties of ordinary partial derivatives. Hence, we proceed by analyzing the commutator of $J_{1}$ and the remaining terms by splitting them in two cases as follows.
\begin{align*}
[ J_{1}, J_{3} ] = \left[ \sum_{r=1}^{3} \widehat{p}_{r}, \widehat{p}_{s} \widehat{q}_{st}^{(2)} \right] +  \left[ \sum_{r=1}^{3} \widehat{p}_{r}, 2\widehat{q}_{st}^{(2)} \widehat{p}_{t} \right] = \left[ (\widehat{p}_{s} + \widehat{p}_{t} ), \widehat{p}_{s} \widehat{q}_{st}^{(2)} \right] + \left[ (\widehat{p}_{s} + \widehat{p}_{t} ),2\widehat{q}_{st}^{(2)} \widehat{p}_{t} \right],
\end{align*}
since the commutator $[ \widehat{p}_{r}, f(\widehat{q}_{st})] =0$ for any $f(\widehat{q}_{st})$, $r \neq s,t$. Dealing with each expression separately we have
\begin{align*}
 \left[ (\widehat{p}_{s} + \widehat{p}_{t} ), \widehat{p}_{s} \widehat{q}_{st}^{(2)} \right]  &= (\widehat{p}_{s} + \widehat{p}_{t}) \widehat{p}_{s} \widehat{q}_{st}^{(2)} - \widehat{p}_{s} \widehat{q}_{st}^{(2)} ( \widehat{p}_{s} + \widehat{p}_{t}) = \widehat{p}_{s} (\widehat{p}_{s} + \widehat{p}_{t}) \widehat{q}_{st}^{(2)} - \widehat{p}_{s} \widehat{q}_{st}^{(2)} ( \widehat{p}_{s} + \widehat{p}_{t})  \\
&= \widehat{p}_{s} \left( (\widehat{p}_{s}+\widehat{p}_{t})\widehat{q}_{st}^{(2)} - \widehat{q}_{st}^{(2)} ( \widehat{p}_{s}+\widehat{p}_{t})  \right) = \widehat{p}_{s} \left( -i \left( -2\widehat{q}_{st}^{(3)} - (-2) \widehat{q}_{st}^{(3)} \right) \right) = 0,
\end{align*}
where we have switched the order of multiplication by $(\widehat{p}_{s} + \widehat{p}_{t} )$ and $\widehat{p}_{s} $ due to properties of ordinary differentials and then applied Lemma \ref{keyLemma}. Similarly, for the second commutator we get
\begin{align*}
 \left[ (\widehat{p}_{s} + \widehat{p}_{t} ),2\widehat{q}_{st}^{(2)} \widehat{p}_{t} \right] &= (\widehat{p}_{s} + \widehat{p}_{t} )2\widehat{q}_{st}^{(2)} \widehat{p}_{t} - 2\widehat{q}_{st}^{(2)} \widehat{p}_{t}  (\widehat{p}_{s} + \widehat{p}_{t} ) =  (\widehat{p}_{s} + \widehat{p}_{t} )2\widehat{q}_{st}^{(2)} \widehat{p}_{t} - 2\widehat{q}_{st}^{(2)} (\widehat{p}_{s} + \widehat{p}_{t} ) \widehat{p}_{t} \\
&= \left(  (\widehat{p}_{s} + \widehat{p}_{t} )2\widehat{q}_{st}^{(2)} - 2\widehat{q}_{st}^{(2)} (\widehat{p}_{s} + \widehat{p}_{t} )\right) \widehat{p}_{t} = \left( 2 (-2) \widehat{q}_{st}^{(3)} - 2 (-2)(-1)\widehat{q}_{st}^{(3)} \right) \widehat{p}_{t} = 0,
\end{align*}
again since we switched the order of multiplication by $(\widehat{p}_{s}+\widehat{p}_{t})$ and $\widehat{p}_{t}$ and applied Lemma \ref{keyLemma}. Thus, we get that
\begin{align*}
[ J_{1}, J_{3} ] = \left[ (\widehat{p}_{s} + \widehat{p}_{t} ), \widehat{p}_{s} \widehat{q}_{st}^{(2)} \right] + \left[ (\widehat{p}_{s} + \widehat{p}_{t} ),2\widehat{q}_{st}^{(2)} \widehat{p}_{t} \right] = 0 + 0 = 0.
\end{align*}
Using the Jacobi identity and the fact that $J_{1}$ commutes with $J_{2}$ and $J_{3}$ we have
\begin{align*}
\left[ J_{1}, \left[ J_{2}, J_{3} \right] \right] = - \left[ J_{2}, \left[ J_{3}, J_{1} \right] \right] - \left[ J_{3}, \left[ J_{1}, J_{2} \right] \right] = 0 \implies \left[ J_{2}, J_{3} \right] =0,
\end{align*}
which in turn shows that all three integrals of motion of the quantum rational Calogero-Moser commute and thus, the system is integrable. This statement can be generalized in the $n$-dimensional case via induction, see \cite{Wadati}.
\end{proof}

\section{Coxeter and Weyl Groups.}\label{sec:coxeter}
\indent
\par
In this section we study finite groups generated by reflections and a special type of vectors, which are orthogonal to the hyperplanes we reflect about. These vectors are known as \emph{roots} and provide an efficient set of generators for the reflection group, which in turn enables us to identify the group as a \emph{Coxeter group}. Here we refer to \cite{Humphreys}, unless stated otherwise.
\par
\subsection{Reflection Groups and Root Systems.}
Let $V$ be an $n$-dimensional real vector space with the usual inner product $\langle \cdot, \cdot \rangle$ and orthonormal basis $e_{1}, \dots, e_{n}$. Let $\alpha\in V$, $\alpha \neq 0$. Let $H^{\alpha}$ denote the hyperplane orthogonal to $\alpha$ as shown below in Figure \ref{fig:orthogonalReflection}.
\begin{figure}[H]
\begin{center}
\begin{tikzpicture}[plane/.style={trapezium,draw,fill=black!20,trapezium left angle=60,trapezium right angle=120,minimum height=2cm},scale=0.7]
\node (p)[plane] at (0,0){.};
\node at (1,0){$H^{\alpha}$};
\node at (-0.5,0.5)  {$\alpha$};
\draw (p.center) edge ++(0,2cm) edge[densely dashed]  (p.south) (p.south) edge ++(0,-1cm);
\end{tikzpicture}
\end{center}\caption{A vector $\alpha \in V$ and its orthogonal hyperplane $H^{\alpha}$.}\label{fig:orthogonalReflection}
\end{figure}
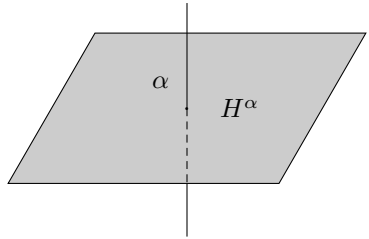
\begin{definition}\label{definitionOrthogonalReflection}
The \emph{orthogonal reflection} $\sigma_{\alpha}$ of a vector $\lambda \in V$ about $H^{\alpha}=\langle \alpha, x \rangle = 0$, where $x = (x_{1}, \dots, x_{n})$, is given by the formula
\begin{align*}
\sigma_{\alpha} \lambda =  \lambda - \dfrac{2 \langle \lambda, \alpha \rangle}{\langle \alpha, \alpha \rangle} \alpha.
\end{align*}
\end{definition}
\begin{figure}[H]
\begin{center}
\begin{tikzpicture}[plane/.style={trapezium,draw,fill=black!20,trapezium left angle=60,trapezium right angle=120,minimum height=3cm},scale=0.7]
\node (p)[plane] at (0,0){.};
\node at (1,0){$H^{\alpha}$};
\node at (-0.5,0.5)  {$\alpha$};
\draw (p.center) edge ++(0,3cm) edge[densely dashed]  (p.south) (p.south) edge ++(0,-1cm);
\draw (p.center) edge ++(2cm,2cm) edge (p.north)  edge[densely dashed] ++(2cm,-2cm);
\node at (2,2) [below] {$\lambda$};
\node at (2.3,-2) [above] {$\sigma_{\alpha} \lambda$};
\end{tikzpicture}
\end{center}\caption{A vector $\lambda \in V$ and its orthogonal reflection $\sigma_{\alpha} \lambda$ about $H^{\alpha}$.}\label{fig:orthogonalReflectionVector}
\end{figure}
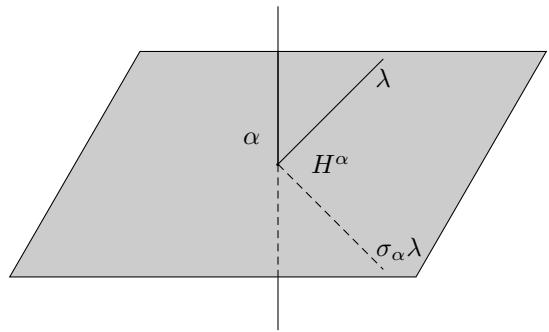

\par The setting of Definition \ref{definitionOrthogonalReflection} is illustrated in Figure \ref{fig:orthogonalReflectionVector}. We clearly have $\sigma_{\alpha} (\alpha) = -\alpha$ and $\sigma_{\alpha} (\lambda) = \lambda$ if $\langle \lambda, \alpha \rangle = 0$. Clearly $\sigma_{\alpha} \in O(V)$, the group of all orthogonal transformations and $\sigma_{\alpha}^{2} = \id$, where $\id$ is the identity transformation in $O(V)$. As stated in the beginning of this section, reflections generate groups. We formalize this statement in the following definition.

\begin{definition}
A \emph{finite reflection group} $W$ is a subgroup of $O(V)$ which is generated by a set of orthogonal reflections.
\end{definition}

\par In order to understand the properties and the structure of a finite reflection group $W$, we examine its actions on the euclidean space $V$. As per Definition \ref{definitionOrthogonalReflection}, each reflection fixes the hyperplane $H^{\alpha}$ orthogonal to the vector $\alpha$. Let $L_{\alpha} = \mathbb{R}\alpha$ denote the line orthogonal to $H^{\alpha}$. We now show that $W$ permutes the collection of all such lines.

\begin{proposition}\label{propgroupaction}
Let $g \in O(V)$, let $\alpha \in V$, $\alpha \neq 0$ and let $\sigma_{\alpha}$ be as in Definition \ref{definitionOrthogonalReflection}. Then 
\begin{equation*}
g \sigma_{\alpha} g^{-1} = \sigma_{g \alpha}.
\end{equation*}
In particular, if $ w \in W$ and $\sigma_{\alpha} \in W$, then $w \sigma_{\alpha} w^{-1} = \sigma_{w \alpha} \in W$ .
\end{proposition}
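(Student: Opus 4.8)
The plan is to prove the identity $g\sigma_{\alpha}g^{-1} = \sigma_{g\alpha}$ by showing that these two linear operators agree on every vector $\mu \in V$; since both sides are linear maps on a finite-dimensional space, pointwise agreement suffices for equality. Before doing so I would observe that $g\alpha \neq 0$: because $g \in O(V)$ is invertible and $\alpha \neq 0$, its image $g\alpha$ is nonzero, so that $\sigma_{g\alpha}$ is well-defined in the sense of Definition \ref{definitionOrthogonalReflection} (the denominator $\langle g\alpha, g\alpha\rangle$ does not vanish).

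Next I would evaluate the left-hand side on an arbitrary $\mu \in V$ by applying $g^{-1}$, then the reflection formula of Definition \ref{definitionOrthogonalReflection}, then $g$, using linearity of $g$ to pull it through the subtraction. This gives
\[
g\sigma_{\alpha}g^{-1}(\mu) = \mu - \dfrac{2\langle g^{-1}\mu, \alpha\rangle}{\langle \alpha, \alpha\rangle}\, g\alpha.
\]
The key step is then to rewrite both inner products using the defining property of $O(V)$, namely that $g$ preserves $\langle\cdot,\cdot\rangle$. Applying $\langle gx, gy\rangle = \langle x, y\rangle$ yields $\langle g^{-1}\mu, \alpha\rangle = \langle \mu, g\alpha\rangle$ and $\langle \alpha, \alpha\rangle = \langle g\alpha, g\alpha\rangle$. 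Substituting these into the expression above produces precisely the formula for $\sigma_{g\alpha}(\mu)$, so the two operators coincide on all of $V$.

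Finally, the ``in particular'' claim requires no further computation: if $w \in W$ and $\sigma_{\alpha} \in W$, then $w\sigma_{\alpha}w^{-1} \in W$ because $W$ is a group and hence closed under products and inverses, while the first part identifies this element with $\sigma_{w\alpha}$. I do not anticipate a genuine obstacle, as the argument is a direct computation; the only point demanding care is the bookkeeping of which slot $g$ is moved through in the inner product, since the orthogonality relation must be applied in the correct direction to convert $\langle g^{-1}\mu, \alpha\rangle$ into $\langle \mu, g\alpha\rangle$.
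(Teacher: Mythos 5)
Your argument is correct, but it takes a different route from the one the paper uses for this proposition. You prove the identity by evaluating $g\sigma_{\alpha}g^{-1}$ on an arbitrary $\mu\in V$ via the explicit formula of Definition \ref{definitionOrthogonalReflection} and then using orthogonality of $g$ to convert $\langle g^{-1}\mu,\alpha\rangle$ into $\langle\mu,g\alpha\rangle$ and $\langle\alpha,\alpha\rangle$ into $\langle g\alpha,g\alpha\rangle$; this is a pointwise verification that the two linear maps coincide. The paper instead characterizes the reflection $\sigma_{g\alpha}$ by its two defining properties --- it sends $g\alpha$ to $-g\alpha$ and fixes the hyperplane $H^{g\alpha}$ pointwise --- and checks that $g\sigma_{\alpha}g^{-1}$ has both properties, using the invariance $\langle g\lambda,g\alpha\rangle=\langle\lambda,\alpha\rangle$ to identify $H^{g\alpha}$ with $g(H^{\alpha})$. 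The paper's approach is slightly more conceptual but implicitly relies on the uniqueness of an orthogonal reflection with a prescribed mirror, and its final display even invokes the identity being proved when it writes $\sigma_{g\alpha}g\lambda=g\sigma_{\alpha}g^{-1}g\lambda$; your direct computation avoids both of these points and is essentially the calculation the paper later records separately as Lemma \ref{conjugateReflection}. Your handling of the ``in particular'' clause (closure of the group $W$ under products and inverses) matches what the paper intends.
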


\begin{proof}
Consider the setting in Figure \ref{fig:orthogonalReflectionVector}. It is clear that $g \sigma_{\alpha} g^{-1}$ maps $g\alpha$ to $-g \alpha$, since $g \sigma_{\alpha} g^{-1} g \alpha = g \sigma_{\alpha} \alpha = g (- \alpha) = - \alpha g$. Hence, we just need to make sure that $\sigma_{g \alpha}$ leaves $H^{g \alpha}$ fixed pointwise. Since $g \in O(V)$, we have
\begin{equation*}
\langle g \lambda, g \alpha \rangle = \langle \lambda, \alpha \rangle, \quad \forall \lambda, \alpha \in V,
\end{equation*}
which means that $\lambda \in H^{\alpha}$ if and only if $g \lambda \in H^{g \alpha}$. Conversely,
\begin{align*}
\sigma_{g \alpha} g \lambda = g \sigma_{\alpha} g^{-1} g \lambda = g \sigma_{\alpha} \lambda = g \lambda.
\end{align*}
when $\lambda \in H^{\alpha}$.
\end{proof}

\par Hence, we see that $W$ acts on the lines $L_{\alpha}$ by $w ( L_{\alpha} ) = L_{w \alpha}$, which indicates that $W$ determines the lines $L_{\alpha}$, but not the vectors $\alpha$. The stability under $W$ is an important property, which leads us to identifying the special type of vectors we mentioned in the beginning.  
\begin{definition}
Let $R$ be a finite set of non-zero vectors $R \subset V \setminus \{ 0 \}$ satisfying
\begin{equation*}
 \forall \alpha \in R, R \cap \mathbb{R}\alpha = \{ \alpha, -\alpha \}  \text{ and }  \sigma_{\alpha} (R) = R,
\end{equation*}
 Let $W = \langle \sigma_{\alpha}, \alpha \in R \rangle$. Then $R$ is called a \emph{Coxeter root system} associated with the finite reflection group $W$. The elements of $R$ are called \emph{roots}.
\end{definition}
\par In other words, we require $R$ to be closed under reflections and we only allow the scalar multiples of a root $\alpha$ to be $\pm \alpha$. Below we introduce a more specific type of root systems with an extra condition.
\begin{definition}\label{normalizedrootsystem}
A root system $R$ is said to be \emph{normalized} if $\forall \alpha \in R$, $\langle \alpha, \alpha \rangle = 2$, such that $\sigma_{\alpha}(\beta) \in R$, $\forall \alpha, \beta \in R$.
\end{definition}
\begin{example}\label{normRootSysEx1}
Let $V=\mathbb{R}^{n}$ with standard basis $e_{1}, \dots, e_{n}$ and standard inner product $\langle e_{i}, e_{j} \rangle = \delta_{ij}$, where $\delta$ is the Kronecker delta. The set $R = \{ \pm (e_{i}-e_{j} ) \lvert 1 \leq i < j \leq n\}$ is a normalized root system in V. The associated group $W$ is the symmetric group $\mathcal{S}_{n}$. The reflection in the root $e_{i}-e_{j}$ interchanges the two basis vectors, thus, yielding $e_{j}-e_{i}$. The remaining roots are left fixed. 
\end{example}
\begin{example}
Define the map $z : \mathbb{C} \to \mathbb{R}^{2}$ by $z=x+iy \mapsto (x,y)$. Then the set $R = \{ \sqrt{2} \exp(\pi i j/m) \lvert j = 0,1,\dots, 2m-1\}$ of renormalized $2m^{th}$ roots of unity has associated group of reflections $W = I_{2}(m)$ the dihedral group of order $2m$ with $m$ reflections and $m$ rotations. 
\end{example}

\par
 Next we fix a root system $R \in V$ and a reflection group $W$ generated by the orthogonal reflections $\sigma_{\alpha}$, where $\alpha \in R$, that is, $W = \langle \sigma_{\alpha} , \alpha \in R \rangle $. Note that we have no restrictions on the size of $R$, thus, it might be quite large in comparison with the dimension of the vector space $V$. We proceed by seeking a linearly independent subset of $R$, from which we can recover $R$. Recall the notion of a total ordering of the vector space $V$.
\begin{definition}\label{partialOrdering}
A \emph{total ordering} of the real vector space $V$ is given by the transitive relation $<$, which satisfies the following:
\begin{enumerate}
\item For $\lambda, \mu \in V$, exactly one of the following holds: $\lambda< \mu$, $\lambda = \mu$ or $\mu < \lambda$.
\item For all $\lambda, \mu, \eta \in V$, if $\lambda < \mu$, then $\lambda + \eta < \mu + \eta$.
\item Let $c \in \mathbb{R}$ and $c \neq 0$. If $c>0$ and $\lambda < \mu$, then $c \lambda < c \mu$. If $c<0$, then $c \mu < c \lambda$.
\end{enumerate}
\end{definition}

\par Using the convention given in Definition \ref{partialOrdering}, we say that $\lambda \in V$ is \emph{positive} if $\lambda$ satisfies $0 < \lambda$. We are now ready to define what we mean by a positive root system.

\begin{definition}
A subset $R_{+}$ of $R$, which consists of all positive roots, relative to some total ordering of $V$, is called a \emph{positive root system} 
\end{definition}

\par Furthermore, we introduce the notion of a \emph{simple system} below.

\begin{definition}
A subset $\Delta$ of $R$ is called a \emph{simple system} and its elements are called \emph{simple roots} respectively, if $\Delta$ is a vector space basis for the $\mathbb{R}$-span of $R$ in $V$ and if each $\alpha \in R$ can be expressed as a linear combination of $\Delta$ with coefficients with the same sign. In other words, a positive root $\alpha \in R_{+}$ is called \emph{simple} in $R_{+}$ if $\alpha$ is not of the form $\alpha = x_{1}\alpha_{1}+x_{2}\alpha_{2}$ with $x_{1}, x_{2} \geq 1$ and $\alpha_{1}, \alpha_{2} \in R_{+}$.
\end{definition}

There also exists the notion of the dual of a root and a root system, respectively. We present them both in the definition below.
\begin{definition}\label{corootDef}
The \emph{coroot} of $\alpha \in R$ is given by
\begin{equation*}
\alpha^{\vee} \coloneqq \dfrac{2 \alpha}{\langle \alpha, \alpha \rangle}.
\end{equation*}
The set $R^{\vee}$ of all coroots $\alpha^{\vee}$, $\alpha \in R$ is also a root system in $V$, known as the \emph{dual root system}. 
\end{definition}

\par We are not going into details about the existence of the systems discussed above, but the reader is invited to consult \cite{Humphreys} for more information. We can now conveniently define a presentation of our finite reflection group $W$ in a concise way.

\begin{definition}\label{coxeterGroup}
Let $R$ be a root system and fix a simple system $\Delta \in R$. Then $W$ is generated by the set $S = \{ \sigma_{\alpha}, \alpha \in \Delta \}$ subject only to the relations
\begin{equation*}
( \sigma_{\alpha} \sigma_{\beta} ) ^{m(\alpha, \beta)} = 1, \quad \alpha, \beta \in \Delta,
\end{equation*}
where $m(\alpha, \beta)$ denotes the order of $\sigma_{\alpha} \sigma_{\beta}$ in $W$ for all $\alpha$, $\beta$. A (finite or infinite) group $W$ with such a presentation relative to a generating set $S$ is called a \emph{Coxeter group}. We refer to the pair $(W,S)$ as a \emph{Coxeter system}.
\end{definition}
We now prove a short result about conjugating reflections by an element of the Coxeter group, which will be used a lot in subsequent proofs.
\begin{lemma}\label{conjugateReflection}
Let $g \in W$, let $\sigma_{\alpha}$ be such that $\langle \alpha, q \rangle =0$, and let $\lambda \in V$. Then 
\begin{align*}
g \sigma_{\alpha} g^{-1} = \sigma_{g \alpha} \lambda.
\end{align*} 
\end{lemma}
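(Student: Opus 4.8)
The plan is to recognize that this lemma is nothing more than the operator identity of Proposition \ref{propgroupaction} specialized to $g \in W$ and evaluated on an arbitrary vector $\lambda \in V$; the appearance of $\lambda$ on the right-hand side signals that both sides are to be compared as maps applied to $\lambda$, and the hypothesis written as $\langle \alpha, q\rangle = 0$ should be read simply as $\alpha \neq 0$, so that $\sigma_{\alpha}$ is well-defined. Since $W \subseteq O(V)$, every $g \in W$ is orthogonal, so the result is in principle immediate from Proposition \ref{propgroupaction}. Nonetheless I would give the short direct verification, because the computation makes transparent exactly where orthogonality of $g$ is used.

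First I would set $\mu = g^{-1}\lambda$ and unwind the left-hand side using the explicit reflection formula of Definition \ref{definitionOrthogonalReflection}, together with the linearity of $g$ and the fact that $g\mu = \lambda$:
\[
g \sigma_{\alpha} g^{-1} \lambda = g\left( \mu - \frac{2\langle \mu, \alpha\rangle}{\langle \alpha, \alpha\rangle}\,\alpha\right) = \lambda - \frac{2\langle g^{-1}\lambda, \alpha\rangle}{\langle \alpha, \alpha\rangle}\, g\alpha.
\]

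The key step is then to push $g$ through the inner product. Since $g \in O(V)$ preserves $\langle \cdot, \cdot\rangle$, applying $\langle gx, gy\rangle = \langle x, y\rangle$ with $x = g^{-1}\lambda$, $y = \alpha$ yields $\langle g^{-1}\lambda, \alpha\rangle = \langle \lambda, g\alpha\rangle$, and likewise $\langle \alpha, \alpha\rangle = \langle g\alpha, g\alpha\rangle$. Substituting these gives exactly
\[
g \sigma_{\alpha} g^{-1}\lambda = \lambda - \frac{2\langle \lambda, g\alpha\rangle}{\langle g\alpha, g\alpha\rangle}\, g\alpha = \sigma_{g\alpha}\lambda,
\]
which is the claimed identity; since $\lambda$ was arbitrary, this is the operator identity $g\sigma_{\alpha}g^{-1} = \sigma_{g\alpha}$.

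There is no genuine obstacle here: the entire content is the orthogonality identity $\langle g^{-1}\lambda, \alpha\rangle = \langle \lambda, g\alpha\rangle$, and the only care required is in interpreting the mildly garbled statement (reading $\sigma_{g\alpha}\lambda$ as the reflection $\sigma_{g\alpha}$ evaluated at $\lambda$, and the hypothesis as $\alpha \neq 0$). I would close by noting that $g\alpha$ is again a root whenever $\alpha \in R$ and $g \in W$, by stability of $R$ under $W$, so that $\sigma_{g\alpha} \in W$ as well, which is the form in which this lemma will be applied later.
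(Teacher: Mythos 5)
Your proof is correct and follows essentially the same route as the paper's: both expand $g\sigma_{\alpha}g^{-1}\lambda$ via the explicit reflection formula and use orthogonality of $g$ to convert $\langle g^{-1}\lambda,\alpha\rangle$ into $\langle\lambda,g\alpha\rangle$ and $\langle\alpha,\alpha\rangle$ into $\langle g\alpha,g\alpha\rangle$. Your added remarks on interpreting the garbled statement and on $g\alpha$ remaining a root are sensible but not part of the paper's argument.
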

\begin{proof}
Writing out the action of $\sigma_{\alpha}$ on the vector $\lambda$ explicitly we get
\begin{align*}
\sigma_{\alpha} \lambda = \lambda - \dfrac{2 \langle \lambda, \alpha \rangle}{\langle \alpha, \alpha \rangle} \alpha.
\end{align*}
Then for $g \in W$, we have
\begin{align*}
g \sigma_{\alpha} g^{-1} \lambda &= g \left( g^{-1} (\lambda) - \dfrac{2 \langle g^{-1} (\lambda), \alpha \rangle}{ \langle \alpha, \alpha \rangle} \alpha \right) \\
&= \lambda - \dfrac{2 \langle \lambda, g(\alpha) \rangle}{ \langle g(\alpha), g(\alpha) \rangle} g(\alpha) \\
&= \sigma_{g \alpha} \lambda.
\end{align*}
\end{proof}
\subsection{Coxeter Graphs and Dynkin Diagrams. Types of Root Systems.}
\par Our next aim is to introduce a way of graphically illustrating finite reflection groups. Then we introduce an extra condition, referred to as the \emph{crystallographic condition}, which enables the study of reflection groups in the frame of Lie theory, where they arise as \emph{Weyl groups}.
\par
According to Definition \ref{coxeterGroup}, the presentation of a group $W$ is determined by the set of integers $m(\alpha, \beta)$ up to an isomorphism for $\alpha, \beta \in \Delta$ . A useful way of illustrating this setting is via a \emph{Coxeter graph}.
\begin{definition}\label{coxeterGraph}
Let $W$ be a Coxeter group with presentation as in Definition \ref{coxeterGroup}. A graph $\Gamma$ with vertices in bijection with $\Delta$ and edges joining vertices $\alpha \neq \beta$ for $m(\alpha,\beta) \geq 3$ is called a \emph{Coxeter graph} of $W$. If a pair of vertices are not joined by an edge, then $m(\alpha, \beta) = 2$. If the Coxeter graph $\Gamma$ is connected, we say that the Coxeter system $(W,S)$ is \emph{irreducible}.
\end{definition}
\begin{example}
Let $\mathcal{I}_{2}(m)$ be the dihedral group of order $2m$, consisting of $m$ rotations via multiples of $2 \pi /m$ and $m$ reflections about the diagonal elements of the polygon. Further, let $\mathcal{S}_{n+1}$ be the symmetric group on $n+1$ elements. Then the Coxeter graphs of $\mathcal{I}_{2}(m)$ and $\mathcal{S}_{n+1}$ are given by
\begin{center}
  \begin{tikzpicture}[scale=.4]
    \draw (-2,0) node  {$\mathcal{I}_{2}(m)$};
	\node[circle,fill=black,draw, minimum size=.1cm] (A) at  (0,0) {};
	\node[circle,fill=black,draw, minimum size=.1cm] (B) at  (3,0) {};
	\draw (A) -- (B);
\draw (1.5, 1) node {$m$};
\end{tikzpicture}\quad \text{ and } \quad
\begin{tikzpicture}[scale=.4]
	\draw (7, 0) node {$\mathcal{S}_{n+1}$};
\node[circle, fill=black, draw, minimum size=.2cm] (C) at (9,0) {};
\node[circle, fill=black,draw, minimum size=.2cm] (D) at (12,0) {};
\node[circle,fill=black, draw, minimum size=.2cm] (E) at (17,0) {};
\node[circle, fill=black,draw, minimum size=.2cm] (F) at (20,0) {};
\draw (C)--(D);
\draw (E) -- (F);
\draw[dotted] (D)--(E);
\draw (10.5,1) node {$3$};
\draw (18.5,1) node {$3$};
  \end{tikzpicture},
\end{center}
respectively, since the Coxeter graph of $\mathcal{S}_{n+1}$ has $n$ vertices.
\end{example}

\par We now wish to extend this idea one step further.
\begin{definition}
 Let $W$ be a general Coxeter group, rather than a finite reflection one. Define a \emph{general Coxeter graph} to be a finite undirected graph, with edges labelled with $z \geq 3$, $z \in \mathbb{Z}$ or with the symbol $\infty$. Let $S$ denote the set of vertices. Let $s, s' \in S$ be two distinct vertices and let $m(s,s')$ denote the label on the edge joining $s$ and $s'$. If $m(s,s')=3$, we shall not write it due to its frequent occurrence. Further, if two vertices $s \neq s'$ are not joined with an edge, then $m(s,s')=2$. Lastly, we set $m(s,s)=1$.
\end{definition}
\par Coxeter graphs can be associated to bilinear forms using symmetric matrices as follows.
\begin{definition}\label{bilinearForm}
A Coxeter graph $\Gamma$ with vertex set $S$ of cardinality $n$ is associated with a symmetric $n \times n$ matrix $A$ by setting
\begin{align}\label{matrixElementGraph}
a(s,s') \coloneqq - \cos \dfrac{\pi}{m(s,s')}.
\end{align}
\end{definition}
\begin{remark}
The formulation in Definition \ref{bilinearForm} allows for the classification of Coxeter graphs in terms of positive/negative (semi-)definiteness. The general idea is that given a basis $\alpha_{1}, \dots, \alpha_{n}$ of simple roots in $R_{+}$, the matrix $M=(m_{ij})$ defined by $(\alpha_{i}, \alpha_{j}) = -2 \cos(\pi/m_{ij})$ which is known as the \emph{Coxeter matrix} defines a bilinear form, which in turn is positive/negative (semi-)definite. In this exposition we only treat positive definite Coxeter graphs. Further details can be found in \cite{Humphreys}. We shall only use Equation \eqref{matrixElementGraph} to obtain restrictions on the values of $m(s,s')$ in the special type of group studied below.
\end{remark}
\par The study of low-dimensional crystal structures heavily relies on symmetry groups \cite{Humphreys}. Hence, the special type of groups we present shortly are referred to as \emph{crystallographic}. Before doing so, we introduce the notion of a \emph{lattice}, which is the subset of $V$ the latter groups act on.

\begin{definition}\label{lattice}
Let $V = \mathbb{R}^{n}$ with orthonormal basis $e_{1}, \dots, e_{n}$. A \emph{lattice} $\Lambda$ is a subset of $V$ given by
\begin{equation*}
\Lambda = \left\{ \sum_{j=1}^{n} z_{j} e_{j} | z_{j} \in \mathbb{Z} \right\}.
\end{equation*}
\end{definition}
A \emph{crystallographic group} is one which stabilizes a lattice $\Lambda$ in $V$, formally defined as follows.
\begin{definition}\label{crystallographicGroup}
A subgroup $G$ of the general linear group $GL(V)$ is called \emph{crystallographic} if $g \Lambda \subset \Lambda$, $\forall g \in G$.
\end{definition}
\par The essential difference between a normalized root system and a crystallographic one is that the roots need not have norm $2$, but instead we demand that for any $\alpha, \beta \in R$, $\langle \beta, \alpha^{\vee} \rangle \in \mathbb{Z}$. Let $\theta$ be the angle between any two roots $\alpha$ and $\beta$ in $R$. Then we have
\begin{align*}
\langle \beta, \alpha^{\vee} \rangle  \langle \alpha, \beta^{\vee} \rangle = \dfrac{2 \langle \alpha, \beta \rangle^{2} }{\langle \alpha, \alpha \rangle \langle \beta, \beta \rangle} = 4 \cos^{2} \theta \in \mathbb{Z},
\end{align*}
which means that $\theta = \left( (\pi /2), (\pi /3 \text{ or } 2 \pi /3 ), (\pi /4 \text{ or } 3 \pi /4), (\pi /6 \text{ or } 5 \pi /6), (0 \text{ or } \pi) \right)$, which correspond to $ 4 \cos \theta = 0,1,2,3,4$ respectively. These values of $\theta$ are known as the \emph{crystallographic angles}. There are just $4$ root systems in rank two. They are displayed below in Figure \ref{fig:rank2rootsystems}.
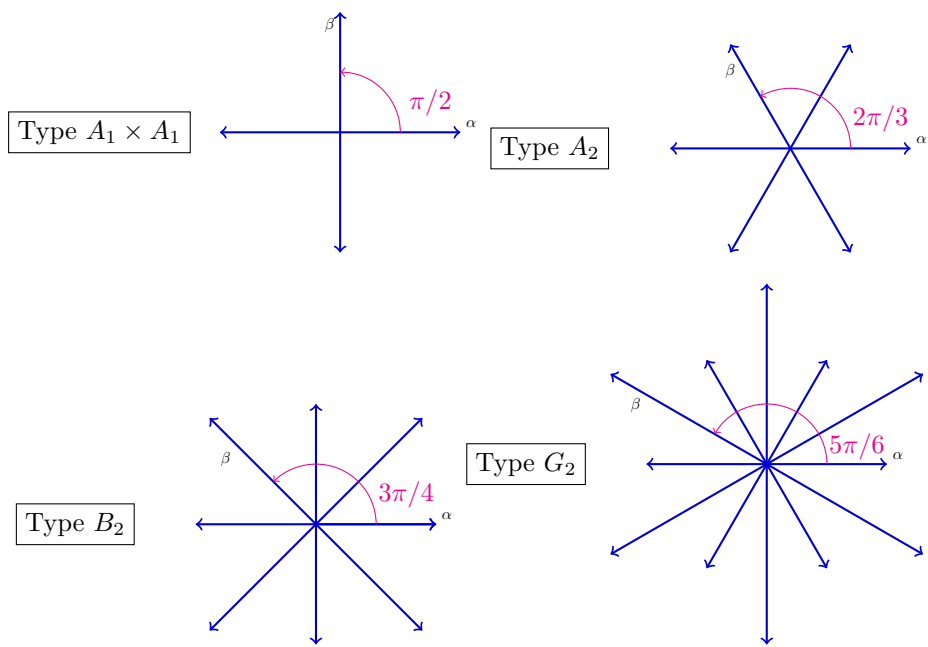
\begin{figure}[H]
\begin{center}
\begin{tikzpicture}[scale=0.8]
    \foreach\ang in {90,180,...,360}{
     \draw[->,blue!80!black,thick] (0,0) -- (\ang:2cm);
    }
    \draw[magenta,->](1,0) arc(0:90:1cm)node at (1.5,0.5){$\pi/2$};
    \node[anchor=south west,scale=0.6] at (2,0) {$\alpha$};
\node[anchor= north east, scale = 0.6] at (0,2) {$\beta$};
	\node[rectangle,draw] at (-4,0) {Type $A_{1} \times A_{1}$};
\end{tikzpicture}
\begin{tikzpicture}[scale=0.8]
    \foreach\ang in {60,120,...,360}{
     \draw[->,blue!80!black,thick] (0,0) -- (\ang:2cm);
    }
    \draw[magenta,->](1,0) arc(0:120:1cm)node at (1.5,0.5){$2\pi/3$};
    \node[anchor=south west,scale=0.6] at (2,0) {$\alpha$};
\node[anchor=north, scale=0.6] at (-1,1.5) {$\beta$};
	\node[rectangle,draw] at (-4,0) {Type $A_2$};
\end{tikzpicture}
\end{center}
\begin{center}
\begin{tikzpicture}[scale=0.8]
    \foreach\ang in {0,90,...,360}{
     \draw[->,blue!80!black,thick] (0,0) -- (\ang:2cm);
    }
  \foreach\ang in {45,135,...,360}{
     \draw[->,blue!80!black,thick] (0,0) -- (\ang:2.5cm);
    }
    \draw[magenta,->](1,0) arc(0:135:1cm)node at (1.5,0.5) {$3\pi/4$};
    \node[anchor=south west,scale=0.6] at (2,0) {$\alpha$};
\node[anchor=north, scale=0.6] at (-1.5,1.3) {$\beta$};
	\node[rectangle,draw] at (-4,0) {Type $ B_{2}$};
\end{tikzpicture}
  \begin{tikzpicture}[scale=0.8]
    \foreach\ang in {60,120,...,360}{
     \draw[->,blue!80!black,thick] (0,0) -- (\ang:2cm);
    }
    \foreach\ang in {30,90,...,330}{
     \draw[->,blue!80!black,thick] (0,0) -- (\ang:3cm);
    }
    \draw[magenta,->](1,0) arc(0:150:1cm)node at (1.5,0.3) {$5\pi/6$};
    \node[anchor=south west,scale=0.6] at (2,0) {$\alpha$};
 \node[anchor= north east, scale = 0.6] at (-2,1.2) {$\beta$};
	\node[rectangle,draw] at (-4,0) {Type $ G_{2}$};
  \end{tikzpicture}
\end{center}
\caption{The four different root systems of rank 2.}\label{fig:rank2rootsystems}
\end{figure}

As we stated earlier, reflection groups associated with crystallographic root systems have a special name, stated in the definition below.
\begin{definition}
Let $R$ be a crystallographic root system. Then the group $W$ generated by all reflections $\sigma_{\alpha}$, $\alpha \in R$ is called the \emph{Weyl group} of $R$.
\end{definition}
\begin{remark}
The extra crystallographic condition we impose on the root system $R$ ensures that all reflections $\sigma_{\alpha} \beta$ are the result of adding an integer multiple of $\alpha$ to $\beta$. Hence, all roots are $\mathbb{Z}$ linear combinations of $\Delta$ and the $\mathbb{Z}$-span of $\Delta$ in $V$ is a lattice, stable under the action of $W$. Hence, the Weyl group $W$ is crystallographic in the sense of Definition \ref{crystallographicGroup}. 
 \begin{proposition}\label{propCryst}
If $G$ is a crystallographic group, then the possible values for $m(\alpha, \beta)$ are $2$, $3$, $4$, or $6$ for $\alpha \neq \beta$ in $\Delta$.
\end{proposition}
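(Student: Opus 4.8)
The plan is to convert the constraint on $m(\alpha,\beta)$ into an arithmetic condition on $4\cos^{2}\theta$, where $\theta$ is the angle between the two simple roots $\alpha,\beta\in\Delta$, and then simply read off the admissible orders. First I would invoke the computation carried out immediately before the statement: for a crystallographic root system both $\langle\beta,\alpha^{\vee}\rangle$ and $\langle\alpha,\beta^{\vee}\rangle$ are integers, and their product equals $4\cos^{2}\theta$. Hence $4\cos^{2}\theta$ is a non-negative integer, and since $\cos^{2}\theta\leq 1$ this forces $4\cos^{2}\theta\in\{0,1,2,3,4\}$. Because $\alpha$ and $\beta$ are distinct simple roots they are linearly independent, so $\theta\neq 0,\pi$; this excludes the value $4\cos^{2}\theta=4$, leaving $4\cos^{2}\theta\in\{0,1,2,3\}$.

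Next I would connect $\theta$ to the order $m=m(\alpha,\beta)$ of $\sigma_{\alpha}\sigma_{\beta}$. The subgroup $\langle\sigma_{\alpha},\sigma_{\beta}\rangle$ is dihedral of order $2m$, and for simple roots the inner product $\langle\alpha,\beta\rangle$ is non-positive, so the angle between $\alpha$ and $\beta$ is the obtuse value $\theta=\pi-\pi/m$; equivalently $\cos\theta=-\cos(\pi/m)$, which is exactly the normalization recorded in the bilinear form of Definition \ref{bilinearForm}. Substituting gives $4\cos^{2}(\pi/m)=4\cos^{2}\theta\in\{0,1,2,3\}$, so it remains to decide which integers $m\geq 2$ satisfy $4\cos^{2}(\pi/m)\in\mathbb{Z}$. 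Note $m\geq 2$ here, since $\alpha\neq\beta$ gives $\sigma_{\alpha}\neq\sigma_{\beta}$, hence $\sigma_{\alpha}\sigma_{\beta}\neq\id$.

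Finally I would evaluate the map $m\mapsto 4\cos^{2}(\pi/m)$. The values $m=2,3,4,6$ give $0,1,2,3$ respectively, producing precisely the claimed list. The remaining cases are ruled out directly: $m=5$ yields $4\cos^{2}(\pi/5)=(3+\sqrt{5})/2$, which is irrational, while for every $m\geq 7$ one has $\sqrt{3}/2<\cos(\pi/m)<1$ (using that $\cos$ is strictly decreasing on $[0,\pi]$ and $\cos(\pi/6)=\sqrt{3}/2$), so $4\cos^{2}(\pi/m)$ lies strictly between $3$ and $4$ and cannot be an integer. I expect the main obstacle to be precisely this last step: one must rule out all large $m$ at once, which rests on the strict monotonicity of $\cos(\pi/m)$ together with the isolated irrationality check at $m=5$, rather than on any further algebraic identity. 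Pinning down the correct supplement relation $\theta=\pi-\pi/m$ for simple roots is the other delicate point, since an off-by-supplement slip there would scramble the correspondence between angles and orders.
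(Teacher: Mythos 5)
Your proof is correct, but it follows a genuinely different route from the paper's. You work with the quadratic invariant $\langle\beta,\alpha^{\vee}\rangle\,\langle\alpha,\beta^{\vee}\rangle=4\cos^{2}\theta$, whose integrality (derived in the paragraph preceding the proposition) forces $4\cos^{2}\theta\in\{0,1,2,3\}$ once parallel roots are excluded, and you then translate angles into orders via the dihedral relation $\theta=\pi-\pi/m$ and eliminate the remaining $m$ by the irrationality of $4\cos^{2}(\pi/5)$ and monotonicity of $\cos$ for $m\geq 7$. The paper instead argues with a linear invariant: $\sigma_{\alpha}\sigma_{\beta}$ acts as a rotation by $\phi=2\pi/m$ on the plane $\spn\{\alpha,\beta\}$ and as the identity on its orthogonal complement, so its trace is $(n-2)+2\cos\phi$; since a crystallographic group stabilizes a lattice, the matrix of $\sigma_{\alpha}\sigma_{\beta}$ in a lattice basis is integral, hence $2\cos\phi\in\mathbb{Z}$, which immediately yields $\cos\phi\in\{-1,-\tfrac12,0,\tfrac12\}$ and $m\in\{2,3,4,6\}$ with no case analysis. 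The trace argument thus buys two things: it works directly from the lattice-stabilization definition of ``crystallographic group'' (Definition \ref{crystallographicGroup}) rather than from the integrality of the pairings $\langle\beta,\alpha^{\vee}\rangle$, and because the constraint is linear in $\cos\phi$ the exclusion of $m=5$ and $m\geq 7$ is automatic. Your argument, in exchange, stays entirely inside the root-system formalism and reuses the $4\cos^{2}\theta$ computation already on the page. One small reassurance: the supplement ambiguity you flag as delicate is actually harmless here, since only $\cos^{2}\theta$ enters and $\cos^{2}(\pi-\pi/m)=\cos^{2}(\pi/m)$; what you do need, and correctly assert, is that $\langle\sigma_{\alpha},\sigma_{\beta}\rangle$ is dihedral of order $2m$ so that the angle between the roots is $\pi/m$ or its supplement.
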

\begin{proof}
We know that for $\alpha \neq \beta$, $\sigma_{\alpha} \sigma_{\beta} \neq 1$ acts on the plane spanned by $\alpha$ and $\beta$ as a rotation by $\phi = 2 \pi / m(\alpha, \beta)$. Hence, given a basis for $V$ and $n = \dim V$, the trace of the rotation relative to the basis if given by $(n-2)+2 \cos \phi$. For $\phi \in (0, \pi]$, we have $\cos \phi = \{ -1, -1/2, 0, 1/2 \}$. Using Equation \eqref{matrixElementGraph}, we get that $m(\alpha, \beta) = 2, 3, 4,6$ as required.
\end{proof}
\par We can now go back to the notion of a root system and define the corresponding crystallographic type.

\begin{definition}
A root system $R$ is \emph{crystallographic} if $ \forall \alpha, \beta \in R$, 
\begin{align*}
c = \langle \beta, \alpha^{\vee} \rangle \in \mathbb{Z}.
\end{align*}
\end{definition}

Further, by Proposition \ref{propCryst} we conclude that the Weyl groups are precisely the reflection groups with $m(\alpha, \beta) = \{2,3,4,6\}$, for $\alpha \neq \beta$. Thus, Weyl groups are essentially crystallographic reflection groups. However, it is important to note that two distinct crystallographic root systems might give rise to the same Weyl group.
\end{remark}
\par When the crystallographic root system $R$ (or alternatively the Weyl group $W$) is irreducible in the sense of Definition \ref{coxeterGraph}, the roots can only be of two possible lengths. If there are both `long' and `short' roots, the ratio of the squared lengths can be either $2$ or $3$. In other words, Coxeter graphs do not show the difference between $R$ and $R^{\vee}$. In order to encode this information in a graph, we no longer use Coxeter graphs, but a different type of diagrams.
\begin{definition}
Let $R$ be an irreducible crystallographic root system with both `long' and `short' roots with Coxeter graph $\Gamma$. Then the \emph{Dynkin diagram} of the root system $R$ is the partially directed graph, obtained by directing an arrow towards the short root when adjacent vertices represent a long and a short root. If the label of an edge is $4$ or $6$, then we replace the edge by a double or a triple edge respectively.
\end{definition}
\par We are now ready to survey the construction of crystallographic root systems of all possible types. Our methodology follows the one outlined in \cite{Bourbaki}. We choose a suitable lattice $\Lambda$ in $\mathbb{R}^{n}$ and define $R$ to be the set of all vectors with one or two prescribed lengths. Hence, the reflections with respect to the vectors in $R$ stabilize the lattice $\Lambda$ and in turn permute $R$ as required. Let $e_{1}, \dots, e_{n}$ be the standard basis of $\mathbb{R}^{n}$ throughout.
\begin{example} ($A_{n-1}, n\geq 2$).
Let $V$ be the hyperplane in $\mathbb{R}^{n}$, consisting of vectors whose coordinates add up to $0$. Let $\Lambda $ be as in Definition \ref{lattice}. Define $R$ as the set of all $\alpha \in V \cap \Lambda $ such that $ | \alpha |^{2} = 2$. Then $R$ is given by
\begin{align*}
R = \{ \pm ( e_{i} - e_{j} ),  1 \leq i \leq j \leq n \}.
\end{align*}
Root systems of such a type are referred to as type $A_{n-1}$ for $n \geq 2$. The simple system $\Delta$ is given by
\begin{align*}
\Delta = \{ (e_{1}-e_{2}), (e_{2}-e_{3}), \dots,( e_{n-1}-e_{n-2})\}.
\end{align*}
The Weyl group $W$ is then $\mathcal{S}_{n}$, which permutes the roots $e_{i}$ and $e_{j}$. The Dynkin diagram of $A_{n-1}$ is 
\begin{center}
  \begin{tikzpicture}[scale=.4]
    \draw (-1,0) node[anchor=east]  {$A_{n-1}$};
\node[circle, draw, minimum size=.2cm] (C) at (1,0) {};
\node[circle, draw, minimum size=.2cm] (D) at (4,0) {};
\node[circle, draw, minimum size=.2cm] (E) at (8,0) {};
\node[circle, draw, minimum size=.2cm] (F) at (11,0) {};
\node[circle, draw, minimum size=.2cm] (G) at (14,0) {};
\draw (C)--(D);
\draw (E) -- (F) -- (G);
\draw[dotted] (D)--(E);
\draw (16,0) node[anchor=west] {$n \geq 2$.};
  \end{tikzpicture}
\end{center}
Its Coxeter graph is the same, namely
\begin{center}
  \begin{tikzpicture}[scale=.4]
    \draw (-1,0) node[anchor=east]  {$A_{n-1}$};
\node[circle, draw, fill, minimum size=.2cm] (C) at (1,0) {};
\node[circle, draw,fill, minimum size=.2cm] (D) at (4,0) {};
\node[circle, draw,fill, minimum size=.2cm] (E) at (8,0) {};
\node[circle, draw,fill, minimum size=.2cm] (F) at (11,0) {};
\node[circle, draw,fill, minimum size=.2cm] (G) at (14,0) {};
\draw (C)--(D);
\draw (E) -- (F) -- (G);
\draw[dotted] (D)--(E);
\draw (16,0) node[anchor=west] {$n \geq 2$.};
  \end{tikzpicture}
\end{center}
\end{example}
\begin{example} \label{bnExample} $(B_{n}, n\geq 2)$.
Let $V=\mathbb{R}^{n}$ and define $R$ as the set of all vectors of squared length $1$ or $2$ in the standard lattice $\Lambda$. So $R$ contains the $2n$ short roots $(\pm e_{i})$ and the $2n(n-1)$ long roots $(\pm e_{i} \pm e_{j})$, $i<j$, which gives the total of $2n+2n^{2}-2n=2n^{2}$ roots. That is,
\begin{equation*}
R = \{ \pm e_{i} \pm e_{j}, 1 \leq i < j \leq n \} \cup \{ \pm e_{i} , 1 \leq i \leq n \}.
\end{equation*}
The simple system $\Delta$ is
\begin{equation*}
\Delta = \{ (e_{1} -e_{2}), (e_{2}-e_{3}), \dots, (e_{n-1}-e_{n}), e_{n} \}.
\end{equation*}
The Weyl group $W$ stabilizes $\{ \pm e_{1}, \dots, \pm e_{n} \}$ and is the product of the group of all permutations of $\{ e_{1}, \dots, e_{n}\}$ and the group of all sign changes on $e_{i}$, i.e. $\mathcal{S}_{n} \rtimes (\mathbb{Z} / 2\mathbb{Z})^{n}$. Such a root system is known as type $B_{n}$ for $n \geq 2$ and has Dynkin diagram
\begin{center}
  \begin{tikzpicture}[scale=.4]
    \draw (-1,0) node[anchor=east]  {$B_{n}$};
\node[circle, draw, minimum size=.2cm] (C) at (1,0) {};
\node[circle, draw, minimum size=.2cm] (D) at (4,0) {};
\node[circle, draw, minimum size=.2cm] (E) at (8,0) {};
\node[circle, draw, minimum size=.2cm] (F) at (11,0) {};
\node[circle, draw, minimum size=.2cm] (G) at (14,0) {};
    \draw[thick] (11.5, .2 cm) --  +(2 cm,0);
    \draw[thick] (11.5, -.2 cm) -- +(2 cm,0);
\draw (C)--(D);
\draw (E) -- (F);
\draw[dotted] (D)--(E);
\draw (16,0) node[anchor=west] {$n \geq 2$.};
\draw (12.7,0) --++ (120:.5)
(12.7,0) --++ (-120:.5);
  \end{tikzpicture}
\end{center}
The Coxeter graph of the root system type $B_{n}$ is
\begin{center}
  \begin{tikzpicture}[scale=.4]
    \draw (-1,0) node[anchor=east]  {$B_{n}$};
\node[circle, draw, fill, minimum size=.2cm] (C) at (1,0) {};
\node[circle, draw, fill,minimum size=.2cm] (D) at (4,0) {};
\node[circle, draw, fill,minimum size=.2cm] (E) at (8,0) {};
\node[circle, draw, fill,minimum size=.2cm] (F) at (11,0) {};
\node[circle, draw, fill,minimum size=.2cm] (G) at (14,0) {};
\draw (C)--(D);
\draw (E) -- (F)--(G);
\node at (12.5,0.5) {4};
\draw[dotted] (D)--(E);
\draw (16,0) node[anchor=west] {$n \geq 2$.};
  \end{tikzpicture}
\end{center}
\end{example}
\begin{example} $(C_{n}, n \geq 2)$.
The root system we construct now is known as the inverse of the $B_{n}$ type as in Example \ref{bnExample}. Again, let $V=\mathbb{R}^{n}$ and define $R$ as the set of all vectors of squared length $1$ or $2$ in the standard lattice $\Lambda$. This time, however, $R$ contains the $2n$ long roots $(\pm2 e_{i})$ and the $2n(n-1)$ short roots $(\pm e_{i} \pm e_{j})$, $i<j$, which gives a total of $2n^{2}$ roots as well. That is,
\begin{align*}
R = \{ (\pm e_{i} \pm e_{j}), 1 \leq i <  j \leq n \} \cup \{ \pm 2 e_{i} , 1 \leq i \leq n \}.
\end{align*}
For $\Delta$ take
\begin{equation*}
\Delta = \{ (e_{1} -e_{2}), (e_{2}-e_{3}), \dots, (e_{n-1}-e_{n}), 2e_{n} \}.
\end{equation*}
This is known as the $C_{n}$ type root system for $n \geq 2$. The Dynkin diagram is
\begin{center}
  \begin{tikzpicture}[scale=.4]
    \draw (-1,0) node[anchor=east]  {$C_{n}$};
\node[circle, draw, minimum size=.2cm] (C) at (1,0) {};
\node[circle, draw, minimum size=.2cm] (D) at (4,0) {};
\node[circle, draw, minimum size=.2cm] (E) at (8,0) {};
\node[circle, draw, minimum size=.2cm] (F) at (11,0) {};
\node[circle, draw, minimum size=.2cm] (G) at (14,0) {};
    \draw[thick] (11.5, .2 cm) --  +(2 cm,0);
    \draw[thick] (11.5, -.2 cm) -- +(2 cm,0);
\draw (C)--(D);
\draw (E) -- (F);
\draw[dotted] (D)--(E);
\draw (16,0) node[anchor=west] {$n \geq 2$.};
\draw (12.7,0) --++ (60:.5)
(12.7,0) --++ (-60:.5);
  \end{tikzpicture}
\end{center}
The Coxeter graph of the root system type $C_{n}$ is the same as the Coxeter graph of $B_{n}$. This is a good example of how the Dynkin diagrams convey more information.
\end{example}
\begin{example} $(D_{n}, n \geq 4)$.
Let $V = \mathbb{R}^{n}$ and let $R$ be the set of all vectors of squared length equal to $2$ in the standard lattice $\Lambda$. Thus, $R$ has $2n(n-1)$ roots and is of the form
\begin{equation*}
R = \{ (\pm e_{i} \pm e_{j}), 1 \leq i < j \leq n \}.
\end{equation*}
For the simple system $\Delta$ we have
\begin{equation*}
\Delta = \{ (e_{1} -e_{2}), (e_{2}-e_{3}), \dots, (e_{n-1}-e_{n}), (e_{n-1}+e_{n}) \}.
\end{equation*}
The Weyl group $W$ stabilizes $\{ \pm e_{1}, \dots, \pm e_{n} \}$ and is the semidirect product of the group of all permutations of $\{ e_{1}, \dots, e_{n} \}$, that is, $\mathcal{S}_{n}$, and the group of sign changes, with evenly many signs equal to $-1$, i.e. $(\mathbb{Z}/2 \mathbb{Z})^{n-1}$. The Dynkin diagram of this root system, called the $D_{n}$ type for $n \geq 4$, is
\begin{center}
  \begin{tikzpicture}[scale=.4]
    \draw (-1,0) node[anchor=east]  {$D_{n}$};
\node[circle, draw, minimum size=.2cm] (C) at (1,0) {};
\node[circle, draw, minimum size=.2cm] (D) at (4,0) {};
\node[circle, draw, minimum size=.2cm] (E) at (8,0) {};
\node[circle, draw, minimum size=.2cm] (F) at (11,0) {};
\node[circle, draw, minimum size=.2cm] (G) at (14,1) {};
\node[circle, draw, minimum size=.2cm] (H) at (14,-1) {};
\draw (C)--(D);
\draw (E) -- (F) -- (G);
\draw (F) --(H);
\draw[dotted] (D)--(E);
\draw (16,0) node[anchor=west] {$n \geq 4$.};
  \end{tikzpicture}
\end{center}
The Coxeter graph of $D_{n}$ is 
\begin{center}
  \begin{tikzpicture}[scale=.4]
    \draw (-1,0) node[anchor=east]  {$D_{n}$};
\node[circle, draw, fill, minimum size=.2cm] (C) at (1,0) {};
\node[circle, draw,fill, minimum size=.2cm] (D) at (4,0) {};
\node[circle, draw,fill, minimum size=.2cm] (E) at (8,0) {};
\node[circle, draw,fill, minimum size=.2cm] (F) at (11,0) {};
\node[circle, draw,fill, minimum size=.2cm] (G) at (14,0) {};
\node[circle, draw, fill, minimum size=.2cm] (H) at (17,0) {};
\node[circle, draw, fill, minimum size=.2cm] (I) at (14,3) {};
\draw (C)--(D);
\draw (E) -- (F) -- (G) -- (H);
\draw (G)--(I);
\draw[dotted] (D)--(E);
\draw (19,0) node[anchor=west] {$n \geq 2$.};
  \end{tikzpicture}
\end{center}

\end{example} 
\begin{example}$(G_{2})$.
Let $V$ be the hyperplane in $\mathbb{R}^{3}$ consisting of all vectors with coordinates adding up to $0$. Take $R$ to be the set if vectors of squared length either $2$ or $6$ in $V \cap \Lambda$. Thus, $R$ is defined as
\begin{equation*}
R = \{ \pm e_{1}, \pm e_{2}, \pm (e_{1}+e_{2} ), \pm (e_{1}+2e_{2}), \pm (e_{1}+3e_{2}), \pm(2e_{1}+3e_{2}) \}.
\end{equation*}
Then the simple system is $\Delta = \{e_{1}, e_{2} \}$. This type of system is known as $G_{2}$ and has Dynkin diagram
\begin{center}
  \begin{tikzpicture}[scale=.4]
    \draw (-1,0) node[anchor=east]  {$G_{2}$};
\node[circle, draw, minimum size=.2cm] (F) at (1,0) {};
\node[circle, draw, minimum size=.2cm] (G) at (4,0) {};
    \draw[thick] (1.5, .2 cm) --  +(2 cm,0);
    \draw[thick] (1.5, -.2 cm) -- +(2 cm,0);
   \draw[thick] (1.5, -.0cm) -- + (2cm, 0);
\draw (2.7,0) --++ (120:.5)
(2.7,0) --++ (-120:.5);
  \end{tikzpicture}.
\end{center}
The Coxeter graph of $G_{2}$ is
\begin{center}
  \begin{tikzpicture}[scale=.4]
    \draw (-1,0) node[anchor=east]  {$G_{2}$};
\node[circle, draw, fill, minimum size=.2cm] (F) at (1,0) {};
\node[circle, draw, fill, minimum size=.2cm] (G) at (4,0) {};
\node at (2.5,0.5) {6};
\draw (F)--(G);
  \end{tikzpicture}.
\end{center}
\end{example}
\begin{example}$(F_{4})$.
Let $V = \mathbb{R}^{4}$. Let $\Lambda$ be the standard lattice. Define a new lattice $\tilde{\Lambda} \coloneqq \Lambda + \mathbb{Z} \dfrac{1}{2} \left(e_{1}+e_{2}+e_{3}+e_{4} \right)$. Hence, define $R$ to be the set of all vectors in the new lattice $\tilde{\Lambda}$ of squared length $1$ or $2$. Thus, $R$ contains $24$ long and $24$ short roots, that is,
\begin{equation*}
R = \{ (\pm e_{i} \pm e_{j}), 1 \leq i < j \leq 4\} \cup \{\pm e_{i}, 1\leq i \leq 4 \} \cup \left\{ \dfrac{1}{2} ( \pm e_{1} \pm e_{2} \pm e_{3} \pm e_{4} ) \right\}.
\end{equation*} 
For $\Delta$ take
\begin{equation*}
\Delta = \left\{ (e_{2} - e_{3}), (e_{3}-e_{4}), e_{4}, \dfrac{1}{2} \left( e_{1}- e_{2}-e_{3}-e_{4} \right) \right\}.
\end{equation*}
This type of root system is known as $F_{4}$ and is represented via the Dynkin diagram
\begin{center}
  \begin{tikzpicture}[scale=.4]
    \draw (-1,0) node[anchor=east]  {$F_{4}$};
\node[circle, draw, minimum size=.2cm] (A) at (1,0) {};
\node[circle, draw, minimum size=.2cm] (F) at (4,0) {};
\node[circle, draw, minimum size=.2cm] (G) at (7,0) {};
\node[circle, draw, minimum size=.2cm] (B) at (10,0) {};
    \draw[thick] (4.5, .2 cm) --  +(2 cm,0);
    \draw[thick] (4.5, -.2 cm) -- +(2 cm,0);
\draw (5.7,0) --++ (120:.5)
(5.7,0) --++ (-120:.5);
\draw (A) -- (F);
\draw (G) -- (B);
  \end{tikzpicture}.
\end{center}
Its Coxeter graph is given by
\begin{center}
  \begin{tikzpicture}[scale=.4]
    \draw (-1,0) node[anchor=east]  {$F_{4}$};
\node[circle, draw,fill, minimum size=.2cm] (A) at (1,0) {};
\node[circle, draw,fill, minimum size=.2cm] (F) at (4,0) {};
\node[circle, draw,fill, minimum size=.2cm] (G) at (7,0) {};
\node[circle, draw, fill, minimum size=.2cm] (B) at (10,0) {};
\draw (A) -- (F)--(G) -- (B);
\node at (5.5,0.5) {4};
  \end{tikzpicture}.
\end{center}
\end{example}
\begin{example}\label{e8root} $(E_{8})$.
Let $V = \mathbb{R}^{8}$. Let $\Lambda'' = \left\{ \displaystyle{\sum_{j=1}^{n}} 2z_{j} e_{j} \lvert z_{j} \in \mathbb{Z} \right\}$. Then let $\Lambda' = \Lambda'' + \mathbb{Z} \left( \dfrac{1}{2} \displaystyle{ \sum_{i=1}^{8}} e_{i} \right) $. Define $R$ to be the set of all vectors of squared length $2$ in $\Lambda'$. Thus, $R$ is
\begin{align*}
R = \{ (\pm e_{i} \pm e_{j}), 1 \leq i < j \leq 8 \} \cup \left\{ \dfrac{1}{2} \sum^{8}_{i=1} \pm e_{i} \right\}, \quad \text{ with an even number of $+$ signs} .
\end{align*}
For $\Delta$ take
\begin{align*}
\Delta = \left\{ \alpha_{1}, \alpha_{2}, \alpha_{i} \right\}, \quad \text{ where } \alpha_{1} &= \dfrac{1}{2} \left( e_{1}-e_{2}-e_{3}-e_{4}-e_{5}-e_{6}-e_{7} + e_{8} \right), \\ \alpha_{2} &= (e_{1}+e_{2} ), \\ 
\alpha_{i} &= (e_{i-1}-2_{i-2}) \text{ for } 3 \leq i \leq 8. 
\end{align*}
This root system is known as $E_{8}$ and has $240$ roots. Its Dynkin diagram is
\begin{center}
  \begin{tikzpicture}[scale=.4]
    \draw (-1,0) node[anchor=east]  {$E_{8}$};
\node[circle, draw, minimum size=.2cm] (A) at (1,0) {};
\node[circle, draw, minimum size=.2cm] (B) at (4,0) {};
\node[circle, draw, minimum size=.2cm] (C) at (7,0) {};
\node[circle, draw, minimum size=.2cm] (D) at (10,0) {};
\node[circle, draw, minimum size=.2cm] (E) at (13,0) {};
\node[circle, draw, minimum size=.2cm] (F) at (16,0) {};
\node[circle, draw, minimum size=.2cm] (G) at (19,0) {};
\draw (A) --(B)--(C)--(D)--(E)--(F)--(G);
\node[circle, draw, minimum size=.2cm] (H) at (7,3) {};
\draw (C)--(H);
  \end{tikzpicture}.
\end{center}
The Coxeter graph is exactly the same.
\end{example}
\begin{example}
Consider the root system of type $E_{8}$ in Example \ref{e8root}. Let $V$ be the span of the roots $\alpha_{i}$ for $1 \leq i \leq 7$ in $\mathbb{R}^{8}$. Let $R$ be the set of the roots of $E_{8}$ lying also in $V$, that is,
\begin{align*}
R = \left\{ (\pm e_{i} \pm e_{j}), \pm (e_{7}-e_{8}), \pm \dfrac{1}{2} \left(e_{7}-e_{8}+\sum_{l=1}^{6} \pm e_{l} \right) \right\}, \quad 1 \leq i < j \leq 6,
\end{align*}
and there are an odd number of minus signs in the last summation term. Then a simple system $\Delta$ is formed by the collection of $\alpha_{i}$, for $1 \leq i \leq 7$. This root system is referred to as the $E_{7}$ type with cardinality $126$ roots and has an identical Dynkin diagram to the one of $E_{8}$, just lacking the last vertex, i.e.
\begin{center}
  \begin{tikzpicture}[scale=.4]
    \draw (-1,0) node[anchor=east]  {$E_{7}$};
\node[circle, draw, minimum size=.2cm] (A) at (1,0) {};
\node[circle, draw, minimum size=.2cm] (B) at (4,0) {};
\node[circle, draw, minimum size=.2cm] (C) at (7,0) {};
\node[circle, draw, minimum size=.2cm] (D) at (10,0) {};
\node[circle, draw, minimum size=.2cm] (E) at (13,0) {};
\node[circle, draw, minimum size=.2cm] (F) at (16,0) {};
\draw (A) --(B)--(C)--(D)--(E)--(F);
\node[circle, draw, minimum size=.2cm] (H) at (7,3) {};
\draw (C)--(H);
  \end{tikzpicture}.
\end{center}
The Coxeter graph of $E_{7}$ is the same as the Dynkin diagram.
\end{example}
\begin{example}$(E_{6})$.
Lastly, consider once again the root system of type $E_{8}$ as in Example \ref{e8root}. Let $V$ be the span of the $\alpha_{i}$ for $1 \leq i \leq 6$. Take $R$ to be 
\begin{align*}
R = \left\{ (\pm e_{i} \pm e_{j}), \pm \dfrac{1}{2} \left( e_{8}-e_{7}-e_{6}+ \sum_{l=1}^{5} \pm e_{l} \right) \right\}, \quad 1 \leq i < j \leq 6,
\end{align*}
and there are an odd number of minus signs in the last sum. A simple system is $\Delta = \{ \alpha_{i}, 1 \leq i \leq 6 \}$. This root system is known as type $E_{6}$ and has Dynkin diagram
\begin{center}
  \begin{tikzpicture}[scale=.4]
    \draw (-1,0) node[anchor=east]  {$E_{6}$};
\node[circle, draw, minimum size=.2cm] (A) at (1,0) {};
\node[circle, draw, minimum size=.2cm] (B) at (4,0) {};
\node[circle, draw, minimum size=.2cm] (C) at (7,0) {};
\node[circle, draw, minimum size=.2cm] (D) at (10,0) {};
\node[circle, draw, minimum size=.2cm] (E) at (13,0) {};
\draw (A) --(B)--(C)--(D)--(E);
\node[circle, draw, minimum size=.2cm] (H) at (7,3) {};
\draw (C)--(H);
  \end{tikzpicture}.
\end{center}
The Dynkin diagram is identical to the Coxeter graph of $E_{8}$.
\end{example}
\par Having surveyed the different types of root systems and the groups they give rise to, we now proceed by examining the action of a general finite subgroup $W$ of $GL(V)$ on the ring of polynomial functions in the space $V$.
\section{Polynomial Invariant Theory.}\label{sec:poly}
\indent \par In this section we discuss the structure of the subalgebra of polynomial functions, which remains invariant under the action of a subgroup of the general linear group. We then present a key result, known as \emph{Chevalley's theorem}, which we need to prove the integrability of the rational quantum Calogero-Moser system.
\par
Let $V$ be an $n$-dimensional vector space over a field $K$ of characteristic $0$. Let $\xi_{1}, \dots, \xi_{n}$ be a basis for $V$. For $x_{1}, \dots, x_{n} \in K$, if we write $\xi = x_{1} \xi_{1} + \dots + x_{n} \xi_{n}$ for $\xi \in V$, then $x_{j} (\xi) \in K$ are the linear coordinate functions relative to the given basis of $V$. Let $A(V)$ denote the commutative algebra of all functions on $V$ with values in $K$.
\begin{definition}
The \emph{algebra of polynomial functions on $V$}, denoted by $P(V)$, is the smallest subalgebra of $A(V)$ containing the linear coordinate functions $\xi \mapsto x_{j} (\xi)$. 
\end{definition} 
\begin{remark}
Note that $P(V)$ is independent of the choice of basis of $V$ and any choice of basis in $V$ yields an isomorphism $P(V) \cong K[x_{1}, \dots, x_{n}]$.
\end{remark}
\par Now a basis of $P(V)$ is given by the monomials $x_{1}^{d_{1}} \cdots x_{n}^{d_{n}}$, where $d_{j} \in \mathbb{N}_{0}$. Consider the subspace $P_{d}(V) \subset P(V)$ given by $P_{d}(V) = \{ p \in P(V) \lvert p(t \xi) = t^{d} p(\xi), \forall t \in K, \xi \in V$. The elements of this subspace are called \emph{homogeneous polynomials of degree $d$}, where $d \in \mathbb{N}_{0}$. 
\par The general linear group $GL(V)$ acts on the algebra of polynomial functions $P(V)$ naturally, i.e. for $g \in GL(V)$, $p \in P(V)$ and $\xi \in V$, we have
\begin{equation*}
( g \cdot p ) (\xi) = p(g^{-1} \xi).
\end{equation*}
Also, for $g, h \in GL(V)$ and $p \in P(V)$ we have $g \cdot (h \cdot p ) = (gh ) \cdot p$. Even further, $g \cdot P_{d}(V) \subset P_{d} (V)$ for any $g \in GL(V)$. Next we formalize the idea of invariance under the action of an arbitrary subgroup $G$ of $GL(V)$.
\begin{definition}
Let $G $ be a subgroup of $GL(V)$ and let $p \in P(V)$. A polynomial $p$ is said to be \emph{$G$-invariant} if $g \cdot p = p$, $\forall g \in G$. We denote the set of all $G$-invariant polynomial functions in $P(V)$ by $P(V)^{G}$.
\end{definition}
\par We illustrate all concept introduced so far via the following example.

\begin{example}\label{symmetricPoly}
Let $p \in K[x_{1}, \dots, x_{n}]$ be given by $p = ( c_{d_{1}} x_{1}^{d_{1}} + \dots + c_{d_{n}} x_{n}^{d_{n}})$, where $c_{d_{j}}$ are constants, and let $\sigma \in \mathcal{S}_{n}$. Then $\sigma$ acts on $p$ by permuting the indices of the elements in $p$, that is
\begin{align*}
\sigma \cdot p =  c_{d_{1}} x_{\sigma (1)}^{d_{1}} + \dots + c_{d_{n}} x_{\sigma(n)}^{d_{n}} =  c_{j_{\sigma(1)}} x_{1}^{j_{1}} + \dots + c_{j _{\sigma(n)}} x_{n}^{j_{n}}.
\end{align*}
Polynomials which remain invariant under the action of the symmetric group $\mathcal{S}_{n}$ are called \emph{symmetric polynomials}. The elementary symmetric polynomials are given by
\begin{align*}
p_{d} (x_{1}, \dots, x_{n} )  = \sum_{1 \leq j_{1} < \dots < j_{k} \leq n} x_{j_{1}} \cdots x_{j_{n}}, \quad k=1, \dots, n.
\end{align*}
By a classical result in algebra, each symmetric polynomial is uniquely determined in the elementary symmetric polynomials, thus, $K[x_{1}, \dots, x_{n}]^{\mathcal{S}_{n}} \cong K [p_{1}, \dots, p_{n}]$.
\end{example}
\par Next we state an important theorem due to Hilbert. An extensive proof can be found in \cite{heckmanCoxeter}, Section 3, Theorem 3.2.
\begin{theorem}
Let $G$ be a finite subgroup of $GL(V)$ and let $P^{G}(V)$ be the invariant subalgebra of $P(V)$. Then $P^{G}(V)$ is finitely generated.
\end{theorem}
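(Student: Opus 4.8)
The plan is to prove Hilbert's finiteness theorem by combining the Hilbert Basis Theorem with the averaging (Reynolds) operator, which is available precisely because $\operatorname{char} K = 0$ and $G$ is finite. First I would introduce the \emph{Reynolds operator} $R \colon P(V) \to P^{G}(V)$ defined by
\[
R(p) = \frac{1}{|G|} \sum_{g \in G} g \cdot p.
\]
Since $|G|$ is invertible in $K$, this is well defined, and one checks directly that $R$ is $K$-linear, that $R(p) = p$ whenever $p \in P^{G}(V)$, and that $R$ preserves degree (because each $g \in GL(V)$ satisfies $g \cdot P_{d}(V) \subset P_{d}(V)$). The crucial property is the $P^{G}(V)$-linearity: for $a \in P^{G}(V)$ and $p \in P(V)$ one has $R(ap) = a\,R(p)$, which follows because $g \cdot (ap) = (g \cdot a)(g \cdot p) = a\,(g \cdot p)$ for every invariant $a$.

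Next I would form the \emph{Hilbert ideal} $I \subset P(V)$ generated by all homogeneous invariants of strictly positive degree. Because $P(V) \cong K[x_{1}, \dots, x_{n}]$ is Noetherian by the Hilbert Basis Theorem, $I$ is finitely generated; and since $I$ is generated by the set of homogeneous positive-degree invariants, a finite subset $f_{1}, \dots, f_{m}$ of that set already generates $I$, so I may take the generators to be homogeneous elements of $P^{G}(V)$. I claim $P^{G}(V) = K[f_{1}, \dots, f_{m}]$, which I would prove by induction on the degree $d$ of a homogeneous invariant $f$. The case $d = 0$ is clear. For $d > 0$ we have $f \in I$, so $f = \sum_{i=1}^{m} h_{i} f_{i}$ with $h_{i} \in P(V)$ homogeneous of degree $d - \deg f_{i} < d$; applying $R$ and using its $P^{G}(V)$-linearity together with $R(f) = f$ gives
\[
f = R(f) = \sum_{i=1}^{m} R(h_{i})\, f_{i},
\]
where each $R(h_{i})$ is now a homogeneous invariant of degree $< d$, hence a polynomial in the $f_{i}$ by the inductive hypothesis. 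Therefore $f$ itself lies in $K[f_{1}, \dots, f_{m}]$. Since every invariant is the sum of its homogeneous components, each of which is again invariant as $G$ acts by degree-preserving transformations, this proves the claim and the theorem.

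The main obstacle to handle carefully is the degree-reduction engine: one must verify that the Reynolds operator really is linear over the invariant subalgebra and preserves degree, and that the generators of the Hilbert ideal can be chosen homogeneous and invariant. Once these two facts are in place the induction closes cleanly, with the $P^{G}(V)$-linearity of $R$ doing the essential work of converting the arbitrary coefficients $h_{i}$ into invariant coefficients $R(h_{i})$ of smaller degree. I would also remark that the hypothesis $\operatorname{char} K = 0$ (or, more generally, $\operatorname{char} K \nmid |G|$) is exactly what makes the averaging step legitimate; in the modular case a genuinely different argument would be required.
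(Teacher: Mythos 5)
Your proof is correct and complete. Note, however, that the paper itself does not prove this theorem at all: it states it and defers entirely to the cited reference (Heckman's lecture notes, Theorem 3.2), so there is no in-paper argument to compare against. What you have written is the standard Hilbert--Noether argument, and it is essentially the proof found in that reference: the Reynolds operator $R(p)=\frac{1}{|G|}\sum_{g\in G}g\cdot p$ is well defined because $\operatorname{char}K=0$, it is a degree-preserving projection onto $P^{G}(V)$ that is linear over $P^{G}(V)$, and combining it with the Hilbert Basis Theorem applied to the Hilbert ideal yields the finite generating set by induction on degree. All the steps you flag as needing care are handled correctly; the only point worth making fully explicit is that when you write $f=\sum_{i}h_{i}f_{i}$ you pass to the homogeneous components of the $h_{i}$ of degree $d-\deg f_{i}$ (discarding the components whose product with $f_{i}$ has degree different from $d$), which is exactly what licenses the inequality $\deg R(h_{i})<d$ driving the induction. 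Your closing remark about the modular case is accurate and a useful observation, though not needed here since the paper fixes $\operatorname{char}K=0$.
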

\par The next result is known as Chevalley's theorem and is the main tool we shall employ to prove the integrability of the quantum rational Calogero-Moser system. The proof of the statement is omitted, but extensive ones can be found in both \cite{heckmanCoxeter} see Section 3.2, Theorem 3.8 and \cite{Humphreys} see Chapter 3, Section 3.5.
	\begin{theorem}\label{chevalley}
Let $W$ be a finite reflection group, $W \subset O(V)$. Then $P^{W}(V) \cong \mathbb{C}[p_{1}, \dots, p_{n}]$, where $p_{j}$ are homogeneous polynomial functions of degree $d_{j}$, algebraically independent and invariant under the action of $W$.
	\end{theorem}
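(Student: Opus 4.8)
The plan is to follow Chevalley's classical argument, whose engine is the special divisibility property of reflections together with the averaging (Reynolds) operator $\mathcal{R}\colon P(V)\to P^{W}(V)$ defined by $\mathcal{R}(p)=\frac{1}{|W|}\sum_{w\in W}w\cdot p$, which is $P^{W}(V)$-linear and restricts to the identity on invariants. Fixing a basis, identify $P(V)\cong\mathbb{C}[x_{1},\dots,x_{n}]$. First I would record that $P^{W}(V)$ has transcendence degree $n$ over $\mathbb{C}$: for each coordinate $x_{i}$ the monic polynomial $\prod_{w\in W}(T-w\cdot x_{i})$ has $W$-invariant coefficients and vanishes at $T=x_{i}$, so every $x_{i}$ is integral over $P^{W}(V)$; hence $P(V)$ is a finite module over $P^{W}(V)$ and the two rings have fraction fields of the same transcendence degree $n$. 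By Hilbert's finite generation theorem stated above, $P^{W}(V)$ is a finitely generated graded algebra, so I may fix a \emph{minimal} set of homogeneous generators $f_{1},\dots,f_{m}$ of positive degrees $d_{1},\dots,d_{m}$, where minimality means that no $f_{i}$ lies in the ideal of $P^{W}(V)$ generated by the remaining ones. The transcendence computation forces $m\ge n$, and the entire theorem reduces to showing that $f_{1},\dots,f_{m}$ are algebraically independent: then $m=n$ and $P^{W}(V)=\mathbb{C}[f_{1},\dots,f_{n}]$ is a polynomial ring, giving the desired isomorphism with $p_{j}=f_{j}$.

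The technical core is the \emph{reflection-divisibility property}: for a reflection $\sigma_{\alpha}\in W$ with reflecting hyperplane $H^{\alpha}$ and defining linear form $\ell_{\alpha}(x)=\langle\alpha,x\rangle$, every $p\in P(V)$ satisfies $\ell_{\alpha}\mid(p-\sigma_{\alpha}p)$, since $p-\sigma_{\alpha}p$ vanishes on $H^{\alpha}$ (fixed pointwise by $\sigma_{\alpha}$, cf. Definition \ref{definitionOrthogonalReflection}) and $\ell_{\alpha}$ is irreducible in the unique factorization domain $P(V)$. With this I would establish the key syzygy lemma: if $I_{1},\dots,I_{k}$ is a minimal homogeneous generating set of the $P^{W}(V)$-ideal it spans, and homogeneous $p_{1},\dots,p_{k}\in P(V)$ satisfy $\sum_{i}p_{i}I_{i}=0$, then $p_{1}$ lies in the $P(V)$-ideal generated by $I_{2},\dots,I_{k}$. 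The proof is an induction on $\deg p_{1}$: applying $\sigma_{\alpha}$ to the relation and using the invariance of the $I_{i}$ gives $\sum_{i}(p_{i}-\sigma_{\alpha}p_{i})I_{i}=0$, and dividing through by $\ell_{\alpha}$ lowers the degree of the first coefficient so the inductive hypothesis applies; since reflections generate $W$ and each $w\in W$ preserves the ideal (the $I_{i}$ being invariant), summing the resulting congruences over $W$ reduces everything to the $W$-average $\mathcal{R}(p_{1})$, which is then controlled by applying $\mathcal{R}$ to the original relation and invoking minimality.

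To finish, suppose for contradiction that $f_{1},\dots,f_{m}$ are algebraically dependent and choose a nonzero polynomial relation $h(f_{1},\dots,f_{m})=0$ of least degree. Differentiating by the chain rule in each variable gives $\sum_{i}(\partial h/\partial y_{i})(f)\,\partial_{x_{j}}f_{i}=0$ for $1\le j\le n$, where the coefficients $g_{i}:=(\partial h/\partial y_{i})(f)$ are again invariant and, by minimality of $h$, not all zero. Passing to a minimal generating subset of the ideal $(g_{1},\dots,g_{m})$ and applying the key syzygy lemma to these relations forces, for every $j$, the partial derivative $\partial_{x_{j}}f_{1}$ into the ideal generated by the remaining invariants; Euler's identity $d_{1}f_{1}=\sum_{j}x_{j}\,\partial_{x_{j}}f_{1}$ then places $f_{1}$ itself in that ideal, contradicting minimality of the generating set. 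Hence the $f_{i}$ are algebraically independent, $m=n$, and $P^{W}(V)\cong\mathbb{C}[p_{1},\dots,p_{n}]$ with each $p_{j}$ homogeneous of degree $d_{j}$ and $W$-invariant.

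\textbf{Main obstacle.} The delicate point is closing the induction in the key syzygy lemma: after the reflection step one must still eliminate the invariant average $\mathcal{R}(p_{1})$, and it is exactly the minimality of the generators together with the fact that \emph{reflections generate} $W$ (not merely belong to it) that blocks a spurious cancellation and makes the regular-sequence behaviour of the basic invariants available. By comparison the Jacobian argument is routine once the lemma is in hand, although the homogeneity and degree bookkeeping in the Euler-identity reduction still needs care.
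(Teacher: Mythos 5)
The paper does not actually prove Theorem \ref{chevalley}: it explicitly defers to Humphreys and to Heckman's lecture notes. Your proposal reconstructs precisely the standard argument from those sources --- Reynolds operator, reflection--divisibility of $p-\sigma_{\alpha}p$ by $\ell_{\alpha}$, a syzygy lemma proved by induction on degree, and the Jacobian/Euler contradiction --- and the surrounding reductions (transcendence degree $n$ via integrality of the coordinates over $P^{W}(V)$, existence of a minimal homogeneous generating set via Hilbert, reduction of the whole theorem to algebraic independence) are all correct. So the architecture is right; the problem is in the one place you yourself flag as delicate.

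Your key syzygy lemma, as stated, is false. Take $W=\mathcal{S}_{2}$ acting on $P=\mathbb{C}[x,y]$, so $P^{W}=\mathbb{C}[e_{1},e_{2}]$ with $e_{1}=x+y$, $e_{2}=xy$, and take $I_{1}=e_{1}^{3}$, $I_{2}=e_{1}^{2}e_{2}$, $I_{3}=e_{2}^{3}$. These form a minimal homogeneous generating set of the $P^{W}$-ideal they span (none of the three monomials in $e_{1},e_{2}$ lies in the ideal generated by the other two), yet the relation $e_{2}I_{1}-e_{1}I_{2}+0\cdot I_{3}=0$ has first coefficient $p_{1}=xy$, which has degree $2$ and therefore cannot lie in the $P$-ideal generated by $I_{2}$ and $I_{3}$, whose nonzero elements all have degree at least $4$. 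The failure shows up exactly where you locate the "main obstacle": applying $\mathcal{R}$ to the original relation gives $\mathcal{R}(p_{1})I_{1}=-\sum_{i\geq 2}\mathcal{R}(p_{i})I_{i}$, and minimality only rules out $\mathcal{R}(p_{1})$ being a nonzero constant (which is the base case); in positive degree it places no constraint, and in the example above $\mathcal{R}(p_{1})=e_{2}$ survives. The correct statement (Humphreys, Section 3.5) concludes only that $p_{1}$ lies in $\mathcal{I}$, the ideal of $P(V)$ generated by \emph{all} homogeneous invariants of positive degree; with that conclusion the averaging step closes trivially, since $\mathcal{R}(p_{1})$ is itself such an invariant. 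The weaker lemma still suffices for your endgame, but the endgame must then be redone: after renumbering so that $g_{1},\dots,g_{s}$ minimally generate $(g_{1},\dots,g_{m})$ and substituting $g_{i}=\sum_{j\leq s}h_{ij}g_{j}$ for $i>s$, the lemma yields $\partial_{x_{k}}f_{1}+\sum_{i>s}h_{i1}\partial_{x_{k}}f_{i}\in\mathcal{I}$ for each $k$; multiplying by $x_{k}$, summing, and using Euler's identity places $d_{1}f_{1}$ in $\sum_{i\geq 2}P f_{i}+\mathfrak{m}\mathcal{I}$ (where $\mathfrak{m}=(x_{1},\dots,x_{n})$), and only a comparison of the homogeneous components of degree $d_{1}$ --- using that any occurrence of $f_{1}$ inside $\mathfrak{m}\mathcal{I}$ carries a coefficient of positive degree --- forces $f_{1}\in\sum_{i\geq 2}Pf_{i}$ and hence the contradiction with minimality. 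So: restate the lemma with the conclusion $p_{1}\in\mathcal{I}$, and supply the degree bookkeeping in the Euler step; the rest of your outline stands.
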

\par Chevalley's theorem suggests that the idea discussed in Example \ref{symmetricPoly} can be generalized to the context of arbitrary Coxeter groups. We are now ready to introduce the main tool, used to obtain the integrals of motion of the quantum rational Calogero-Moser system, namely the \emph{Dunkl operator}.

\section{Quantum Integrability via Dunkl Operators.}\label{sec:Dunkl}
Dunkl operators are originally conceived in the rational differential setting by Charles Dunkl in the 1980s \cite{Dunkl}. A whole framework is subsequently developed in the prominent works \cite{Opdam}, \cite{Cherednik1}, \cite{Heckman}. Here we restrict ourselves to the simplest case, the rational Dunkl operator.
\subsection{The Dunkl Operator, Commutativity.}
Let $\langle \cdot, \cdot \rangle$ denote the standard inner product in the $n$-dimensional Euclidean vector space $V$ with standard basis $e_{1}, \dots, e_{n}$. Let $R$ be a normalized Coxeter root system with roots $\alpha \in R$. Denote by $\sigma_{\alpha}$ the orthogonal reflection corresponding to the roots $\alpha \in R$ and by $H^{\alpha}=\langle \alpha, x \rangle = 0$ with $x = (x_{1}, \dots, x_{n})$ the hyperplane orthogonal to $\alpha$ as in Definition \ref{definitionOrthogonalReflection}. Next, define $V_{reg} = V - \cup_{\alpha} H^{\alpha}$. For $W$ a finite reflection group, let $c: R \to \mathbb{C}$ be a $W$-invariant function, known as a \emph{multiplicity function}. For $\alpha \in R$ we denote $c(\alpha) = c_{\alpha}$. Let $\alpha^{\vee} \in V$ be the coroot of $\alpha$. Let $R_{+}$ be the subset of $R$ containing all positive roots with respect to a given partial ordering. Let $a = (a_{1}, \dots, a_{n})$ and let $\partial_{a}$ be the directional derivative along the vector $a \in V$. Let $\mathbb{C}[V]$ denote the algebra of $\mathbb{C}$-valued polynomial functions on $V$, let $\mathbb{C}W$ be the group algebra, identified with the vector space of formal linear combination of elements of $W$, and finally, let $D(V)$ be the algebra of differential operators on $V$.
\begin{definition}\label{DunklOp}
The \emph{rational Dunkl operator} is defined by
\begin{align*}
D_{a} = \partial_{a} - \sum_{\alpha \in R_{+}} \dfrac{c_{\alpha} \langle \alpha, a \rangle}{\langle \alpha, \cdot \rangle} (1-\sigma_{\alpha}).
\end{align*}
The Dunkl operator is an element of the semi-direct product of the group algebra $\mathbb{C}W$ and the algebra of differential operators on $V_{reg}$, i.e.  $D_{a} \in \mathbb{C}W \ltimes D(V_{reg})$.
\end{definition}
The following property of the Dunkl operators is of great importance for our further results.
\begin{proposition}
Let $w \in W$. Then $w D_{a} = D_{w a } w$.
\end{proposition}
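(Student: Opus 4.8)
The plan is to prove the equivalent identity $w D_a w^{-1} = D_{wa}$ and then right-multiply by $w$. I would treat the two summands of $D_a$ from Definition \ref{DunklOp} separately, conjugate each by $w$, and recombine. Throughout I exploit that $w \in W \subset O(V)$ acts orthogonally and that the multiplicity function $c$ is $W$-invariant.

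First I would handle the directional derivative. Writing the $W$-action on functions as $(w\cdot f)(x) = f(w^{-1}x)$, a short chain-rule computation shows that conjugation sends $\partial_{x_i}$ to $\sum_k w_{ki}\partial_{x_k}$, so that $w\partial_a w^{-1} = \sum_k (wa)_k \partial_{x_k} = \partial_{wa}$. This is the routine part.

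Next I would conjugate a single term of the sum, which requires three facts. First, the reflection conjugates as $w\sigma_\alpha w^{-1} = \sigma_{w\alpha}$, which is exactly Lemma \ref{conjugateReflection} (equivalently Proposition \ref{propgroupaction}), hence $w(1-\sigma_\alpha)w^{-1} = 1-\sigma_{w\alpha}$. Second, the multiplication operator by $1/\langle\alpha,\cdot\rangle$ conjugates to multiplication by $1/\langle w\alpha,\cdot\rangle$, since $(w\cdot\phi)(x) = \phi(w^{-1}x)$ and orthogonality gives $\langle\alpha,w^{-1}x\rangle = \langle w\alpha,x\rangle$. Third, the scalars $c_\alpha$ and $\langle\alpha,a\rangle$ pass through $w$ unchanged, but I would rewrite $\langle\alpha,a\rangle = \langle w\alpha,wa\rangle$ so the conjugated term is naturally indexed by $w\alpha$. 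Assembling these, the $\alpha$-term becomes $c_\alpha\,\langle w\alpha,wa\rangle\,\langle w\alpha,\cdot\rangle^{-1}(1-\sigma_{w\alpha})$.

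The crux is the reindexing. Setting $\beta = w\alpha$ and using $W$-invariance of $c$ (so $c_\alpha = c_\beta$), the conjugated sum reads $\sum_{\beta\in w(R_+)} c_\beta\,\langle\beta,wa\rangle\,\langle\beta,\cdot\rangle^{-1}(1-\sigma_\beta)$. The obstacle is that $w$ need not preserve $R_+$, so the index set is $w(R_+)$ rather than $R_+$. I would resolve this by observing that $w(R_+)$ contains exactly one element of each pair $\{\gamma,-\gamma\}$ with $\gamma\in R_+$. When $\beta = -\gamma$ for $\gamma\in R_+$, the substitution produces two sign flips that cancel: $\langle-\gamma,wa\rangle = -\langle\gamma,wa\rangle$ and $\langle-\gamma,\cdot\rangle^{-1} = -\langle\gamma,\cdot\rangle^{-1}$, while $\sigma_{-\gamma} = \sigma_\gamma$ and, since $\sigma_\gamma\in W$ with $c$ being $W$-invariant, $c_{-\gamma} = c_\gamma$. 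Thus every term of the conjugated sum equals the corresponding $R_+$-indexed term of $D_{wa}$, and the sum over $w(R_+)$ coincides with the sum over $R_+$. Combining this with $w\partial_a w^{-1} = \partial_{wa}$ gives $w D_a w^{-1} = D_{wa}$, and right-multiplication by $w$ yields $w D_a = D_{wa}w$ as claimed. The sign-cancellation in this reindexing is the only genuinely delicate point; everything else is bookkeeping.
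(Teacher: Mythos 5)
Your proof is correct and follows essentially the same route as the paper: commute $w$ past $\partial_a$ and past each reflection term using $w\sigma_\alpha w^{-1}=\sigma_{w\alpha}$, then recombine. You are in fact more careful than the paper's own computation, which leaves the multiplication operator $1/\langle\alpha,\cdot\rangle$ and the scalar $\langle\alpha,a\rangle$ untouched when pulling $w$ through and never addresses the reindexing of the sum over $w(R_+)$ (with the sign cancellation for roots sent to negative roots) -- precisely the point you correctly single out as the only delicate step.
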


\begin{proof}
Consider the Dunkl operator as in Definition \ref{DunklOp}. Let $w \in W$. Then using the obvious relation $w \partial_{a} = \partial_{w(a)} w$ and the identity $ w \sigma_{\alpha} = \sigma_{w \alpha} w$ from Proposition \ref{propgroupaction} we have
\begin{align*}
w D_{a} &= w \partial_{a} - w \sum_{\alpha \in R_{+}} \dfrac{c_{\alpha} \langle \alpha, a \rangle}{\langle \alpha, \cdot \rangle} (1-\sigma_{\alpha}) \\
&= \partial_{w(a)} w - \sum_{\alpha \in R_{+}} \dfrac{c_{\alpha} \langle \alpha, a \rangle}{\langle \alpha, \cdot \rangle} (w-\sigma_{w \alpha} w ) \\
&= \left( \partial_{w(a)} - \sum_{\alpha \in R_{+}} \dfrac{c_{\alpha} \langle \alpha, a \rangle}{\langle \alpha, \cdot \rangle} (1-\sigma_{w \alpha}) \right) w \\
&= D_{w(a)} w.
\end{align*}
\end{proof}
We can now discuss the main property of the Dunkl operators, namely, their commutativity.
\begin{theorem}
The Dunkl operators commute for any $a, b \in \mathbb{C}^{n}$, that is, $[D_{a}, D_{b} ] =0$.
\end{theorem}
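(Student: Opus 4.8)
The plan is to compute the commutator directly and reduce it, step by step, to a statement about pairs of roots lying in a common two-dimensional plane. Write $D_a = \partial_a - T_a$, where $T_a = \sum_{\alpha \in R_+} c_\alpha \langle \alpha, a\rangle\, X_\alpha$ and $X_\alpha = \langle \alpha, x\rangle^{-1}(1 - \sigma_\alpha)$ denotes the difference part attached to $\alpha$. Expanding bilinearly gives $[D_a, D_b] = [\partial_a, \partial_b] - [\partial_a, T_b] + [\partial_b, T_a] + [T_a, T_b]$, and $[\partial_a, \partial_b] = 0$. The first real computation is $[\partial_a, X_\beta]$: using that $\sigma_\beta$ is an orthogonal involution one has $\partial_a \sigma_\beta = \sigma_\beta \partial_{\sigma_\beta a}$, and since the root system is normalized, $\sigma_\beta a = a - \langle a, \beta\rangle \beta$, so $\partial_a - \partial_{\sigma_\beta a} = \langle a, \beta\rangle \partial_\beta$. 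Together with $\partial_a(\langle\beta,x\rangle^{-1}) = -\langle\beta,a\rangle\langle\beta,x\rangle^{-2}$ this yields $[\partial_a, X_\beta] = -\langle\beta,a\rangle \langle\beta,x\rangle^{-2}(1-\sigma_\beta) + \langle\beta,a\rangle\langle\beta,x\rangle^{-1}\sigma_\beta \partial_\beta$. The point is that the $\beta$-term of $[\partial_a, T_b]$ then carries the scalar factor $c_\beta\langle\beta,a\rangle\langle\beta,b\rangle$, which is symmetric in $a$ and $b$. Hence $-[\partial_a, T_b] + [\partial_b, T_a] = 0$, and the entire theorem collapses to proving $[T_a, T_b] = 0$.

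For the second step I antisymmetrize. Since $[X_\alpha, X_\alpha] = 0$, only distinct roots contribute, and pairing $(\alpha,\beta)$ with $(\beta,\alpha)$ gives $[T_a, T_b] = \sum_{\{\alpha,\beta\}} c_\alpha c_\beta\, \omega_{\alpha\beta}(a,b)\, [X_\alpha, X_\beta]$, where $\omega_{\alpha\beta}(a,b) = \langle\alpha,a\rangle\langle\beta,b\rangle - \langle\beta,a\rangle\langle\alpha,b\rangle$ and the sum runs over unordered pairs of positive roots. I then expand each $[X_\alpha, X_\beta]$ by moving reflections to the right via $\sigma_\alpha \langle\beta,x\rangle^{-1} = \langle\sigma_\alpha\beta, x\rangle^{-1}\sigma_\alpha$ and collect the result by group element. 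The identity contributions cancel, the coefficients of the single reflections $\sigma_\alpha, \sigma_\beta$ come out as differences such as $\langle\alpha,x\rangle^{-1}(\langle\beta,x\rangle^{-1} - \langle\sigma_\alpha\beta,x\rangle^{-1})$, and the genuine two-reflection terms $\sigma_\alpha\sigma_\beta$ and $\sigma_\beta\sigma_\alpha$ acquire coefficients $\pm\langle\alpha,x\rangle^{-1}\langle\sigma_\alpha\beta,x\rangle^{-1}$.

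The hard part is showing that this grouped sum vanishes, and it is essential that no individual pair contributes zero on its own. The difficulty is that whenever $\alpha$ and $\beta$ span a common plane $P$, the vector $\sigma_\alpha\beta$ is again proportional to a root of $R$, so the reflections $\sigma_\alpha$ and the products $\sigma_\alpha\sigma_\beta$ appearing above are shared among all pairs spanning $P$; the terms cancel only after they are combined. The resolution I would carry out is to organize the sum by the two-dimensional subspaces $P = \spn\{\alpha,\beta\}$ and to prove that the total contribution of the sub-root-system $R \cap P$ vanishes. Restricted to $P$ this is a rank-two dihedral root system; the multiplicity $c$ is constant on each orbit of the associated dihedral group by its $W$-invariance, and $\omega_{\alpha\beta}$ is the oriented-area form, a relative invariant transforming through the sign character of that group. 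These facts reduce the identity to the dihedral case $I_2(m)$, where it follows from a partial-fraction computation in the two linear forms combined with the explicit crystallographic angles between the roots. This rank-two cancellation is the genuine content of the theorem; once it is in hand, summing over all planes $P$ gives $[T_a, T_b] = 0$ and therefore $[D_a, D_b] = 0$.
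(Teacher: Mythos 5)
Your proposal is correct in outline, but it takes a genuinely different route from the paper. The paper uses Etingof's trick: it never expands $[D_a,D_b]$ directly, but instead computes $[D_a,\langle x,\xi\rangle]$, feeds the result into the Jacobi identity for $[[D_a,D_b],\langle x,\xi\rangle]$, and observes that the surviving expression is symmetric under $a\leftrightarrow b$, so the double commutator vanishes; commutativity then follows because an element of $\mathbb{C}W\ltimes D(V_{reg})$ of negative degree that commutes with multiplication by every linear form must be zero (a step the paper leaves implicit). You take the direct, Dunkl-style route that the paper explicitly declines as ``tedious'': split $D_a=\partial_a-T_a$, kill the cross terms $-[\partial_a,T_b]+[\partial_b,T_a]$ by noting that the $\beta$-term carries the $a\leftrightarrow b$-symmetric scalar $c_\beta\langle\beta,a\rangle\langle\beta,b\rangle$ (your formula for $[\partial_a,X_\beta]$ checks out), antisymmetrize $[T_a,T_b]$ over unordered pairs against the area form $\omega_{\alpha\beta}(a,b)$, and reduce the remaining cancellation to the rank-two sub-root-systems $R\cap P$. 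What the paper's approach buys is that all group-algebra bookkeeping disappears: the only root-system input is $\sigma_\alpha\xi=\xi-\langle\alpha,\xi\rangle\alpha^\vee$ and no dihedral lemma is needed. What your approach buys is a proof that closes at the operator level without the final ``hence the operator is zero'' step, and a sharper picture of where the content lives — the planar cancellation — which is in any case needed later: the paper proves a symmetric-form cousin of it as Proposition~\ref{heckmanProp} in order to compute $\sum_j D_{e_j}^2$.

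Two caveats. First, the rank-two cancellation is the entire difficulty of this route and you only assert it; as you yourself note, no single pair contributes zero, and the coefficient of a fixed reflection $\sigma_\gamma$ collects contributions from every plane through $\gamma$, each of which must be shown to vanish separately, while the two-reflection terms group by the rotation $w=\sigma_\alpha\sigma_\beta$. Be aware that you cannot simply cite Proposition~\ref{heckmanProp}: that lemma assumes a bilinear form $B$ invariant under the relevant reflections, whereas your $\omega_{\alpha\beta}(a,b)$ is the oriented-area form, which is anti-invariant (it transforms by the sign character), so the dihedral identity you need is an analogous but genuinely different statement that must be proved on its own. Second, the root system in this section is a general normalized Coxeter system, so the planes carry arbitrary dihedral groups $I_2(m)$; appealing to ``crystallographic angles'' is not available here, and the partial-fraction computation has to be carried out for all $m$.
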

\begin{proof}
There are several ways to approach the proof of this statement. A rigorous one was used initially by Dunkl \cite{Dunkl}. However, the calculations are quite tedious. Here we shall employ a clever trick, used by Etingof \cite{Etingof}. Here is a sketch of the proof. Instead of showing the commutativity of the operators explicitly, we consider the commutator of the Dunkl operator $D_{a}$ with the function $\langle x, \xi \rangle : \mathbb{C} [x] \to \mathbb{C}[x]$. Then we consider the commutator of another Dunkl operator $D_{b}$, $b \in \mathbb{C}^{n}$, $a \neq b$, with $[D_{a}, \langle x, \xi \rangle ]$. Employing the Jacobi identity, we split the commutator in two symmetric parts, which in turn cancel. This leads to the conclusion that the Dunkl operators indeed commute.
\par  Let $\xi = (\xi_{1}, \dots, \xi_{n}) \in \mathbb{C}^{n}$. Consider first the following commutator.
\begin{align*}
[D_{a}, \langle x ,\xi \rangle ] = \left[ \partial_{a}  - \sum_{\alpha \in R_{+}} \dfrac{c_{\alpha} \langle \alpha, a \rangle}{\langle \alpha, \cdot \rangle} (1-\sigma_{\alpha}),  \langle x ,\xi \rangle  \right].
\end{align*}
Writing out the resulting expression explicitly, we get
\begin{align*}
\left[ D_{a},  \langle x ,\xi \rangle  \right] = [ \partial_{a},  \langle x ,\xi \rangle  ] - \left[  \sum_{\alpha \in R_{+}} \dfrac{c_{\alpha} \langle \alpha, a \rangle}{\langle \alpha, \cdot \rangle}   ,  \langle \cdot ,\xi \rangle  \right] + \left[  \sum_{\alpha \in R_{+}} \dfrac{c_{\alpha} \langle \alpha, a \rangle}{\langle \alpha, \cdot \rangle}\sigma_{\alpha}  , \langle x ,\xi \rangle  \right].
\end{align*}
We only need to worry about the first and last commutators in the above expression, since the second one yields
\begin{align*}
- \left[  \sum_{\alpha \in R_{+}} \dfrac{c_{\alpha} \langle \alpha, a \rangle}{\langle \alpha, \cdot \rangle}   ,  \langle x ,\xi \rangle  \right]  = -  \cancel{\sum_{\alpha \in R_{+}} \dfrac{c_{\alpha} \langle \alpha, a \rangle}{\langle \alpha, \cdot \rangle}    \langle x ,\xi \rangle}   +   \cancel{\langle x ,\xi \rangle  \sum_{\alpha \in R_{+}} \dfrac{c_{\alpha} \langle \alpha, a \rangle}{\langle \alpha,\cdot \rangle}}  = 0.
\end{align*}
The first commutator yields
\begin{align*}
[ \partial_{a},  \langle x ,\xi \rangle  ] = \left[ \sum_{i} a_{i} \partial_{x_{i}} , \sum_{j} x_{j} \xi_{j} \right] =  \sum_{i} a_{i} ( \partial_{x_{i}} x_{i} - x_{i} \partial_{x_{i}} ) \xi_{i} ,
\end{align*}
since $\partial_{x_{i}} x_{j} - x_{j} \partial_{x_{i}} = 0$ for $i \neq j $. Using the guiding principle behind our Key Lemma \ref{keyLemma}, which states that $\partial_{x_{i}} x_{i} - x_{i} \partial_{x_{i}} = \dfrac{d}{ d x_{i}} x_{i} = 1 $, we get
\begin{align*}
[ \partial_{a},  \langle x ,\xi \rangle  ] = \sum_{i} a_{i} \xi_{i}  = \langle a, \xi \rangle.
\end{align*}
Now, consider the third commutator. We have
\begin{align*}
\left[  \sum_{\alpha \in R_{+}} \dfrac{c_{\alpha} \langle \alpha, a \rangle}{\langle \alpha, \cdot \rangle}\sigma_{\alpha}  , \langle x ,\xi \rangle  \right] &= \sum_{\alpha \in R_{+}} \dfrac{c_{\alpha} \langle \alpha, a \rangle}{\langle \alpha, \cdot \rangle}\sigma_{\alpha}   \langle x ,\xi \rangle   -      \langle x ,\xi \rangle  \sum_{\alpha \in R_{+}} \dfrac{c_{\alpha} \langle \alpha, a \rangle}{\langle \alpha,\cdot \rangle}\sigma_{\alpha} \\
&= \sum_{\alpha \in R_{+}} \dfrac{c_{\alpha} \langle \alpha, a \rangle}{\langle \alpha, \cdot \rangle} \left( \sigma_{\alpha} \langle x ,\xi \rangle  - \langle x ,\xi \rangle  \sigma_{\alpha} \right).
\end{align*}
Since $ \sigma_{\alpha}$ is an orthogonal reflection, we know that for any $u, v \in V$, we have $\langle \sigma_{\alpha}^{-1}u , v \rangle = \langle u, \sigma_{\alpha} v \rangle$.  Hence, we get
\begin{align*}
\left[  \sum_{\alpha \in R_{+}} \dfrac{c_{\alpha} \langle \alpha, a \rangle}{\langle \alpha, \cdot \rangle}\sigma_{\alpha}  , \langle x ,\xi \rangle  \right]
&= \sum_{\alpha \in R_{+}} \dfrac{c_{\alpha} \langle \alpha, a \rangle}{\langle \alpha, \cdot \rangle} \left( \sigma_{\alpha} \langle x ,\xi \rangle   - \langle x ,\xi \rangle  \sigma_{\alpha} \right) \\
&= \sum_{\alpha \in R_{+}} \dfrac{c_{\alpha} \langle \alpha, a \rangle}{\langle \alpha, \cdot \rangle}  \left(  \langle \sigma_{\alpha}^{-1} x ,\xi \rangle  \sigma_{\alpha}- \langle x ,\xi \rangle  \sigma_{\alpha} \right) \\
&=  \sum_{\alpha \in R_{+}} \dfrac{c_{\alpha} \langle \alpha, a \rangle}{\langle \alpha, \cdot \rangle}  \left(  \langle  x , \sigma_{\alpha} \xi \rangle \sigma_{\alpha}  - \langle x ,\xi \rangle  \sigma_{\alpha} \right).
\end{align*}
Now using the fact that $\sigma_{\alpha} \xi = \xi - 2 \dfrac{\langle \alpha, \xi \rangle}{\langle \alpha, \alpha \rangle} \alpha = \xi - \langle \alpha, \xi \rangle \alpha^{\vee}$, we have
\begin{align*}
\left[  \sum_{\alpha \in R_{+}} \dfrac{c_{\alpha} \langle \alpha, a \rangle}{\langle \alpha, \cdot \rangle}\sigma_{\alpha}  , \langle x ,\xi \rangle  \right]
&=  \sum_{\alpha \in R_{+}} \dfrac{c_{\alpha} \langle \alpha, a \rangle}{\langle \alpha, \cdot \rangle}  \left(  \langle  x , ( \xi - \langle \alpha, \xi \rangle \alpha^{\vee}) \rangle  \sigma_{\alpha} - \langle x ,\xi \rangle  \sigma_{\alpha} \right) \\
&= \sum_{\alpha \in R_{+}} \dfrac{c_{\alpha} \langle \alpha, a \rangle}{\langle \alpha, \cdot \rangle}  \left(  \cancel{ \langle x ,\xi \rangle  \sigma_{\alpha}} - \langle x,  \langle \alpha, \xi \rangle \alpha^{\vee} \rangle \sigma_{\alpha} -  \cancel{ \langle x ,\xi \rangle  \sigma_{\alpha}} \right) \\
&=  \sum_{\alpha \in R_{+}} \dfrac{c_{\alpha} \langle \alpha, a \rangle}{\langle \alpha, \cdot \rangle}  \langle x,  \langle \alpha, \xi \rangle \alpha^{\vee} \rangle \sigma_{\alpha}.
\end{align*}
Combining the two results, we obtain
\begin{align}
\left[ D_{a},  \langle x ,\xi \rangle  \right]  &= \langle a, \xi \rangle - \sum_{\alpha \in R_{+}} \dfrac{c_{\alpha} \langle \alpha, a \rangle}{\langle \alpha, \cdot \rangle}  \langle x,  \langle \alpha, \xi \rangle \alpha^{\vee} \rangle \sigma_{\alpha} \nonumber \\
&= \langle \alpha, \xi \rangle - \sum_{\alpha \in R_{+}} \dfrac{ 2 c_{\alpha} \langle \alpha, a \rangle \langle \alpha, \xi \rangle \cancel{\langle x , \alpha \rangle } }{\cancel{\langle \alpha , \cdot \rangle }\langle \alpha, \alpha \rangle} \sigma_{\alpha}. \label{propEtingof}
\end{align}
Next consider 
\begin{align}\label{dunklCommutator}
\left[ [ D_{a} , D_{b} ] , \langle x ,\xi \rangle \right] = [[ D_{a}, \langle x ,\xi \rangle ], D_{b} ] - [[ D_{b}, \langle x ,\xi \rangle] , D_{a} ],
\end{align}
where we have used the Jacobi identity to split the commutator in two. Using the result in Equation \eqref{propEtingof} the first commutator of Equation \eqref{dunklCommutator} is
\begin{align*}
[[ D_{a}, \langle x ,\xi \rangle ], D_{b} ]  &= \left[  \langle \alpha, \xi \rangle - \sum_{\alpha \in R_{+}} \dfrac{ 2 c_{\alpha} \langle \alpha, a \rangle \langle \alpha, \xi \rangle  }{\langle \alpha , \cdot \rangle \langle \alpha, \alpha \rangle} \sigma_{\alpha}, D_{b} \right] \\
&= \cancelto{0}{ \left[  \langle \alpha, \xi \rangle, D_{b} \right] }+ \left[ - \sum_{\alpha \in R_{+}} \dfrac{ 2 c_{\alpha} \langle \alpha, a \rangle \langle \alpha, \xi \rangle   }{\langle \alpha , \cdot \rangle \langle \alpha, \alpha \rangle} \sigma_{\alpha}, D_{b} \right].
\end{align*}
We only need to worry about the action of the orthogonal reflection $\sigma_{\alpha}$ on the Dunkl operator $D_{b}$, given by $\sigma_{\alpha} D_{b} = D_{\sigma_{\alpha} (b)} \sigma_{\alpha}$, thus, yielding
\begin{align*}
[[ D_{a}, \langle x ,\xi \rangle ], D_{b} ] &= - \left( \sum_{\alpha \in R_{+}} \dfrac{ 2 c_{\alpha} \langle \alpha, a \rangle \langle \alpha, \xi \rangle   }{\langle \alpha , \cdot \rangle \langle \alpha, \alpha \rangle} \left( \sigma_{\alpha} D_{b} - D_{b} \sigma_{\alpha} \right) \right) \\
&= \sum_{\alpha \in R_{+}} \dfrac{ 2 c_{\alpha} \langle \alpha, a \rangle \langle \alpha, \xi \rangle   }{\langle \alpha , \cdot \rangle \langle \alpha, \alpha \rangle}  \left( D_{b} \sigma_{\alpha} - \sigma_{\alpha} D_{b} \right).
\end{align*}
Consider for a moment the expression in the brackets. We get
\begin{align*}
D_{b} \sigma_{\alpha} - \sigma_{\alpha} D_{b} =  D_{b} \sigma_{\alpha} - D_{\sigma_{\alpha} (b) } \sigma_{\alpha}   = (D_{b - \sigma_{\alpha} (b)} ) \sigma_{\alpha} .
\end{align*}
Using Definition \ref{definitionOrthogonalReflection}, we get
\begin{align*}
D_{b} \sigma_{\alpha} - \sigma_{\alpha} D_{b} =  \left( D_{b - b + 2 \frac{\langle \alpha, b \rangle}{\langle \alpha, \alpha \rangle}\alpha} \right) \sigma_{\alpha} = D_{\langle \alpha, b \rangle \alpha^{\vee}} \sigma_{\alpha} = \langle \alpha , b \rangle D_{\alpha^{\vee}} \sigma_{\alpha}.
\end{align*}
Hence, going back to the commutator we get
\begin{align}\label{comm1}
[[ D_{a}, \langle x ,\xi \rangle ], D_{b} ] &=  \sum_{\alpha \in R_{+}} \dfrac{ 2 c_{\alpha} \langle \alpha, a \rangle \langle \alpha, \xi \rangle   }{\langle \alpha , \cdot \rangle \langle \alpha, \alpha \rangle}   \langle \alpha , b \rangle D_{\alpha^{\vee}} \sigma_{\alpha}. 
\end{align}
Now consider the second commutator of Equation \eqref{dunklCommutator}. We have
\begin{align*}
[[ D_{b}, \langle x ,\xi \rangle ], D_{a} ] &=  \sum_{\beta \in R_{+}} \dfrac{ 2 c_{\beta} \langle \beta, b \rangle \langle \beta, \xi \rangle   }{\langle \beta , \cdot \rangle \langle \beta, \beta \rangle}  \left( D_{a} \sigma_{\beta} - \sigma_{\beta} D_{a} \right).
\end{align*}
Again using Definition \ref{definitionOrthogonalReflection}, we get
\begin{align*}
D_{a} \sigma_{\beta} - \sigma_{\beta} D_{a} =  \left( D_{a - a + 2 \frac{\langle \beta, a \rangle}{\langle \beta, \beta \rangle}\beta} \right) \sigma_{\beta} = D_{\langle \beta, a \rangle \beta^{\vee}} \sigma_{\beta} = \langle \beta , a \rangle D_{\beta^{\vee}} \sigma_{\beta}.
\end{align*}
Hence, the expression we obtain for the commutator is 
\begin{align}\label{comm2}
[[ D_{b}, \langle x ,\xi \rangle ], D_{a} ] &=  \sum_{\beta \in R_{+}} \dfrac{ 2 c_{\beta} \langle \beta, b \rangle \langle \beta, \xi \rangle   }{\langle \beta , \cdot \rangle \langle \beta, \beta \rangle} \langle \beta , a \rangle D_{\beta^{\vee}} \sigma_{\beta}.
\end{align}
Combining the expressions in \eqref{comm1} and \eqref{comm2}, Equation \eqref{dunklCommutator} becomes
\begin{align*}
\left[ [ D_{a} , D_{b} ] , \langle x ,\xi \rangle \right] &= \sum_{\alpha \in R_{+}} \dfrac{ 2 c_{\alpha} \langle \alpha, a \rangle \langle \alpha, \xi \rangle   }{\langle \alpha , \cdot\rangle \langle \alpha, \alpha \rangle}   \langle \alpha , b \rangle D_{\alpha^{\vee}} \sigma_{\alpha} - \sum_{\beta \in R_{+}} \dfrac{ 2 c_{\beta} \langle \beta, b \rangle \langle \beta, \xi \rangle   }{\langle \beta , \cdot \rangle \langle \beta, \beta \rangle} \langle \beta , a \rangle D_{\beta^{\vee}} \sigma_{\beta} \\
&= \sum_{\alpha \in R_{+}} \dfrac{2 c_{\alpha} \langle \alpha, \xi \rangle }{\langle \alpha , \cdot \rangle \langle \alpha, \alpha \rangle} ( \langle \alpha, a \rangle \langle \alpha, b \rangle - \langle \alpha , b \rangle \langle \alpha, a \rangle ) D_{\alpha^{\vee}} \sigma_{\alpha} = 0,
\end{align*}
which indeed shows that the Dunkl operators commute for arbitrary $a , b \in \mathbb{C}^{n}$.
\end{proof}
Before we proceed with the next big result, we need the following proposition.

\begin{proposition}\label{heckmanProp}
Let $B(\cdot, \cdot) $ be a bilinear form on $V$ such that for any $a, b \in V$, $B( \sigma_{\alpha} a, \sigma_{\alpha} b ) = B(a, b)$, where $\alpha \in R \cap \spn \langle a,  b \rangle$. Let $w \in W$ be a plane rotation, i.e. $\sigma_{\alpha} \sigma_{\beta} = w$ and $w \neq id$. Then
\begin{align*}
\sum_{\substack{ \alpha, \beta \in R_{+} \\  \sigma_{\alpha} \sigma_{\beta} = w}} c_{\alpha} c_{\beta} B(\alpha, \beta ) \dfrac{ (1-\sigma_{\alpha})}{\langle \alpha, \cdot \rangle} \dfrac{ (1-\sigma_{\beta})}{\langle \beta , \cdot \rangle}= 0.
\end{align*}
\end{proposition}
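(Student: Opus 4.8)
The plan is to localize the sum to the two–dimensional plane on which the rotation $w$ acts, reduce the statement to an explicit rank–two (dihedral) computation, and then show that the resulting operator vanishes by collecting the coefficient of each group element separately. First I would establish the geometric reduction: if $\sigma_\alpha \sigma_\beta = w$ with $\alpha,\beta$ linearly independent, then $\sigma_\alpha\sigma_\beta$ fixes $H^\alpha \cap H^\beta$ pointwise and acts as a rotation precisely on $P := \spn\langle \alpha,\beta\rangle$, so $P$ must coincide with the plane of the given rotation $w$. Hence \emph{every} pair occurring in the sum satisfies $\alpha,\beta \in R \cap P$, the index set becomes finite, and writing the root lines in $P$ at angles $\theta_j = j\pi/m$ with $w$ the rotation by $2\pi d/m$, the condition $\sigma_\alpha\sigma_\beta = w$ reduces to a fixed angular separation, so the admissible pairs form a single cyclic family. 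I would also extract two consequences of the hypothesis on $B$: restricted to $P$ the form is invariant under the whole dihedral group $W_P = \langle \sigma_\gamma : \gamma \in R\cap P\rangle$ (apply the hypothesis to the generators, whose arguments remain in $P$), and, specializing, $B(\alpha,\sigma_\alpha\beta) = -B(\alpha,\beta)$ and $B(\alpha,\beta)=0$ whenever $\alpha \perp \beta$.

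Next I would expand each summand as an operator. Using $\sigma_\alpha \langle\beta,\cdot\rangle^{-1} = \langle \sigma_\alpha\beta,\cdot\rangle^{-1}\sigma_\alpha$, a direct computation gives
\[
\frac{1-\sigma_\alpha}{\langle\alpha,\cdot\rangle}\frac{1-\sigma_\beta}{\langle\beta,\cdot\rangle} = \frac{1-\sigma_\beta}{\langle\alpha,\cdot\rangle\langle\beta,\cdot\rangle} - \frac{\sigma_\alpha - w}{\langle\alpha,\cdot\rangle\langle\sigma_\alpha\beta,\cdot\rangle}.
\]
After summation the operator takes the form $S = \sum_{u\in W}\phi_u(x)\,u$, where the only group elements $u$ carrying a nonzero coefficient are the identity, the fixed rotation $w$, and the reflections $\sigma_\gamma$ with $\gamma \in R\cap P$. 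It then suffices to show that each rational–function coefficient $\phi_u$ vanishes identically.

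Finally I would treat the coefficients. For $u=1$ and $u=w$ the coefficient is a sum of reciprocals of products of two linear forms whose root lines are equally spaced; writing $\langle\gamma_j,x\rangle \propto \cos(\theta_j-\phi)$ and using the identity $\frac{1}{\cos A\,\cos(A+\delta)} = \frac{1}{\sin\delta}\big(\tan(A+\delta)-\tan A\big)$ converts the cyclic sum into a telescoping sum of tangents, which vanishes by the $\pi$–periodicity of $\tan$; here the $W_P$–invariance of $B$ ensures that the weights $c_\alpha c_\beta B(\alpha,\beta)$ are constant along the cyclic family. For $u=\sigma_\gamma$ a reflection, exactly two pairs contribute — one in which $\gamma$ plays the role of $\beta$ and one in which it plays the role of $\alpha$ — and these cancel because $\sigma_\gamma$ carries one neighbouring root to minus the other ($\sigma_{\gamma_j}\gamma_{j-1} = -\gamma_{j+1}$), so that the relation $B(\gamma_{j+1},\gamma_j) = B(\gamma_j,\gamma_{j-1})$ coming from $W_P$–invariance of $B$ makes the two rational terms exactly opposite, using also the $W$–invariance of the multiplicity $c$.

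The hard part will be the rank–two identity itself: organizing the telescoping correctly and, above all, carrying the bookkeeping of root lengths and multiplicities when $W_P$ is of type $I_2(m)$ with $m$ even, where $R\cap P$ splits into two $W_P$–orbits of distinct norm and the factors $\langle\gamma,\gamma\rangle$ together with the signs $\epsilon=\pm1$ in $\sigma_\alpha\beta = \epsilon\gamma$ must be tracked to confirm the exact cancellation. Everything else is routine once the reduction to $P$ and the group–element decomposition are in place.
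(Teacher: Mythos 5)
Your proposal is correct in outline and shares the paper's skeleton --- restrict to the rank-two subsystem spanned by the pairs (every pair with $\sigma_\alpha\sigma_\beta=w$ lies in the plane of rotation of $w$), expand the product into group-algebra components, and kill each component separately --- but it diverges at the crux. The paper's proof disposes of the identity component $E_1=\sum c_\alpha c_\beta B(\alpha,\beta)\langle\alpha,\cdot\rangle^{-1}\langle\beta,\cdot\rangle^{-1}$ by an invariance-and-degree argument: $E_1$ is $W(S)$-invariant, so $p_S E_1$ (with $p_S=\prod_{\gamma\in S_+}\langle\gamma,\cdot\rangle$) is a $W(S)$-anti-invariant polynomial of degree $\deg p_S-2<\deg p_S$, hence zero; the component on $w$ is then $-E_1 w=0$ for free. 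You instead propose an explicit dihedral computation, writing the root lines at angles $j\pi/m$ and telescoping $1/(\cos A\cos(A+\delta))$ into differences of tangents over the cyclic family of pairs. Both work; the paper's argument is shorter and avoids all trigonometric bookkeeping, while yours is more elementary and makes the mechanism of cancellation visible. Your treatment of the reflection components (two contributions per $\sigma_\gamma$, cancelling via $\sigma_\gamma\beta=\pm\alpha$ and the invariance of $B$ and $c$) is essentially identical to the paper's $E_2$ step.

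One justification in your sketch is imprecise and you should repair it when writing the details: the weights $c_\alpha c_\beta B(\alpha,\beta)$ are \emph{not} in general constant along the whole cyclic family of pairs $(\gamma_{j+d},\gamma_j)$. The rotations of $W_P$ shift $j\mapsto j+2$, so for $I_2(m)$ with $m$ even the pairs split into two rotation-orbits on which the weight is constant but possibly different. When $d$ is even each orbit is separately stable under $j\mapsto j+d$ and the tangent sum telescopes to zero within each orbit, so no identification of the two weights is needed; when $d$ is odd you must use a reflection of $W_P$ (together with the symmetry of $B$, which both you and the paper use implicitly) to show the two weights coincide, since the shift $j\mapsto j+d$ exchanges the parity classes and the two partial telescoping sums only cancel against each other. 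You flagged this bookkeeping as the hard part, which is the right instinct; just be aware that the fix is the orbit-by-orbit telescoping plus the reflection identification, not a blanket constancy claim.
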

\begin{proof}
Writing out the expression explicitly we get
\begin{align*}
\sum_{\substack{ \alpha, \beta \in R_{+} \\  \sigma_{\alpha} \sigma_{\beta} = w}} c_{\alpha} c_{\beta} B(\alpha, \beta ) \dfrac{ (1-\sigma_{\alpha})}{\langle \alpha, \cdot \rangle} \dfrac{ (1-\sigma_{\beta})}{\langle \beta , \cdot \rangle}= E_{1} + E_{2} + E_{3},  
\end{align*}
where 
\begin{align*}
E_{1} &= \sum_{\substack{ \alpha, \beta \in R_{+} \\  \sigma_{\alpha} \sigma_{\beta} = w}} c_{\alpha} c_{\beta} B(\alpha, \beta ) \dfrac{1}{\langle \alpha, \cdot \rangle \langle \beta, \cdot \rangle } , \\
E_{2} &= \sum_{\substack{ \alpha, \beta \in R_{+} \\  \sigma_{\alpha} \sigma_{\beta} = w}} c_{\alpha} c_{\beta} B(\alpha, \beta ) \left( -  \dfrac{1}{\langle \alpha , \cdot \rangle} \sigma_{\alpha}\dfrac{1}{\langle \beta , \cdot \rangle}  - \dfrac{1}{\langle \alpha , \cdot \rangle} \dfrac{1}{\langle \beta , \cdot \rangle}  \sigma_{\beta} \right) \\
&=  \sum_{\substack{ \alpha, \beta \in R_{+} \\  \sigma_{\alpha} \sigma_{\beta} = w}} c_{\alpha} c_{\beta} B(\alpha, \beta ) \left(   -  \dfrac{1}{\langle \alpha , \cdot \rangle} \dfrac{1}{\langle \sigma_{\alpha}\beta , \cdot \rangle}\sigma_{\alpha}  - \dfrac{1}{\langle \alpha , \cdot \rangle} \dfrac{1}{\langle \beta , \cdot \rangle}  \sigma_{\beta} \right), \qquad \text{ and } \\
E_{3} &= \sum_{\substack{ \alpha, \beta \in R_{+} \\  \sigma_{\alpha} \sigma_{\beta} = w}} c_{\alpha} c_{\beta} B(\alpha, \beta )   \left(   \dfrac{1}{\langle \alpha, \cdot \rangle} \sigma_{\alpha} \dfrac{1}{\langle \beta, \cdot \rangle}\sigma_{\beta} \right) \\
&=  \sum_{\substack{ \alpha, \beta \in R_{+} \\  \sigma_{\alpha} \sigma_{\beta} = w}} c_{\alpha} c_{\beta} B(\alpha, \beta )   \left(   \dfrac{1}{\langle \alpha, \cdot \rangle}\dfrac{1}{\langle \sigma_{\alpha} \beta, \cdot \rangle} \sigma_{\alpha} \sigma_{\beta} \right) .
\end{align*}
Let $S$ be the normalized root system of the largest dihedral group $W(S)$ and let $w$ be an element of $S$. If $w = \sigma_{\alpha} \sigma_{\beta}$, then for any $\gamma \in S$ we have $\sigma_{\gamma} w \sigma_{\gamma} = w^{-1}$. Hence, $\sigma_{\gamma}  \sigma_{\alpha} \sigma_{\gamma} \sigma_{\beta} = \sigma_{\beta} \sigma_{\alpha}$. For any $\gamma \in S$, we apply $\sigma_{\gamma}$ to $E_{1}$ to get
\begin{align*}
\sigma_{\gamma} E_{1} &= \sum_{\substack{ \alpha, \beta \in R_{+} \\  \sigma_{\alpha} \sigma_{\beta} = w}} c_{\alpha} c_{\beta} B ( \sigma_{\gamma} \alpha,  \sigma_{\gamma} \beta ) \dfrac{1}{\langle  \sigma_{\gamma} \alpha, \cdot \rangle \langle  \sigma_{\gamma} \beta, \cdot \rangle } \\
&=  \sum_{\substack{ \alpha, \beta \in R_{+} \\  \sigma_{\alpha} \sigma_{\beta} = w}} c_{\alpha} c_{\beta} B(\alpha, \beta ) \dfrac{1}{\langle \beta, \cdot \rangle \langle \alpha, \cdot \rangle } = E_{1},
\end{align*}
due to the invariance of the bilinear form $B\langle \cdot, \cdot \rangle $ and the independence of the sum on the root system $R_{+}$. Now let $p_{S} = \prod_{\alpha \in S_{+}} \langle \alpha , \cdot \rangle $, where $S_{+} = R_{+} \cap S$. Then for $\sigma_{\alpha} \in W(S)$, we have 
\begin{align*}
\sigma_{\alpha} (p_{S} E_{1}) = \sigma_{\alpha} (p_{S}) \sigma_{\alpha} (E_{1}) = \sigma_{\alpha} (p_{S}) E_{1} = - p_{S} E_{1},
\end{align*}
for all $\alpha \in R_{+} \cap S$, thus, the polynomial $p_{S} E_{1}$ is anti-invariant. But we have $p_{S} E_{1}$ is homogeneous and $\deg( p_{S} E_{1})= \deg(p_{S})-\deg(E_{1}) = \deg (p_{S}) - 2$, thus, $p_{S} E_{1}=0$. 
\par Now consider the expression $E{2}$. Since $\sigma_{\alpha}\sigma_{\beta}=w= \sigma_{\sigma_{\alpha}\beta}\sigma_{\alpha}$, and $B\langle \alpha, \beta \rangle = B \langle \sigma_{\alpha} \beta, \sigma_{\alpha} \alpha \rangle = - B \langle \sigma_{\alpha} \beta , \alpha \rangle$, we see that $E_{2}$ is as well identically zero. 
\par Finally, consider $E_{3}$. We have
\begin{align*}
E_{3} &=  \sum_{\substack{ \alpha, \beta \in R_{+} \\  \sigma_{\alpha} \sigma_{\beta} = w}} c_{\alpha} c_{\beta} B(\alpha, \beta )   \left(   \dfrac{1}{\langle \alpha, \cdot \rangle}\dfrac{1}{\langle \sigma_{\alpha} \beta, \cdot \rangle} \sigma_{\alpha} \sigma_{\beta} \right)  \\
&= -  \left( \sum_{\substack{ \alpha, \beta \in R_{+} \\  \sigma_{\alpha} \sigma_{\beta} = w}} c_{\alpha} c_{\beta} B(\alpha, \beta ) \dfrac{1}{\langle \alpha, \cdot \rangle \langle \beta, \cdot \rangle } \right) \sigma_{\alpha} \sigma_{\beta} = - E_{1} w = 0.
\end{align*}
\end{proof}
\par Our next goal is to use the commutativity property of the Dunkl operators to somehow recover the Hamiltonian of the quantum rational Calogero-Moser \eqref{quantumCM}. The first thing to observe is that the Dunkl operator contains only first order differentials, while the momenta terms in the Hamiltonian \eqref{quantumCM} are of order two. This hints that we need to square the Dunkl operators.

\subsection{The Olshanetsky-Perelomov Operator.}
The ingenious idea is established by Olshanetsky and Perelomov in \cite{OlshanetskyPerelomov}. Here we follow the procedure as outlined in \cite{Etingof}. First we define a new operator, which is a polynomial function in terms of the squares the Dunkl operator. Then we dismiss the terms, which let the expression leave the subalgebra of $W$-invariant functions. In other words, we \emph{restrict} the operator to the subspace of $W$-invariant functions. Hence, we see that this operator defines a quantum integrable system, which is closely related to our Calogero-Moser Hamiltonian. We begin by presenting the so called \emph{Olshanetsky-Perelomov} operator.
\begin{definition}
The \emph{Olshanetsky-Perelomov} operator associated with the $W$-invariant function $c_{\alpha}:R_{+} \to \mathbb{C}$ is the second order differential operator given by
\begin{align}\label{opOperator}
L = \Delta - \sum_{\alpha \in R_{+}} \dfrac{c_{\alpha} (c_{\alpha} + 1 ) \langle \alpha, \alpha \rangle}{\langle \alpha , \cdot \rangle^{2}},
\end{align}
where $\Delta = \sum_{j=1}^{n} \partial_{e_{j}}^{2}$ is the Laplacian.
\end{definition}
\par Note that the Olshanetsky-Perelomov operator is $W$-invariant, which is clear due to the invariance of the Laplacian and the lack of terms involving reflections. Our aim is to obtain a relationship between the Dunkl and the Olshanetsky-Perelomov and in order to do so, we introduce the following restriction mechanism.
\begin{definition}
Define the \emph{restriction operation} $\res$ of the operator $D_{a}$ to the subspace of $W$-invariant differentials by 
\begin{align*}
\res (D_{a}): (\mathbb{C}W \ltimes D(V_{reg}))^{W} \to (D(V_{reg}))^{W}.
\end{align*}
\end{definition}
\par Having introduced the restriction operation, we can now present the operator, consisting of the squared Dunkl operators, which we then restrict to the space of $W$-invariants. The result is due to Heckman \cite{Heckman}, but we follow the approach as in Etingof \cite{Etingof}.
\begin{proposition}\label{opprop}
Let $e_{1}, \dots, e_{n}$ be an orthonormal basis for $V$. Then we have
\begin{align*}
\res\left( \sum_{j=1}^{n} D_{e_{j}}^{2} \right)= \bar{L},
\end{align*}
where 
\begin{align}\label{opop}
\bar{L} = \Delta - \sum_{\alpha \in R_{+}} \dfrac{c_{\alpha} \langle \alpha, \alpha \rangle}{\langle \alpha , \cdot \rangle } \partial_{\alpha^{\vee}}.
\end{align}
\end{proposition}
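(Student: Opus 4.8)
The plan is to expand $\sum_{j} D_{e_j}^2$ into its purely differential part, its two cross terms, and its reflection-valued second-order part, and then to apply $\res$ to each piece. The key conceptual point is that $\res$ acts by moving every reflection $\sigma_\alpha$ to the right of the differential operators (putting the expression in ``normal form'' $\sum_w P_w w$) and then replacing each group element by $1$, which is legitimate precisely because $\sigma_\alpha$ acts as the identity on $W$-invariant functions. Writing $D_a = \partial_a - T_a$ with $T_a = \sum_{\alpha\in R_+}\frac{c_\alpha\langle\alpha,a\rangle}{\langle\alpha,\cdot\rangle}(1-\sigma_\alpha)$, I would first record the operator identities that drive everything: from $\sigma_\alpha\alpha=-\alpha$ one gets $\sigma_\alpha\partial_\alpha = -\partial_\alpha\sigma_\alpha$ and $\sigma_\alpha\frac{1}{\langle\alpha,\cdot\rangle} = -\frac{1}{\langle\alpha,\cdot\rangle}\sigma_\alpha$, while orthonormality gives $\sum_j\langle\alpha,e_j\rangle\partial_{e_j}=\partial_\alpha$ and $\sum_j\langle\alpha,e_j\rangle\langle\beta,e_j\rangle=\langle\alpha,\beta\rangle$.

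Next I would dispose of the reflection-valued second-order part $\sum_j T_{e_j}^2$, showing it vanishes before restriction is even applied. Summing over $j$ turns it into $\sum_{\alpha,\beta\in R_+} c_\alpha c_\beta\langle\alpha,\beta\rangle\,\frac{1}{\langle\alpha,\cdot\rangle}(1-\sigma_\alpha)\frac{1}{\langle\beta,\cdot\rangle}(1-\sigma_\beta)$. The diagonal contribution $\alpha=\beta$ vanishes identically, since $\frac{1}{\langle\alpha,\cdot\rangle}(1-\sigma_\alpha)\frac{1}{\langle\alpha,\cdot\rangle}(1-\sigma_\alpha) = \frac{1}{\langle\alpha,\cdot\rangle^2}(1+\sigma_\alpha)(1-\sigma_\alpha)=0$; the off-diagonal contribution vanishes after grouping the pairs $(\alpha,\beta)$ by the plane rotation $w=\sigma_\alpha\sigma_\beta$ and invoking Proposition \ref{heckmanProp} with the $W$-invariant bilinear form $B=\langle\cdot,\cdot\rangle$ (which satisfies the required $\sigma_\alpha$-invariance on $\spn\langle\alpha,\beta\rangle$).

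The heart of the matter is the asymmetric behaviour of the two cross terms under $\res$. In $\sum_j \partial_{e_j}T_{e_j}$ every reflection already sits at the far right, so after the product rule each summand is of the form $(\text{differential operator})\,(1-\sigma_\alpha)$ in normal form; replacing $\sigma_\alpha$ by $1$ annihilates the factor $(1-\sigma_\alpha)$, giving $\res\!\left(\sum_j\partial_{e_j}T_{e_j}\right)=0$. By contrast, $\sum_j T_{e_j}\partial_{e_j} = \sum_{\alpha\in R_+} c_\alpha\frac{1}{\langle\alpha,\cdot\rangle}(1-\sigma_\alpha)\partial_\alpha$ has a reflection trapped to the left of $\partial_\alpha$; commuting it out via $\sigma_\alpha\partial_\alpha=-\partial_\alpha\sigma_\alpha$ converts $(1-\sigma_\alpha)\partial_\alpha$ into $\partial_\alpha(1+\sigma_\alpha)$, and now $\res$ sends $\sigma_\alpha\mapsto 1$ and produces a factor $2$. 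Hence $\res\!\left(\sum_j T_{e_j}\partial_{e_j}\right)=2\sum_{\alpha\in R_+}\frac{c_\alpha}{\langle\alpha,\cdot\rangle}\partial_\alpha$.

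Collecting the pieces gives $\res\!\left(\sum_j D_{e_j}^2\right)=\Delta-0-2\sum_{\alpha\in R_+}\frac{c_\alpha}{\langle\alpha,\cdot\rangle}\partial_\alpha+0$, and rewriting $\frac{2}{\langle\alpha,\cdot\rangle}\partial_\alpha=\frac{\langle\alpha,\alpha\rangle}{\langle\alpha,\cdot\rangle}\partial_{\alpha^\vee}$ by means of $\alpha^\vee=\frac{2\alpha}{\langle\alpha,\alpha\rangle}$ yields exactly $\bar L$ as in \eqref{opop}. I expect the main obstacle to be the careful bookkeeping of operator orderings: one must reduce each term to normal form \emph{before} applying $\res$, and it is precisely the sign in $\sigma_\alpha\partial_\alpha=-\partial_\alpha\sigma_\alpha$ that makes one cross term survive (doubling) while the other dies, so any slip there would change the coefficient. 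A secondary point requiring care is checking that the off-diagonal $\sum_j T_{e_j}^2$ terms genuinely fit the hypotheses of Proposition \ref{heckmanProp} once they are organized by the rotations $w=\sigma_\alpha\sigma_\beta$.
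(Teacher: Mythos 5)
Your proof is correct and follows essentially the same route as the paper's: expand the square, normal-order the reflections via $\sigma_\alpha\partial_\alpha=-\partial_\alpha\sigma_\alpha$, kill the double-reflection sum by Proposition \ref{heckmanProp}, and only then apply $\res$. If anything your bookkeeping is tidier than the paper's — you dispose of the diagonal $\alpha=\beta$ term separately (Proposition \ref{heckmanProp} assumes $w\neq\id$, so it does not cover that case) and you retain the Leibniz correction when commuting $\partial_{e_j}$ past $\langle\alpha,\cdot\rangle^{-1}$, a term the paper drops silently (harmlessly, since it ends in $(1-\sigma_\alpha)$ and is annihilated by $\res$).
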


\begin{proof}
We begin by computing the product $D_{\xi} D_{\xi}$ for $\xi \in \mathbb{C}^{n}$ and leave the summation until the end. Hence, we get
\begin{align*}
D_{\xi} D_{\xi}  &= \left( \partial_{\xi} - \sum_{\alpha \in R_{+}} \dfrac{c_{\alpha} \langle \alpha, \xi \rangle}{\langle \alpha, \cdot \rangle} (1-\sigma_{\alpha})\right) \left( \partial_{\xi} - \sum_{\alpha \in R_{+}} \dfrac{c_{\alpha} \langle \alpha,\xi \rangle}{\langle \alpha, \cdot \rangle} (1-\sigma_{\alpha})\right) \nonumber \\
&= \partial_{\xi}^{2} - \partial_{\xi} \sum_{\alpha \in R_{+}} \dfrac{c_{\alpha} \langle \alpha, \xi \rangle}{\langle \alpha, \cdot \rangle} (1-\sigma_{\alpha}) - \sum_{\alpha \in R_{+}} \dfrac{c_{\alpha} \langle \alpha, \xi \rangle}{\langle \alpha, \cdot \rangle} (1-\sigma_{\alpha}) \partial_{\xi}  \nonumber \\
& \qquad \qquad \qquad + \sum_{\alpha \in R_{+}}\sum_{\beta \in R_{+}} \dfrac{c_{\alpha} c_{\beta} \langle \alpha, \xi \rangle \langle \beta,\xi \rangle }{\langle \alpha, \cdot \rangle \langle \beta, \cdot \rangle} (1-\sigma_{\alpha})(1-\sigma_{\beta}) \nonumber \\
&= \partial_{\xi}^{2} - \sum_{\alpha \in R_{+}} c_{\alpha} \langle \alpha, \xi \rangle \left( \partial_{\xi}  \dfrac{(1-\sigma_{\alpha})}{ \langle \alpha, \cdot \rangle} + \dfrac{(1-\sigma_{\alpha})}{ \langle \alpha, \cdot \rangle} \partial_{\xi} \right)   \\
& \qquad \qquad \qquad + \sum_{\alpha \in R_{+}}\sum_{\beta \in R_{+}} \dfrac{c_{\alpha} c_{\beta} \langle \alpha, \xi \rangle \langle \beta, \xi \rangle }{\langle \alpha, \cdot \rangle \langle \beta, \cdot \rangle} (1-\sigma_{\alpha})(1-\sigma_{\beta}) .\nonumber
\end{align*}
 Now using that $\sigma_{\alpha} \partial_{\xi} = \partial_{\sigma_{\alpha}(\xi)} \sigma_{\alpha}$, we get
\begin{align*}
D_{\xi} D_{\xi} &= \partial_{\xi}^{2} - \sum_{\alpha \in R_{+}} \dfrac{c_{\alpha} \langle \alpha , \xi \rangle}{\langle \alpha , \cdot \rangle} \left( \partial_{\xi} (1- \sigma_{\alpha}) + (1-\sigma_{\alpha}) \partial_{\xi}  \right) \\
& \qquad \qquad \qquad + \sum_{\alpha \in R_{+}}\sum_{\beta \in R_{+}} \dfrac{c_{\alpha} c_{\beta} \langle \alpha, \xi \rangle \langle \beta, \xi \rangle }{\langle \alpha, \cdot \rangle \langle \beta, \cdot \rangle} (1-\sigma_{\alpha})(1-\sigma_{\beta}) \\
&= \partial_{\xi}^{2} - \sum_{\alpha \in R_{+}} \dfrac{c_{\alpha} \langle \alpha , \xi \rangle}{\langle \alpha , \cdot \rangle} \left( \partial_{\xi} - \partial_{\xi}\sigma_{\alpha} + \partial_{\xi} - \sigma_{\alpha} \partial_{\xi}  \right)  \\
& \qquad \qquad \qquad + \sum_{\alpha \in R_{+}}\sum_{\beta \in R_{+}} \dfrac{c_{\alpha} c_{\beta} \langle \alpha, \xi \rangle \langle \beta, \xi \rangle }{\langle \alpha, \cdot \rangle \langle \beta, \cdot \rangle} (1-\sigma_{\alpha})(1-\sigma_{\beta}) \\
&= \partial_{\xi}^{2} - \sum_{\alpha \in R_{+}} \dfrac{c_{\alpha} \langle \alpha , \xi \rangle}{\langle \alpha , \cdot \rangle} \left( \partial_{\xi} - \partial_{\xi}\sigma_{\alpha} + \partial_{\xi} -  \partial_{\sigma_{\alpha}(\xi)} \sigma_{\alpha}   \right)  \\
& \qquad \qquad \qquad + \sum_{\alpha \in R_{+}}\sum_{\beta \in R_{+}} \dfrac{c_{\alpha} c_{\beta} \langle \alpha, \xi \rangle \langle \beta, \xi \rangle }{\langle \alpha, \cdot \rangle \langle \beta, \cdot \rangle} (1-\sigma_{\alpha})(1-\sigma_{\beta}) \\
&= \partial_{\xi}^{2} - \sum_{\alpha \in R_{+}} \dfrac{c_{\alpha} \langle \alpha , \xi \rangle}{\langle \alpha , \cdot \rangle} \left( \partial_{\xi} - \partial_{\xi}\sigma_{\alpha} + \partial_{\xi} -  \partial_{\xi} \sigma_{\alpha} + \langle \alpha, \xi \rangle \partial_{\alpha^{\vee}} 
\sigma_{\alpha}  \right)   \\
& \qquad \qquad \qquad + \sum_{\alpha \in R_{+}}\sum_{\beta \in R_{+}} \dfrac{c_{\alpha} c_{\beta} \langle \alpha, \xi \rangle \langle \beta, \xi \rangle }{\langle \alpha, \cdot \rangle \langle \beta, \cdot \rangle} (1-\sigma_{\alpha})(1-\sigma_{\beta}),
\end{align*}
where we have simply applied the orthogonal reflection to the partial derivative $\partial_{\xi}$. Next we group the terms as follows
\begin{align*}
D_{\xi} D_{\xi} &= \partial_{\xi}^{2} - \sum_{\alpha \in R_{+}} \dfrac{c_{\alpha} \langle \alpha , \xi \rangle}{\langle \alpha , \cdot \rangle} \left( 2\partial_{\xi} (1- \sigma_{\alpha})  \right) - \sum_{\alpha \in R_{+}} \dfrac{c_{\alpha} \langle \alpha , \xi \rangle^{2}}{\langle \alpha , \cdot \rangle} \partial_{\alpha^{\vee}} \sigma_{\alpha}  \\
& \qquad \qquad \qquad + \sum_{\alpha \in R_{+}}\sum_{\beta \in R_{+}} \dfrac{c_{\alpha} c_{\beta} \langle \alpha, \xi \rangle \langle \beta, \xi \rangle }{\langle \alpha, \cdot \rangle \langle \beta, \cdot \rangle} (1-\sigma_{\alpha})(1-\sigma_{\beta}).
\end{align*}
Summing over $j = 1, \dots, n$ for all possible $e_{j}$ we get
\begin{align*}
 \left( \sum_{j=1}^{n} D_{e_{j}}^{2} \right)&= \sum_{j=1}^{n} \partial_{e_{j}}^{2}  - \sum_{\alpha \in R_{+}} c_{\alpha} \dfrac{\sum_{i=1}^{n} \langle \alpha , e_{i} \rangle^{2}}{\langle \alpha, \cdot \rangle} \partial_{\alpha^{\vee}} +\sum_{\alpha \in R_{+}}\sum_{\beta \in R_{+}} \dfrac{c_{\alpha} c_{\beta} \sum_{j=1}^{n} \langle \alpha, e_{j} \rangle \langle \beta, e_{j} \rangle }{\langle \alpha, \cdot \rangle \langle \beta, \cdot \rangle} (1-\sigma_{\alpha})(1-\sigma_{\beta}), \\
&= \sum_{j=1}^{n} \partial_{e_{j}}^{2}  - \sum_{\alpha \in R_{+}} c_{\alpha} \dfrac{ \langle \alpha , \alpha \rangle}{\langle \alpha, \cdot \rangle} \partial_{\alpha^{\vee}} + \cancelto{0}{\sum_{\alpha \in R_{+}}\sum_{\beta \in R_{+}} \dfrac{c_{\alpha} c_{\beta}  \langle \alpha, \beta\rangle }{\langle \alpha, \cdot \rangle \langle \beta, \cdot \rangle} (1-\sigma_{\alpha})(1-\sigma_{\beta})}
\end{align*}
since $\displaystyle{\sum_{i=1}^{n}} \langle \alpha , e_{i} \rangle^{2} = \langle \alpha, \alpha \rangle$ and the last term is identically zero by Proposition \ref{heckmanProp}.
\end{proof}
\subsection{Quantum Integrability.}

We now use Chevalley's Theorem \ref{chevalley} to conjecture that the restriction of the operator in Equation \eqref{opop} defines a quantum integrable system. 
\begin{theorem}\label{heckmanTheorem}
Suppose by the Chevalley theorem that $\mathbb{C}[V]^{W} \cong \mathbb{C}[p_{1}, \dots, p_{n}]$ with $p_{1}, \dots, p_{n}$ being homogeneous polynomial functions of degrees $d_{1} \leq \dots \leq d_{n}$. Then the set of operators $\bar{L}_{i} = \{ \res (p_{i} (D_{e_{1}}, \dots D_{e_{n}} ) )\}$ forms a quantum integrable system.
\end{theorem}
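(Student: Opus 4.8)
The plan is to verify the three defining features of a quantum integrable system, in the sense of the definition in Section~\ref{sec:quantum}, for the family $\bar{L}_i = \res(p_i(D_{e_1},\dots,D_{e_n}))$: that there are exactly $n$ of them, that they pairwise commute, and that they are algebraically independent. The whole argument runs on the single engine already established, namely that the Dunkl operators commute, $[D_{e_i},D_{e_j}]=0$. Consequently the $D_{e_k}$ generate a commutative subalgebra of $\mathbb{C}W \ltimes D(V_{reg})$, so substituting them into any polynomial is unambiguous and any two such polynomials in the $D_{e_k}$ commute; in particular $[p_i(D),p_j(D)]=0$ for all $i,j$, where I abbreviate $p(D):=p(D_{e_1},\dots,D_{e_n})$. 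The real work is then to transport this commutativity faithfully through the restriction map.

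First I would check that for a $W$-invariant polynomial $p$ the operator $p(D)$ is genuinely $W$-invariant, so that $\res$ applies to it. Using the relation $w D_a = D_{wa}\,w$ proved above, together with the linearity $a \mapsto D_a$ (so $D_{we_j}=\sum_i w_{ij}D_{e_i}$ for the orthogonal matrix $(w_{ij})$ representing $w$), conjugation by $w$ sends $p(D)$ to $p$ evaluated on the $W$-transformed arguments, which equals $p(D)$ precisely because $p$ is $W$-invariant. Hence $w\,p(D)=p(D)\,w$ for every $w \in W$, so $p(D)$ preserves $\mathbb{C}[V]^W$ and the restriction $\bar{L}_i=\res(p_i(D))$ is a well-defined operator on the space of $W$-invariant functions; this is consistent with the explicit computation in Proposition~\ref{opprop}, which is exactly the instance $p=\sum_j x_j^2$.

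The central step, and the one I expect to be the main obstacle, is to show that on $W$-invariant functions the restriction is compatible with composition, that is $\res(AB)=\res(A)\,\res(B)$ for $W$-invariant $A,B$. The mechanism is as follows: writing $B=\sum_{w}B_w\,w$ with $B_w \in D(V_{reg})$, each $w$ acts trivially on an invariant function $f$, so $Bf=\bigl(\sum_w B_w\bigr)f=\res(B)f$; moreover $Bf$ is again $W$-invariant, so applying $A$ and repeating the argument yields $ABf=\res(A)\res(B)f$. The delicate points to handle carefully are that $\res$ is emphatically \emph{not} multiplicative on all of $\mathbb{C}W \ltimes D(V_{reg})$, it is only the $W$-invariance of the factors that makes the intermediate object $Bf$ invariant and so keeps the group-algebra part inert, and that one must phrase the comparison as an identity of endomorphisms of $\mathbb{C}[V]^W$, which is exactly the domain on which the $\bar{L}_i$ act. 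Granting this, commutativity is immediate: $\bar{L}_i\bar{L}_j=\res(p_i(D))\res(p_j(D))=\res(p_i(D)p_j(D))=\res(p_j(D)p_i(D))=\bar{L}_j\bar{L}_i$.

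Finally I would establish algebraic independence together with the count. The principal (highest-order) symbol of $D_{e_j}$ coincides with that of $\partial_{e_j}$, since the reflection terms $\langle\alpha,\cdot\rangle^{-1}(1-\sigma_\alpha)$ are of strictly lower order, and passing to the restriction does not alter the leading symbol. Therefore the principal symbol of $\bar{L}_i$ is exactly $p_i(\xi_1,\dots,\xi_n)$, and by Chevalley's Theorem~\ref{chevalley} the $p_i$ are algebraically independent homogeneous polynomials; an algebraic relation among the $\bar{L}_i$ would force its top-degree part to produce one among their symbols, which is impossible, so the operators $\bar{L}_1,\dots,\bar{L}_n$ are algebraically independent. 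Since Chevalley's theorem supplies precisely $n=\dim V$ generators $p_1,\dots,p_n$, we obtain $n$ pairwise commuting, algebraically independent operators, which is exactly the definition of a quantum integrable system; the generator of lowest nontrivial degree recovers, up to the standard gauge conjugation by $\prod_\alpha \langle\alpha,\cdot\rangle^{c_\alpha}$, the Olshanetsky--Perelomov operator \eqref{opOperator} and hence the Calogero--Moser Hamiltonian.
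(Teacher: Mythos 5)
Your proposal is correct and follows the same route as the paper: commutativity of the $\bar{L}_i$ is deduced from the commutativity of the Dunkl operators together with the multiplicativity of $\res$ on $W$-invariant elements. You in fact supply two things the paper's proof leaves implicit --- the verification that $p_i(D)$ is $W$-invariant and that $\res(AB)=\res(A)\res(B)$ holds on $\mathbb{C}[V]^{W}$, and the algebraic independence of the $\bar{L}_i$ via their principal symbols $p_i(\xi)$ --- both of which are needed for the full integrability claim and are argued correctly.
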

\begin{proof}
Let $p_{i} \in \mathbb{C}[V]^{W}$, for $i=1,\dots, n$. To see why the operators indeed commute consider the commutator
\begin{align*}
[\bar{L}_{i}, \bar{L}_{j} ] = \left[ \res\left( p_{i} \left( D_{\xi} \right)\right), \res\left( p_{j} \left( D_{\xi} \right) \right) \right].
\end{align*}
Then we have
\begin{align*}
 \res\left( p_{i} \left( D_{\xi} \right)\right)  \res\left( p_{j} \left( D_{\xi} \right)\right) &=  \res\left( p_{i} \left( D_{\xi} \right) p_{j} \left( D_{\xi} \right) \right) \\
&= \res \left( p_{j} \left( D_{\xi} \right) p_{i} \left( D_{\xi} \right) \right) \\
&=  \res\left( p_{j} \left( D_{\xi} \right)\right)  \res\left( p_{i} \left( D_{\xi} \right)\right),
\end{align*}
where $p_{i} \left( D_{\xi} \right) p_{j} \left( D_{\xi} \right) = p_{j} \left( D_{\xi} \right)  p_{i} \left( D_{\xi} \right)$ holds due to the commutativity of the Dunkl operators.
\end{proof}

\par Finally, we are ready to establish the connection between Dunkl operators and the quantum rational Calogero-Moser Hamiltonian. In order to do so, we consider a gauge transformation of the operator $\bar{L}$, which yields the Olshanetsky-Perelomov operator \ref{opOperator}. The procedure is discussed in the proposition below, following the method outlined in \cite{Etingof}.
\begin{theorem}
 Let $\delta_{c} = \displaystyle{\prod_{\alpha \in R_{+}}} \langle \alpha , \cdot \rangle^{c_{\alpha}} $ and let $\bar{L}$ and $L$ be as in Equations \eqref{opOperator} and \eqref{opop} respectively. Then
\begin{align*}
\delta_{c}^{-1} \circ \bar{L} \circ \delta_{c} = L.
\end{align*}
\end{theorem}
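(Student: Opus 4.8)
The plan is to treat the identity as a \emph{gauge transformation} and track how conjugation by the scalar function $\delta_c$ alters a differential operator. Writing $u = \log \delta_c = \sum_{\alpha \in R_+} c_\alpha \log \langle \alpha, \cdot \rangle$, the two facts I would record first are the operator identities $\delta_c^{-1} \circ \partial_\xi \circ \delta_c = \partial_\xi + \partial_\xi u$ and $\delta_c^{-1} \circ \partial_\xi^2 \circ \delta_c = \partial_\xi^2 + 2(\partial_\xi u)\partial_\xi + \bigl(\partial_\xi^2 u + (\partial_\xi u)^2\bigr)$, both immediate from the product rule together with the fact that multiplication operators commute with $\delta_c^{\pm 1}$. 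I would also record the gradient $\nabla u = \sum_{\alpha \in R_+} c_\alpha \frac{\alpha}{\langle \alpha, \cdot \rangle}$, obtained from $\nabla \log \langle \alpha, \cdot \rangle = \alpha / \langle \alpha, \cdot \rangle$.

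Applying these to $\bar{L} = \Delta - \sum_{\alpha \in R_+} \frac{c_\alpha \langle \alpha, \alpha \rangle}{\langle \alpha, \cdot \rangle} \partial_{\alpha^\vee}$ and summing the Laplacian piece over the orthonormal basis gives $\delta_c^{-1} \Delta \delta_c = \Delta + 2\partial_{\nabla u} + \Delta u + |\nabla u|^2$, while the first-order drift term of $\bar{L}$ conjugates to $-\sum_{\alpha \in R_+} \frac{2 c_\alpha}{\langle \alpha, \cdot \rangle}(\partial_\alpha + \partial_\alpha u)$ after using $\langle \alpha, \alpha \rangle \partial_{\alpha^\vee} = 2\partial_\alpha$. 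The key structural observation -- and the reason the gauge is chosen exactly this way -- is that the purely first-order contributions cancel identically: since $\partial_{\nabla u} = \sum_{\alpha \in R_+} \frac{c_\alpha}{\langle \alpha, \cdot \rangle}\partial_\alpha$, the term $2\partial_{\nabla u}$ is precisely the negative of the surviving drift. After this cancellation I am left with $\Delta$ plus a multiplication operator (a potential).

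It then remains to identify the potential. Here I would compute the surviving scalar pieces separately: $\Delta u = -\sum_{\alpha \in R_+} c_\alpha \frac{\langle \alpha, \alpha \rangle}{\langle \alpha, \cdot \rangle^2}$, using $\Delta \log \langle \alpha, \cdot \rangle = -\langle \alpha, \alpha \rangle / \langle \alpha, \cdot \rangle^2$, together with $|\nabla u|^2 - \sum_{\alpha \in R_+} \frac{2 c_\alpha}{\langle \alpha, \cdot \rangle}\partial_\alpha u = -\sum_{\alpha, \beta \in R_+} c_\alpha c_\beta \frac{\langle \alpha, \beta \rangle}{\langle \alpha, \cdot \rangle \langle \beta, \cdot \rangle}$, where I have used $\partial_\alpha u = \langle \alpha, \nabla u \rangle$. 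Splitting the double sum into its diagonal part $\beta = \alpha$ and its off-diagonal part, and combining the diagonal with $\Delta u$, produces exactly $-\sum_{\alpha \in R_+} \frac{c_\alpha(c_\alpha + 1)\langle \alpha, \alpha \rangle}{\langle \alpha, \cdot \rangle^2}$, the potential of $L$.

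The one genuinely nontrivial point -- the main obstacle -- is showing the off-diagonal remainder $\sum_{\alpha \neq \beta} c_\alpha c_\beta \frac{\langle \alpha, \beta \rangle}{\langle \alpha, \cdot \rangle \langle \beta, \cdot \rangle}$ vanishes. I would handle this by the mechanism already established in Proposition \ref{heckmanProp} with $B = \langle \cdot, \cdot \rangle$. Grouping the ordered pairs $(\alpha, \beta)$ of non-proportional positive roots according to the plane rotation $w = \sigma_\alpha \sigma_\beta$ they determine (orthogonal pairs contribute $0$, being killed by $\langle \alpha, \beta \rangle = 0$), every pair with $\sigma_\alpha \sigma_\beta = w$ lies in the single rotation plane of $w$, hence in a rank-two subsystem; the inner sum for each fixed $w$ is then precisely the term $E_1$ from the proof of Proposition \ref{heckmanProp}, which vanishes because $p_S E_1$ is an anti-invariant polynomial of degree strictly below that of the lowest-degree anti-invariant $p_S$. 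Summing these vanishing contributions over all $w$ annihilates the off-diagonal term, and I conclude $\delta_c^{-1} \circ \bar{L} \circ \delta_c = \Delta - \sum_{\alpha \in R_+} \frac{c_\alpha(c_\alpha + 1)\langle \alpha, \alpha \rangle}{\langle \alpha, \cdot \rangle^2} = L$.
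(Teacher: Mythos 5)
Your proposal is correct and follows essentially the same route as the paper: a direct computation of the conjugation via the product rule, cancellation of the first-order drift against the term $2\partial_{\nabla u}$ produced by the Laplacian, the diagonal part of the cross term upgrading $c_\alpha$ to $c_\alpha(c_\alpha+1)$, and the off-diagonal sum $\sum_{\alpha\neq\beta}c_\alpha c_\beta\frac{\langle\alpha,\beta\rangle}{\langle\alpha,\cdot\rangle\langle\beta,\cdot\rangle}$ killed by an anti-invariance/degree argument. The only (harmless) differences are that you conjugate in the direction $\delta_c^{-1}\bar{L}\delta_c$ while the paper computes the equivalent $\delta_c\circ L\circ\delta_c^{-1}=\bar{L}$, and that you route the vanishing of the off-diagonal term through the dihedral decomposition of Proposition~\ref{heckmanProp}, which is in fact slightly cleaner than the paper's global degree count since it applies the anti-invariance argument to the genuine polynomial $p_S E_1$ rather than to an expression involving $\delta_c$ itself.
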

\begin{proof}
We want to show that $\delta_{c}^{-1} \circ \bar{L} \circ \delta_{c} = L$. This is equivalent to 
\begin{align*}
\delta_{c} \circ L \circ \delta_{c}^{-1} = \bar{L} = \Delta - \sum_{\alpha \in R_{+}} \dfrac{c_{\alpha} \langle \alpha, \alpha \rangle}{\langle \alpha , \cdot \rangle } \partial_{\alpha^{\vee}}.
\end{align*}

Notice that
\begin{align*}
 \sum_{\alpha \in R_{+}} \dfrac{c_{\alpha} \langle \alpha, \alpha \rangle}{\langle \alpha , \cdot \rangle } \partial_{\alpha^{\vee}} = \sum_{\alpha \in R_{+}} \dfrac{2 c_{\alpha}}{\langle \alpha , \cdot \rangle} \partial_{\alpha},
\end{align*}
thus, our aim is to show that
\begin{align*}
\delta_{c} \circ L \circ \delta_{c}^{-1} =\Delta -  \sum_{\alpha \in R_{+}} \dfrac{2 c_{\alpha}}{\langle \alpha , \cdot \rangle} \partial_{\alpha}.
\end{align*}
Let us now perform the composition on $L$. We have
\begin{align}\label{lcomposite}
\delta_{c} \circ L \circ \delta_{c}^{-1} = \delta_{c} \circ \Delta \circ \delta_{c}^{-1} - \delta_{c} \circ  \sum_{\alpha \in R_{+}} \dfrac{c_{\alpha} (c_{\alpha} + 1 ) \langle \alpha, \alpha \rangle}{\langle \alpha , \cdot \rangle^{2}}  \circ \delta_{c}^{-1}.
\end{align}
We treat the two terms separately. Consider first the composition of the Laplacian $\Delta$. We get
\begin{align}\label{laplaceComposite}
 \delta_{c} \circ \Delta \circ \delta_{c}^{-1}  = \delta_{c} \left( \Delta \delta_{c}^{-1} + 2 \sum_{j} \partial_{e_{j}} \delta_{c}^{-1} \partial_{e_{j}} + \delta_{c}^{-1} \Delta \right).
\end{align}
The first term in Equation \eqref{laplaceComposite} yields
\begin{align*}
\delta_{c} \left( \Delta \delta_{c}^{-1} \right) &=  \delta_{c} \left( \sum_{j} \partial_{e_{j}}^{2} \prod_{\alpha \in R_{+}} \langle \alpha , \cdot \rangle^{c_{\alpha}} \right) \\
&= \delta_{c} \left( \sum_{\alpha \in R_{+}} -c_{\alpha} ( -c_{\alpha}-1 ) \langle \alpha, \alpha \rangle \prod_{\alpha \in R_{+}} \langle \alpha, \cdot \rangle^{-c_{\alpha}-2} + \sum_{\substack{\alpha,\beta \in R_{+} \\ \alpha \neq \beta}} c_{\alpha} c_{\beta} \dfrac{ \langle \alpha , \beta \rangle}{\langle \alpha, \cdot \rangle \langle \beta, \cdot \rangle} \prod_{\alpha \in R_{+}} \langle \alpha , \cdot \rangle^{c_{\alpha}} \right),
\end{align*}
where in the last term we have accounted for the different roots of $R_{+}$. Upon multiplying by $\delta_{c}$ we get
\begin{align*}
\delta_{c} \left( \Delta \delta_{c}^{-1} \right) &= \cancel{\delta_{c}} \sum_{\alpha \in R_{+}} c_{\alpha} ( c_{\alpha}+1) \dfrac{\langle \alpha , \alpha \rangle}{\langle \alpha , \cdot \rangle} \cancel{\delta_{c}^{-1}} + \cancel{\delta_{c}} \sum_{\substack{\alpha,\beta \in R_{+} \\ \alpha \neq \beta}} c_{\alpha} c_{\beta} \dfrac{ \langle \alpha , \beta \rangle}{\langle \alpha, \cdot \rangle \langle \beta, \cdot \rangle} \cancel{\delta_{c}^{-1}}.
\end{align*}
Consider the summation over the different roots $\alpha \neq \beta$. It is clearly $W$-invariant. Then the $W$-anti-invariant expression $\delta_{c} \sum_{\alpha \neq \beta \in R_{+}}$ is of degree $\deg (\delta_{c} \sum_{\alpha \neq \beta \in R_{+}}) = \deg(\delta_{c}) - 2$ and hence, identically zero. Now consider the second term of Equation \eqref{laplaceComposite}. We have
\begin{align*}
\delta_{c} \left( 2 \sum_{j} \partial_{e_{j}} \delta_{c}^{-1} \partial_{e_{j}} \right) &= \delta_{c} \left( \sum_{j} 2 c_{\alpha} \langle \alpha, \alpha \rangle \prod_{\alpha \in R_{+}} \langle \alpha, \cdot \rangle^{-c_{\alpha}-1} \partial_{e_{j}} \right) \\
&= \cancel{\delta_{c}} \sum_{j} \dfrac{2 c_{\alpha} \langle \alpha, \alpha \rangle}{\langle \alpha , \cdot \rangle} \cancel{\delta_{c}^{-1}} \partial_{e_{j}} \\
&= \sum_{j} \dfrac{2 c_{\alpha} \langle \alpha, \alpha \rangle}{\langle \alpha , \cdot \rangle}\partial_{e_{j}} = \sum_{\alpha \in R_{+}} \dfrac{2 c_{\alpha}}{\langle \alpha , \cdot \rangle} \partial_{\alpha}.
\end{align*}
The last term of Equation \eqref{laplaceComposite} is simply $\delta_{c} \delta_{c}^{-1} \Delta = \Delta$. Combining the three results, we get
\begin{align*}
\delta_{c} \circ \Delta \circ \delta_{c}^{-1} =  \Delta + \sum_{\alpha \in R_{+}} c_{\alpha} ( c_{\alpha}+1) \dfrac{\langle \alpha , \alpha \rangle}{\langle \alpha , \cdot \rangle} +\sum_{\alpha \in R_{+}} \dfrac{2 c_{\alpha}}{\langle \alpha , \cdot \rangle} \partial_{\alpha}.
\end{align*}
The only thing left is to consider the last term of Equation \eqref{lcomposite}. Since there are no operators involved, the composition by $\delta_{c}$ and $\delta_{c}^{-1}$ is simply equivalent to the multiplication on the left and on the right by the respective terms. Thus, we have
\begin{align*}
\delta_{c} \circ L \circ \delta_{c}^{-1} &=  \Delta + \cancel{\sum_{\alpha \in R_{+}} c_{\alpha} ( c_{\alpha}+1) \dfrac{\langle \alpha , \alpha \rangle}{\langle \alpha , \cdot \rangle}} + \sum_{\alpha \in R_{+}} \dfrac{2 c_{\alpha}}{\langle \alpha , \cdot \rangle} \partial_{\alpha} -  \cancel{\sum_{\alpha \in R_{+}} \dfrac{c_{\alpha} (c_{\alpha} + 1 ) \langle \alpha, \alpha \rangle}{\langle \alpha , \cdot \rangle^{2}}} \\
&=\Delta -  \sum_{\alpha \in R_{+}} \dfrac{2 c_{\alpha}}{\langle \alpha , \cdot \rangle} \partial_{\alpha} = \bar{L},
\end{align*}
where we have recovered the operator $\bar{L}$ as required.
\end{proof}
\par As a consequence of Theorem \ref{heckmanTheorem}, we have the following statement.
\begin{corollary}
The Olshanetsky-Perelomov operator $L$ defines a quantum integrable system with $L_{i} = \delta_{c}^{-1} \circ \bar{L}_{i} \circ \delta_{c}$.
\end{corollary}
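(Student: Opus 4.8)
The plan is to transfer the quantum integrable structure already established for the operators $\bar{L}_i$ of Theorem \ref{heckmanTheorem} onto the gauge-transformed operators $L_i = \delta_c^{-1}\circ\bar{L}_i\circ\delta_c$. The single mechanism behind this is that conjugation by the invertible multiplication operator $\delta_c$ is an automorphism of the associative algebra of differential operators on $V_{reg}$, so it preserves commutators, algebraic relations, and invariance, while the preceding gauge theorem identifies the image of the distinguished operator $\bar{L}$ with the Olshanetsky--Perelomov operator $L$.

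First I would set $L_i := \delta_c^{-1}\circ\bar{L}_i\circ\delta_c$ and record that, since $\delta_c=\prod_{\alpha\in R_+}\langle\alpha,\cdot\rangle^{c_\alpha}$ is nonvanishing on $V_{reg}$, the map $X\mapsto\delta_c^{-1}\circ X\circ\delta_c$ is a well-defined algebra homomorphism. Applying it to the commuting family of Theorem \ref{heckmanTheorem} gives
\begin{align*}
[L_i,L_j] = \delta_c^{-1}\circ[\bar{L}_i,\bar{L}_j]\circ\delta_c = 0,
\end{align*}
and any algebraic relation among the $L_i$ pulls back to the same relation among the $\bar{L}_i$; hence the $L_i$ are algebraically independent and pairwise commuting.

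Next I would pin down $L$ inside this family. As $W\subset O(V)$ fixes the quadratic form, the polynomial $\sum_j x_j^2$ is $W$-invariant and may be taken as the degree-two generator among $p_1,\dots,p_n$ in Chevalley's Theorem \ref{chevalley}. For that generator Proposition \ref{opprop} yields $\res\!\left(\sum_j D_{e_j}^2\right)=\bar{L}$, and the preceding gauge theorem gives $\delta_c^{-1}\circ\bar{L}\circ\delta_c=L$. Therefore $L$ equals the corresponding $L_i$ and belongs to the commuting, algebraically independent family, so $L$ defines a quantum integrable system.

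The one point requiring care is that $\delta_c$ is not itself $W$-invariant for generic multiplicity $c$: under $w\in W$ it transforms by a character, $w\,\delta_c\,w^{-1}=\chi(w)\delta_c$. I would check that conjugation by $\delta_c$ nonetheless sends $W$-invariant operators to $W$-invariant operators, since the scalar $\chi(w)$ cancels in $w(\delta_c^{-1}\circ X\circ\delta_c)w^{-1}$; this guarantees each $L_i$ is a genuine $W$-invariant differential operator rather than an element carrying group-algebra terms. The commutator and independence transport are then routine, and the main conceptual step is simply the identification of $L$ with the gauge transform of the quadratic Dunkl operator $\bar{L}$.
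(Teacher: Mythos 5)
Your proposal is correct and matches the argument the paper intends (the corollary is stated there without proof, as an immediate consequence of Theorem \ref{heckmanTheorem} together with the gauge-transformation theorem $\delta_c^{-1}\circ\bar{L}\circ\delta_c=L$): conjugation by the invertible multiplication operator $\delta_c$ on $V_{reg}$ is an algebra automorphism, so it transports commutativity and algebraic independence from the $\bar{L}_i$ to the $L_i$, and the quadratic member of the family is identified with $L$. Your extra remarks on the non-invariance of $\delta_c$ and the cancellation of the character under conjugation are a sensible filling-in of details the paper leaves implicit.
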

We are now ready to present the final result of this section, namely, the integrability of the quantum Calogero-Moser system.
\begin{theorem}
Let $R = A_{n-1}$. Then $L = \widehat{H}$, where $\widehat{H}$ is the Hamiltonian of the quantum rational Calogero-Moser system \eqref{quantumCM}, and the system is in turn integrable.
\end{theorem}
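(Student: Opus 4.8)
The plan is to specialize the general machinery of Theorem~\ref{heckmanTheorem} and the gauge-transformation result to the root system $R = A_{n-1}$, where the explicit root data collapses the Olshanetsky--Perelomov operator \eqref{opOperator} into the Calogero--Moser Hamiltonian \eqref{quantumCM}. First I would record the relevant data for $A_{n-1}$: the positive roots are $\alpha = e_i - e_j$ for $1 \le i < j \le n$, each of squared length $\langle \alpha, \alpha \rangle = 2$, and the associated linear form is $\langle \alpha, x \rangle = x_i - x_j = q_{ij}$. A key preliminary observation is that the Weyl group $W = \mathcal{S}_n$ acts transitively on $R$, so the $W$-invariant multiplicity function is necessarily constant; write $c_\alpha = k$ for every $\alpha \in R_+$.

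Substituting this data into \eqref{opOperator} gives
\begin{align*}
L = \Delta - \sum_{1 \le i < j \le n} \frac{k(k+1)\cdot 2}{(q_i - q_j)^2} = \Delta - 2k(k+1) \sum_{1 \le i < j \le n} q_{ij}^{(2)}.
\end{align*}
I would then match this with $\widehat{H}$ from \eqref{quantumCM}. Using $\widehat{p}_j = -i\partial_{q_j}$ so that the kinetic terms are proportional to $\Delta$, the coupling identification $g = k(k+1) = c_\alpha(c_\alpha+1)$, and the fact that $\sum_{i \neq j} = 2\sum_{i<j}$ turns the ordered double sum in \eqref{quantumCM} into twice the sum over unordered pairs, the potential term of $L$ reproduces exactly $\sum_{i \neq j} g\,\widehat{q}_{ji}^{(2)}$. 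Under the conventions fixing the sign and scaling of the kinetic term, the two operators coincide, establishing $L = \widehat{H}$.

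For integrability I would invoke Chevalley's Theorem~\ref{chevalley} for $W = \mathcal{S}_n$, whose invariant algebra $\mathbb{C}[V]^{\mathcal{S}_n}$ is freely generated by $n$ algebraically independent homogeneous invariants (Example~\ref{symmetricPoly}), which may be taken to be the power sums $p_m = \sum_j x_j^m$. By the corollary to Theorem~\ref{heckmanTheorem}, the operators $L_m = \delta_c^{-1} \circ \res\big(p_m(D_{e_1}, \dots, D_{e_n})\big) \circ \delta_c$ form a quantum integrable system, i.e.\ a pairwise commuting family, and they are algebraically independent because the $p_m$ are. The degree-two member $p_2 = \sum_j x_j^2$ produces $\res\big(\sum_j D_{e_j}^2\big) = \bar{L}$ by Proposition~\ref{opprop}, and the gauge transformation by $\delta_c$ yields $L_2 = L = \widehat{H}$. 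Thus $\widehat{H}$ sits inside a family of $n$ algebraically independent, pairwise commuting operators $L_1, \dots, L_n$, which is precisely quantum integrability in the required sense.

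The main obstacle I anticipate is bookkeeping rather than conceptual: the identification $L = \widehat{H}$ rests on reconciling the several normalization factors (the $\tfrac12$ in \eqref{quantumCM}, the factor $\langle \alpha, \alpha \rangle = 2$, the doubling $\sum_{i \neq j} = 2\sum_{i<j}$, and the scaling relating $\Delta$ to $\sum_j \widehat{p}_j^2$) together with the coupling identification $g = k(k+1)$. The one genuinely structural point to verify is that $\mathcal{S}_n$ acts transitively on the roots of $A_{n-1}$, so that $c_\alpha$ is constant and the sum in \eqref{opOperator} collapses as claimed; everything else is supplied by the commutativity of the Dunkl operators and the previously established restriction and gauge-transformation results.
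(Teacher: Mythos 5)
Your proposal is correct and follows essentially the same route as the paper: specialize the root data of $A_{n-1}$ so that transitivity of $\mathcal{S}_n$ forces $c_\alpha=k$, identify the potential of the Olshanetsky--Perelomov operator with that of $\widehat{H}$ via $g=k(k+1)$, and deduce integrability from Chevalley's theorem together with the commutativity of the Dunkl operators. The paper's own proof is only two sentences long, so you have in fact supplied the bookkeeping (the factor $\langle\alpha,\alpha\rangle=2$, the doubling $\sum_{i\neq j}=2\sum_{i<j}$, and the overall normalization relating $\Delta$ to $\tfrac12\sum_j\widehat{p}_j^2$) that the paper leaves implicit.
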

\begin{proof}
The corresponding group $W$ of the root system of type $A_{n-1}$ is the symmetric group $\mathcal{S}_{n}$. Thus, $c_{\alpha} = k$. The integrability is clear from the commutativity of the Dunkl operators.
\end{proof}
\par We shall not discuss the uniqueness of the Olshanetsky-Perelomov operators, but details can be found in \cite{Etingof2} and \cite{EtingofStrickland}. In a similar manner, the Liouville integrability of the classical rational Calogero-Moser is obtained via the classical Dunkl operators, which we introduce in the next section.

\section{Classical Integrability via Dunkl Operators.}\label{sec:dunklClassical}
\subsection{Classical Dunkl Operators, Involution.}
\indent \par The classical analogue of the Dunkl operator $D_{a}$ is derived by mapping the partial operator $\partial_{\widehat{q}_j}$ to the classical momentum $p_{j}$, that is, $\partial_{\widehat{q}_j}\mapsto p_{j}$, under suitable conditions, see \cite{klein}. Hence, we have the following definition.
\begin{definition}
For $a \in \mathbb{C}^{n}$, the \emph{classical rational Dunkl operator} is defined by
\begin{align}\label{classicalDunkl}
D_{a}^{cl} = p_{a} - \sum_{\alpha \in R_{+}} \dfrac{c_{\alpha} \langle \alpha, a \rangle}{\langle \alpha, \cdot \rangle} (1-\sigma_{\alpha}).
\end{align}
The classical Dunkl operator is an element of the semi-direct product of the group algebra $\mathbb{C}W$ and the algebra of holomorphic functions on the cotangent space of $V_{reg}$, i.e.  $D^{cl}_{a} \in \mathbb{C}W \ltimes \mathcal{O} (T^{*}(V_{reg}))$.
\end{definition}
\begin{remark} In order to be more precise, we go back to the quantum setting for a moment. Let $\hbar$ denote Planck's constant. Then the classical Dunkl operators are obtained by renormalizing the Dunkl operators $D_{a}$ as
\begin{equation*}
D_{a} (c, \hbar) = \hbar D_{a} (c/\hbar)
\end{equation*}
and then reducing modulo $\hbar$. See Etingof \cite{Etingof} for more details.
\end{remark}
\par The classical Dunkl operators are also equivariant under $W$. 
\begin{proposition}
Let $w \in W$. Then $w D^{cl}_{a} = D^{cl}_{w a } w$.
\end{proposition}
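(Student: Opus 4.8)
The plan is to mirror, step for step, the proof of the equivariance of the quantum Dunkl operator given just above, replacing the directional derivative $\partial_a$ by the classical momentum $p_a$. The two structural ingredients I would rely on are the equivariance of the momentum, $w p_a = p_{w(a)} w$, which holds because $p_a$ is the phase-space function $\sum_i a_i p_i$ and the cotangent lift of $w$ rotates the momentum direction exactly as it rotates $a$, together with the conjugation identity $w \sigma_\alpha = \sigma_{w\alpha} w$ supplied by Proposition \ref{propgroupaction}. Since the classical Dunkl operator differs from its quantum counterpart only in the leading term, these two facts should carry the argument through essentially verbatim.

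First I would apply $w$ on the left of $D^{cl}_a$ as given in Equation \eqref{classicalDunkl}, splitting into the momentum piece and the sum over $R_+$. On the momentum piece I would invoke $w p_a = p_{w(a)} w$ to pull $w$ to the right. On the sum I would push $w$ through each summand, using $w \sigma_\alpha = \sigma_{w\alpha} w$ to turn $1 - \sigma_\alpha$ into $(1 - \sigma_{w\alpha})w$ after factoring $w$ out on the right. Collecting terms then yields
\begin{align*}
w D^{cl}_{a} = \left( p_{w(a)} - \sum_{\alpha \in R_{+}} \frac{c_{\alpha} \langle \alpha, a \rangle}{\langle \alpha, \cdot \rangle} (1-\sigma_{w\alpha}) \right) w,
\end{align*}
and the task reduces to recognizing the bracketed operator as $D^{cl}_{w(a)}$.

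The delicate step --- and the one the quantum proof glosses over --- is the re-indexing of the sum. Writing $\beta = w\alpha$ and using the $W$-invariance (and evenness) of the multiplicity function, $c_\alpha = c_\beta$, together with orthogonality of $w$, $\langle \alpha, a\rangle = \langle \beta, w(a)\rangle$, I would convert the sum over $\alpha \in R_+$ into a sum over $\beta = w\alpha$. The subtlety is that $w$ need not preserve $R_+$: when $w$ sends a positive root $\alpha$ to a negative root $-\beta$ with $\beta \in R_+$, the reflection is unchanged, $\sigma_{w\alpha} = \sigma_{-\beta} = \sigma_\beta$, while both the numerator $\langle \alpha, a\rangle$ and the denominator $\langle w\alpha, \cdot\rangle$ pick up a sign that must be shown to cancel, so that the summand still equals $\frac{c_\beta \langle \beta, w(a)\rangle}{\langle \beta, \cdot\rangle}(1-\sigma_\beta)$. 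Verifying this sign cancellation is the main obstacle; once it is in place, the bracketed operator is exactly $D^{cl}_{w(a)}$, and the identity $w D^{cl}_a = D^{cl}_{w(a)} w$ follows.
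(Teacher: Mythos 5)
Your proof is correct and takes essentially the same route as the paper --- push $w$ through the momentum term via $w p_{a} = p_{w(a)} w$ and through the sum via $w\sigma_{\alpha} = \sigma_{w\alpha} w$ --- and you are in fact more careful than the paper about the reindexing $\beta = w\alpha$ and the sign cancellation when $w$ fails to preserve $R_{+}$, a point the paper's one-line conclusion leaves implicit. The only slip is in your displayed formula, where the denominator should read $\langle w\alpha, \cdot\rangle$ (equivalently $\langle \alpha, w^{-1}\cdot\rangle$, as the paper writes it) once $w$ has been factored fully to the right; your subsequent prose correctly acknowledges this.
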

\begin{proof}
Applying $w$ on the left, we have
\begin{align*}
w D_{a}^{cl}  &= w p_{a} - w \sum_{\alpha \in R_{+}} \dfrac{c_{\alpha} \langle \alpha, a \rangle}{\langle \alpha, \cdot\rangle} (1-\sigma_{\alpha}) = p_{w(a)}w - \sum_{\alpha \in R_{+}} \dfrac{c_{\alpha} \langle \alpha, a \rangle}{\langle \alpha, w^{-1}\cdot \rangle} (w-\sigma_{w\alpha}w) \\
&= p_{w(a)}w - \left(  \sum_{\alpha \in R_{+}} \dfrac{c_{\alpha} \langle \alpha, a \rangle}{\langle \alpha, w^{-1}\cdot \rangle} (1-\sigma_{w\alpha}) \right)w
= \left(p_{w(a)} -  \sum_{\alpha \in R_{+}} \dfrac{c_{\alpha} \langle \alpha, a \rangle}{\langle \alpha, w^{-1}\cdot \rangle} (1-\sigma_{w\alpha}) \right) w \\
&= D^{cl}_{w(a)}w.
\end{align*}
\end{proof}
Similarly to the quantum operators, the classical ones are in involution. Here we demonstrate this property explicitly, but the reader is invited to consult \cite{klein} for a systematized approach on considering classical theories as the $\hbar \to 0$ limit of quantum mechanics. Note that the involutiveness is in the sense of Equation \eqref{poissonMap} with respect to the canonical coordinates on the phase space $(x_{i}, p_{i})$.

\begin{theorem}\label{poisson_dunkl}
The classical Dunkl operators Poisson commute for any $a, b \in \mathbb{C}^{n}$, that is, $\{D^{cl}_{a}, D^{cl}_{b} \}=0$.
\end{theorem}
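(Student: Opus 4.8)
The plan is to transcribe Etingof's trick from the quantum proof directly into the Poisson setting, exploiting that the bracket \eqref{poissonMap} obeys the same bilinearity, antisymmetry, Leibniz and Jacobi identities (Proposition \ref{poissonBrackets}) that were the only structural inputs to the quantum argument. The first, preparatory, step is to fix the Poisson structure on the crossed product $\mathbb{C}W \ltimes \mathcal{O}(T^{*}(V_{reg}))$ in which $D^{cl}_{a}$ lives: on the pure function part it is the canonical bracket \eqref{poissonMap}, the reflections $\sigma_{\alpha}$ act on functions of the full phase space, and they intertwine the momenta via $\sigma_{\alpha} p_{a} = p_{\sigma_{\alpha} a}\sigma_{\alpha}$, consistently with the equivariance $w D^{cl}_{a} = D^{cl}_{wa} w$ established just above. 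With this fixed, I would \emph{probe} the bracket against the linear coordinate functions $\langle x, \xi \rangle$, $\xi \in \mathbb{C}^{n}$, exactly as in \eqref{propEtingof}. The only input that changes is the leading term: in place of $[\partial_{a}, \langle x, \xi \rangle] = \langle a, \xi \rangle$ one uses $\{p_{a}, \langle x, \xi \rangle\} = \langle a, \xi \rangle$, which is immediate from \eqref{poissonMap}. The reflection terms are handled verbatim, using $\sigma_{\alpha}\xi = \xi - \langle \alpha, \xi \rangle \alpha^{\vee}$ (Definition \ref{definitionOrthogonalReflection}) and the cancellation of $\langle \alpha, \cdot \rangle$, yielding the classical analogue of \eqref{propEtingof},
\[
\{D^{cl}_{a}, \langle x, \xi \rangle\} = \langle a, \xi \rangle - \sum_{\alpha \in R_{+}} \frac{2 c_{\alpha} \langle \alpha, a \rangle \langle \alpha, \xi \rangle}{\langle \alpha, \alpha \rangle} \sigma_{\alpha}.
\]

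Next I would apply the Jacobi identity to the double bracket, writing
\[
\{\{D^{cl}_{a}, D^{cl}_{b}\}, \langle x, \xi \rangle\} = \{\{D^{cl}_{a}, \langle x, \xi \rangle\}, D^{cl}_{b}\} - \{\{D^{cl}_{b}, \langle x, \xi \rangle\}, D^{cl}_{a}\}.
\]
Substituting the probe identity and moving each reflection past the remaining Dunkl operator with $\sigma_{\alpha} D^{cl}_{b} = D^{cl}_{\sigma_{\alpha} b} \sigma_{\alpha}$, together with $D^{cl}_{b} - D^{cl}_{\sigma_{\alpha} b} = \langle \alpha, b \rangle D^{cl}_{\alpha^{\vee}}$ (again Definition \ref{definitionOrthogonalReflection} and linearity of $a \mapsto D^{cl}_{a}$), each half collapses to a single sum over $R_{+}$ carrying the scalar $\langle \alpha, a \rangle \langle \alpha, b \rangle$. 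The two halves then differ only by $\langle \alpha, a \rangle \langle \alpha, b \rangle - \langle \alpha, b \rangle \langle \alpha, a \rangle = 0$ and cancel, precisely as in the final display of the quantum commutativity proof, so that $\{\{D^{cl}_{a}, D^{cl}_{b}\}, \langle x, \xi \rangle\} = 0$ for every $\xi$. (Note that this probe route avoids invoking Proposition \ref{heckmanProp}; a direct expansion of $\{D^{cl}_{a}, D^{cl}_{b}\}$ would instead isolate a reflection-reflection sum that vanishes by that proposition, which is a consistency check.)

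It remains to pass from vanishing against all linear probes to vanishing of the bracket itself. Here I would use that \eqref{poissonMap} gives $\{F, \langle x, \xi \rangle\} = \langle \partial_{p} F, \xi \rangle$, so the previous step forces $\{D^{cl}_{a}, D^{cl}_{b}\}$ to be momentum-free; a homogeneity count in the remaining weights (the classical counterpart of the quantum argument that a degree-lowering operator commuting with every multiplication must be zero) then kills the residual term and delivers $\{D^{cl}_{a}, D^{cl}_{b}\} = 0$.

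I expect the genuine obstacle to be this setup-and-closing pair rather than the middle cancellation: first, pinning down the Poisson bracket on the crossed product so that Jacobi really holds once the non-commuting group elements $\sigma_{\alpha}$ are present, and second, the final separation step, which is cleaner quantum-mechanically (where faithfulness of the action on $\mathbb{C}[x]$ does the work) than classically. As a conceptual backstop, the whole identity also follows from the \emph{exactly} vanishing quantum commutator $[D_{a}(c,\hbar), D_{b}(c,\hbar)] = 0$ by extracting the leading order in $\hbar$ after the renormalization $D_{a}(c,\hbar) = \hbar D_{a}(c/\hbar)$ recalled in the preceding remark, which reassures us that the explicit computation above must close.
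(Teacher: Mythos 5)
Your proposed route diverges from the paper's, and the divergence hides a genuine problem. The paper's proof is a two-line direct computation with the canonical bracket \eqref{poissonMap}: since the only momentum dependence of $D^{cl}_{a}$ is through the single term $p_{a}$, the bracket collapses to $\{D^{cl}_{a},D^{cl}_{b}\}=\partial_{a}\bigl(D^{cl}_{b}\bigr)-\partial_{b}\bigl(D^{cl}_{a}\bigr)$ acting on the position-dependent parts, and both directional derivatives equal $\sum_{\alpha\in R_{+}}c_{\alpha}\langle\alpha,a\rangle\langle\alpha,b\rangle\langle\alpha,\cdot\rangle^{-2}(1-\sigma_{\alpha})$, so they cancel by the manifest $a\leftrightarrow b$ symmetry. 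No Jacobi identity, no probes, and no appeal to Proposition \ref{heckmanProp} are needed; the reflections $\sigma_{\alpha}$ ride along as spectators.

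The gap in your transcription is the probe identity itself. With the bracket \eqref{poissonMap} (the only one the paper defines), the reflection part of $D^{cl}_{a}$ is momentum-free, and so is $\langle x,\xi\rangle$; hence $\{D^{cl}_{a},\langle x,\xi\rangle\}=\langle a,\xi\rangle$ with \emph{no} $\sigma_{\alpha}$-terms at all. The extra sum you carry over from \eqref{propEtingof} is the associative commutator $[\sigma_{\alpha},\langle x,\xi\rangle]$, which in the quantization $D_{a}(c,\hbar)=\hbar D_{a}(c/\hbar)$ survives at order $\hbar^{0}$, whereas $[\hbar\partial_{a},\langle x,\xi\rangle]=\hbar\langle a,\xi\rangle$ is of order $\hbar$. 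These two contributions live at different orders and cannot be added into a single consistently defined ``Poisson bracket'' on $\mathbb{C}W\ltimes\mathcal{O}(T^{*}V_{reg})$ — indeed $D_{a}(c,\hbar)$ and $\langle x,\xi\rangle$ fail to commute already at leading order, so there is no $\hbar^{-1}$-rescaled commutator to degenerate to. Since your entire middle cancellation is driven by exactly those reflection terms, it has nothing to act on classically, and the closing step also lacks its quantum engine (there is no ``apply the operator to the constant function $1$'' to kill a momentum-free remainder; a homogeneity count alone does not do it). Your $\hbar\to 0$ backstop, by contrast, is sound and is in fact the standard rigorous justification alluded to in the Remark preceding the theorem — but it should then be the proof, not a footnote to one that does not close.
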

\begin{proof}
We have
\begin{align*}
D_{a}^{cl} = p_{a} - \sum_{\alpha \in R_{+}} \dfrac{c_{\alpha} \langle \alpha, a \rangle}{\langle \alpha, \cdot \rangle} (1-\sigma_{\alpha}) \text{ and }
D_{b}^{cl} = p_{b} - \sum_{\alpha \in R_{+}} \dfrac{c_{\alpha} \langle \alpha, b \rangle}{\langle \alpha, \cdot \rangle} (1-\sigma_{\alpha}).
\end{align*}
The Poisson bracket of the two Dunkl operators is
\begin{align}
\sum_{i=a,b} \left( \dfrac{\partial D_{a}^{cl} }{\partial p_{i}} \dfrac{\partial D_{b}^{cl} }{\partial x_{i}} - \dfrac{\partial D_{b}^{cl} }{\partial p_{i}}\dfrac{\partial D_{a}^{cl} }{\partial x_{i}} \right) &= \cancelto{1}{\dfrac{\partial D_{a}^{cl} }{\partial p_{a}}} \dfrac{\partial D_{b}^{cl} }{\partial x_{a}} - \cancelto{0}{\dfrac{\partial D_{b}^{cl} }{\partial p_{a}}\dfrac{\partial D_{a}^{cl} }{\partial x_{a}}} + \cancelto{0}{\dfrac{\partial D_{a}^{cl} }{\partial p_{b}} \dfrac{\partial D_{b}^{cl} }{\partial x_{b}}} - \cancelto{1}{\dfrac{\partial D_{b}^{cl} }{\partial p_{b}}}\dfrac{\partial D_{a}^{cl} }{\partial x_{b}} \nonumber \\
&= \dfrac{\partial D_{b}^{cl} }{\partial x_{a}}  - \dfrac{\partial D_{a}^{cl} }{\partial x_{b}} \label{poisson_dunkl_classical}
\end{align}
as $D_{a}^{cl} $ and $D_{b}^{cl} $ are independent of $p_{b}$ and $p_{a}$, respectively, and $\dfrac{\partial D_{a}^{cl}}{\partial p_{a}} = \dfrac{\partial D_{b}^{cl}}{\partial p_{b}} = 1$. So, we get
\begin{align*}
\dfrac{\partial D_{b}^{cl} }{\partial x_{a}}  &= \dfrac{\partial }{\partial x_{a}} \left( p_{b} - \sum_{\alpha \in R_{+}} \dfrac{c_{\alpha} \langle \alpha, b \rangle}{\langle \alpha, \cdot \rangle} (1-\sigma_{\alpha}) \right) = \cancelto{0}{\dfrac{\partial p_{b}}{\partial x_{a}}} - \dfrac{\partial }{\partial x_{a}} \left(\sum_{\alpha \in R_{+}} \dfrac{c_{\alpha} \langle \alpha, b \rangle}{\langle \alpha, \cdot \rangle} (1-\sigma_{\alpha}) \right)  \\
&=  \sum_{\alpha \in R_{+}} \dfrac{c_{\alpha} \langle \alpha, b \rangle \langle \alpha, a \rangle}{\langle \alpha, \cdot \rangle^{2}} (1-\sigma_{\alpha}),
\end{align*}
and similarly,
\begin{align*}
\dfrac{\partial D_{a}^{cl} }{\partial x_{b}}  =  \sum_{\alpha \in R_{+}} \dfrac{c_{\alpha} \langle \alpha, a \rangle \langle \alpha, b \rangle}{\langle \alpha, \cdot \rangle^{2}} (1-\sigma_{\alpha}),
\end{align*}
and substituting in Equation \eqref{poisson_dunkl_classical}, we have
\begin{align*}
 \dfrac{\partial D_{b}^{cl} }{\partial x_{a}}  - \dfrac{\partial D_{a}^{cl} }{\partial x_{b}} = 0.
\end{align*}
\end{proof}

\par Once again, our next step is to utilize the Poisson commutativity of the Classical Dunkl operators \eqref{classicalDunkl} in order to recover the Hamiltonian of the classical rational Calogero-Moser \eqref{CM}. The classical Dunkl operator \eqref{classicalDunkl} contains only first order momentum terms, in contrast with the squared momentum terms in the Hamiltonian \eqref{CM}.

\subsection{The Classical Olshanetsky-Perelomov Operator.}
\par Recall that in the quantum case, we recovered the Calogero-Moser Hamiltonian \eqref{quantumCM} via the Olshanetsky-Perelomov operator \eqref{opop}. Similarly, we define the \emph{classical version of the Olshanetsky-Perelomov operator} as in \cite{Etingof}, which will then enable us to obtain the Hamiltonian \ref{CM}.
\begin{definition}
The \emph{classical Olshanetsky-Perelomov} operator associated with the $W$-invariant function $c_{\alpha}: R_{+} \to \mathbb{C}$ is defined as
\begin{align}\label{opOperator}
L^{cl} = \sum_{j} p_{j}^{2} - \sum_{\alpha \in R_{+}} \dfrac{c_{\alpha}^{2} \langle \alpha, \alpha \rangle}{\langle \alpha , \cdot \rangle^{2}}.
\end{align}
\end{definition}

\par As before, we define the restriction operation, necessary for the further computations.
\begin{definition}
Given a classical Dunkl operator $D_{a}^{cl}$, define the \emph{restriction operation} $\rescl$ by the rule
\begin{align*}
\rescl (D_{a}^{cl}) : (\mathbb{C}W \ltimes \mathcal{O} (T^{*}(V_{reg})))^{W} & \to ( \mathcal{O} (T^{*}(V_{reg})))^{W}, \\
\rescl \left(\sum f_{g} \cdot g \right) & \mapsto \sum f_{g},
\end{align*}
for $g \in W$, $f \in \mathcal{O} (T^{*} V_{reg})$.
\end{definition}
 
 \par Now that we are equipped with the necessary tools, we introduce the operator, containing the square of the classical Dunkl operator, which we shall then restrict to the space of $W$-invariants. We follow the steps outlined by Etingof in \cite{Etingof}.

\begin{proposition}
Let $e_{1}, \dots, e_{n}$ be an orthonormal basis for $V$. Then we have
\begin{align*}
\rescl\left( \sum_{j=1}^{n} (D_{e_{j}}^{cl})^{2} \right)= \bar{L}^{cl},
\end{align*}
where 
\begin{align}\label{opopc}
\bar{L}^{cl} = \sum_{j} p_{j}^{2} - \sum_{\alpha \in R_{+}} \dfrac{c_{\alpha} \langle \alpha, \alpha \rangle}{\langle \alpha , \cdot \rangle } p_{\alpha^{\vee}}.
\end{align}
\end{proposition}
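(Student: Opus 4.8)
The plan is to mirror the proof of the quantum analogue, Proposition \ref{opprop}, replacing the partial derivative $\partial_\xi$ by the momentum function $p_\xi$ throughout and the restriction $\res$ by $\rescl$. First I would compute the square $(D^{cl}_\xi)^2$ for a single direction $\xi \in \mathbb{C}^n$, deferring the sum over the basis to the end. Writing $D^{cl}_\xi = p_\xi - A_\xi$ with $A_\xi = \sum_{\alpha\in R_+}\frac{c_\alpha\langle\alpha,\xi\rangle}{\langle\alpha,\cdot\rangle}(1-\sigma_\alpha)$, I expand $(D^{cl}_\xi)^2 = p_\xi^2 - (p_\xi A_\xi + A_\xi p_\xi) + A_\xi^2$. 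The crucial structural point, and the place where the classical case genuinely differs from the quantum one, is how $p_\xi$ interacts with the two factors of $A_\xi$: since $p_\xi$ is a coordinate function on $T^*(V_{reg})$, it commutes with the rational coefficients $\langle\alpha,\cdot\rangle^{-1}$ (there is no Leibniz/derivative term, unlike in the quantum computation), but it does \emph{not} commute with the reflections, because $W$ acts on the cotangent bundle. The relevant identity is $\sigma_\alpha p_\xi = p_{\sigma_\alpha(\xi)}\sigma_\alpha$, the exact analogue of $\sigma_\alpha\partial_\xi = \partial_{\sigma_\alpha(\xi)}\sigma_\alpha$.

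Next I would use $\sigma_\alpha(\xi) = \xi - \langle\alpha,\xi\rangle\alpha^\vee$ (Definition \ref{definitionOrthogonalReflection} together with the coroot of Definition \ref{corootDef}) to rewrite $p_{\sigma_\alpha(\xi)} = p_\xi - \langle\alpha,\xi\rangle p_{\alpha^\vee}$. Substituting this into $p_\xi A_\xi + A_\xi p_\xi$ and collecting terms yields the three-term form
\begin{align*}
(D^{cl}_\xi)^2 = p_\xi^2 - \sum_{\alpha\in R_+}\frac{2c_\alpha\langle\alpha,\xi\rangle}{\langle\alpha,\cdot\rangle}p_\xi(1-\sigma_\alpha) - \sum_{\alpha\in R_+}\frac{c_\alpha\langle\alpha,\xi\rangle^2}{\langle\alpha,\cdot\rangle}p_{\alpha^\vee}\sigma_\alpha + A_\xi^2,
\end{align*}
which has precisely the same shape as the intermediate expression in Proposition \ref{opprop}. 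I would then sum over $\xi = e_j$, $j=1,\dots,n$, using $\sum_j\langle\alpha,e_j\rangle e_j = \alpha$ on the middle term and the orthonormality identities $\sum_j\langle\alpha,e_j\rangle^2 = \langle\alpha,\alpha\rangle$ and $\sum_j\langle\alpha,e_j\rangle\langle\beta,e_j\rangle = \langle\alpha,\beta\rangle$ on the third term and on the double sum respectively.

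Finally I would apply $\rescl$, which sums the coefficient functions of the group elements. The free term $\sum_j p_j^2$ passes through unchanged. The term built from $p_\alpha(1-\sigma_\alpha)$ restricts to zero, since the coefficients of $1$ and of $\sigma_\alpha$ are negatives of each other (the momentum $p_\alpha$ sits on the left and is not transported by the reflection). The term $-\sum_\alpha\frac{c_\alpha\langle\alpha,\alpha\rangle}{\langle\alpha,\cdot\rangle}p_{\alpha^\vee}\sigma_\alpha$ survives and restricts to exactly $-\sum_\alpha\frac{c_\alpha\langle\alpha,\alpha\rangle}{\langle\alpha,\cdot\rangle}p_{\alpha^\vee}$, the interaction term of $\bar{L}^{cl}$. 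This leaves the double sum $A_\xi^2$, where the main work lies: grouping the terms by $w = \sigma_\alpha\sigma_\beta$, the contributions with $w\neq\id$ vanish identically by Proposition \ref{heckmanProp} applied to the $W$-invariant form $B=\langle\cdot,\cdot\rangle$, while the diagonal contributions ($\alpha=\beta$, where $(1-\sigma_\alpha)^2 = 2(1-\sigma_\alpha)$) restrict to zero by the same coefficient cancellation used for the $(1-\sigma_\alpha)$ term. Combining these gives $\rescl\left(\sum_j(D^{cl}_{e_j})^2\right) = \sum_j p_j^2 - \sum_\alpha\frac{c_\alpha\langle\alpha,\alpha\rangle}{\langle\alpha,\cdot\rangle}p_{\alpha^\vee} = \bar{L}^{cl}$, as required. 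The main obstacle is the careful bookkeeping of the double sum together with the non-commutativity $\sigma_\alpha p_\xi = p_{\sigma_\alpha(\xi)}\sigma_\alpha$; everything else is a direct transcription of the quantum argument.
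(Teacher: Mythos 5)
Your proposal is correct and follows essentially the same route as the paper's proof: expand the square, commute $p_{\xi}$ past the reflections via $\sigma_{\alpha} p_{\xi} = p_{\sigma_{\alpha}(\xi)} \sigma_{\alpha}$ and $\sigma_{\alpha}(\xi) = \xi - \langle \alpha, \xi \rangle \alpha^{\vee}$, sum over the orthonormal basis, dispose of the double sum via Proposition \ref{heckmanProp}, and let $\rescl$ annihilate the $p_{\xi}(1-\sigma_{\alpha})$ terms while stripping $\sigma_{\alpha}$ from the surviving $p_{\alpha^{\vee}}\sigma_{\alpha}$ term. The one point where you deviate is in treating the diagonal $\alpha = \beta$ contributions of $A_{\xi}^{2}$ separately from the off-diagonal ones --- which is in fact the more careful reading, since Proposition \ref{heckmanProp} is stated only for $\sigma_{\alpha}\sigma_{\beta} = w \neq \id$, whereas the paper cancels the entire double sum in a single stroke by citing that proposition.
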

\begin{proof}
We begin by computing the square of the classical Dunkl operator, namely
\begin{align*}
(D_{e_{j}}^{cl})^{2} = p_{j}^{2}- p_{j} \sum_{\alpha \in R_{+}} \dfrac{c_{\alpha} \langle \alpha, e_{j} \rangle}{\langle \alpha, \cdot \rangle} (1-\sigma_{\alpha}) &+  \sum_{\alpha \in R_{+}} \dfrac{c_{\alpha} \langle \alpha, e_{j} \rangle}{\langle \alpha, \cdot \rangle} (1-\sigma_{\alpha})p_{j}  \\
 &+ \cancelto{0}{\sum_{\alpha \in R_{+}}\sum_{\beta \in R_{+}} c_{\alpha} c_{\beta} \langle \alpha, \beta \rangle \dfrac{(1-\sigma_{\alpha})}{\langle \alpha, \cdot \rangle} \dfrac{(1-\sigma_{\beta})}{ \langle \beta, \cdot \rangle}},
\end{align*}
where again we cancel the last term due to Proposition \ref{heckmanProp}. Next, we have
\begin{align*}
(D_{e_{j}}^{cl})^{2} &= p_{j}^{2}- p_{j} \sum_{\alpha \in R_{+}} \dfrac{c_{\alpha} \langle \alpha, e_{j} \rangle}{\langle \alpha, \cdot \rangle} (1-\sigma_{\alpha}) +  \sum_{\alpha \in R_{+}} \dfrac{c_{\alpha} \langle \alpha, e_{j} \rangle}{\langle \alpha, \cdot \rangle} (1-\sigma_{\alpha})p_{j}  \\
&= p_{j}^{2} -  \sum_{\alpha \in R_{+}} \dfrac{c_{\alpha} \langle \alpha, e_{j} \rangle}{\langle \alpha, \cdot \rangle} \left( p_{j} (1-\sigma_{\alpha}) + (1-\sigma_{\alpha}) p_{j} \right) \\
&= p_{j}^{2} -  \sum_{\alpha \in R_{+}} \dfrac{c_{\alpha} \langle \alpha, e_{j} \rangle}{\langle \alpha, \cdot \rangle}   \left( p_{j} - p_{j} \sigma_{\alpha} + p_{j} -\sigma_{\alpha} p_{j} \right) \\
&= p_{j}^{2} -  \sum_{\alpha \in R_{+}} \dfrac{c_{\alpha} \langle \alpha, e_{j} \rangle}{\langle \alpha, \cdot \rangle}   \left( p_{j} - p_{j} \sigma_{\alpha} + p_{j} - p_{j} \sigma_{\alpha} +\langle \alpha, e_{j} \rangle p_{\alpha^{\vee}} \sigma_{\alpha} \right),
\end{align*}
where we have applied the orthogonal reflection to the $p_{j}$. Next we group the terms as follows
\begin{align*}
(D_{e_{j}}^{cl})^{2} &= p_{j}^{2} -  \sum_{\alpha \in R_{+}} \dfrac{c_{\alpha} \langle \alpha, e_{j} \rangle}{\langle \alpha, \cdot \rangle}   ( 2p_{j} (1- \sigma_{\alpha}  ))- \sum_{\alpha \in R_{+}} \dfrac{c_{\alpha} \langle \alpha, e_{j} \rangle^{2}}{\langle \alpha, \cdot \rangle}   p_{\alpha^{\vee}} \sigma_{\alpha} 
\end{align*}
Now applying the restriction operation gives
\begin{align*}
\rescl\left( (D_{e_{j}}^{cl})^{2} \right) &= p_{j}^{2}-\sum_{\alpha \in R_{+}} \dfrac{c_{\alpha} \langle \alpha, e_{j} \rangle^{2}}{\langle \alpha, \cdot \rangle}   p_{\alpha^{\vee}} 
\end{align*}
and upon summing over all values of $p_{j}$ for $j=1,\dots, n$, we recover $\bar{L}^{cl}$ as required.
\end{proof}

\subsection{Classical Integrability.}
The natural question to ask is how do we obtain the Olshanetsky-Perelomov operator from the restriction of the squared classical Dunkl operators. The procedure is somewhat identical to the one in the quantum setting. We follow the approach outlined in Etingof \cite{Etingof}.
\begin{proposition}\label{propclassicalOP}
Define the automorphism $\theta_{c} : \mathbb{C}W \ltimes \mathcal{O} (T^{*}(V_{reg})) \to  \mathbb{C}W \ltimes \mathcal{O} (T^{*}(V_{reg})) $ by
\begin{align*}
\theta_{c}(x) = x, \quad \theta_{c}(\sigma_{\alpha})=\sigma_{\alpha}, \quad \text{ and } \quad \theta_{c}(p_{a})=p_{a}+\partial_{a}\log \delta_{c},
\end{align*}
where $\delta_{c} = \displaystyle{\prod_{\alpha \in R_{+}}} \langle \alpha , \cdot \rangle^{c_{\alpha}}$ for $x, a \in V$, $p \in \mathcal{O} (T^{*} V_{reg})$. Then the classical Olshanetsky-Perelomov $L^{cl}$ is obtained via
\begin{align*}
L^{cl} = \theta_{c} \left( \bar{L}^{cl} \right).
\end{align*}
\end{proposition}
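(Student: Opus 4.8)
The plan is to treat $\theta_c$ as the classical (symplectic) shadow of conjugation by $\delta_c$: whereas in the quantum gauge theorem conjugation sends $\partial_a \mapsto \delta_c^{-1}\partial_a\delta_c = \partial_a + \partial_a\log\delta_c$, here $\theta_c$ implements precisely the momentum shift $p_a \mapsto p_a + \partial_a\log\delta_c$ while fixing positions and reflections. Since both $\bar L^{cl}$ and $L^{cl}$ lie in the commutative algebra $\mathcal{O}(T^*V_{reg})$ (no group elements survive after restriction), $\theta_c$ acts there as an ordinary algebra automorphism and I may expand products freely. First I would record the elementary computation $\partial_a\log\delta_c = \sum_{\alpha\in R_+} c_\alpha \langle\alpha,a\rangle/\langle\alpha,\cdot\rangle$, obtained by differentiating $\log\delta_c = \sum_\alpha c_\alpha\log\langle\alpha,\cdot\rangle$. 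Next I would rewrite the first-order term of $\bar L^{cl}$ in a convenient form: using the coroot identity $\alpha^\vee = 2\alpha/\langle\alpha,\alpha\rangle$ and linearity $p_{\alpha^\vee} = \tfrac{2}{\langle\alpha,\alpha\rangle}p_\alpha$, one has $\tfrac{c_\alpha\langle\alpha,\alpha\rangle}{\langle\alpha,\cdot\rangle}p_{\alpha^\vee} = \tfrac{2c_\alpha}{\langle\alpha,\cdot\rangle}p_\alpha$, so that $\bar L^{cl} = \sum_j p_j^2 - \sum_{\alpha\in R_+}\tfrac{2c_\alpha}{\langle\alpha,\cdot\rangle}p_\alpha$.

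The core of the argument is then a direct expansion. Writing $A_j := \partial_{e_j}\log\delta_c$, the automorphism gives $\theta_c(p_j) = p_j + A_j$, and since $\theta_c$ fixes every function of position,
$$\theta_c\Bigl(\sum_j p_j^2\Bigr) = \sum_j p_j^2 + 2\sum_j A_j p_j + \sum_j A_j^2, \qquad \theta_c\Bigl(\sum_\alpha \tfrac{2c_\alpha}{\langle\alpha,\cdot\rangle}p_\alpha\Bigr) = \sum_\alpha \tfrac{2c_\alpha}{\langle\alpha,\cdot\rangle}p_\alpha + \sum_\alpha\tfrac{2c_\alpha}{\langle\alpha,\cdot\rangle}\partial_\alpha\log\delta_c.$$
Using $\sum_j \langle\alpha,e_j\rangle p_j = p_\alpha$ and $\sum_j\langle\alpha,e_j\rangle\langle\beta,e_j\rangle = \langle\alpha,\beta\rangle$, the linear-in-$p$ contribution of the first bracket collapses to $2\sum_\alpha\tfrac{c_\alpha}{\langle\alpha,\cdot\rangle}p_\alpha$, which cancels exactly against the linear term produced by applying $\theta_c$ to the second bracket. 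What remains is purely multiplicative, and the diagonal ($\alpha=\beta$) pieces of $\sum_j A_j^2$ and of $\sum_\alpha\tfrac{2c_\alpha}{\langle\alpha,\cdot\rangle}\partial_\alpha\log\delta_c$ combine (with the overall signs of $\bar L^{cl}$) to leave exactly $-\sum_{\alpha\in R_+}\tfrac{c_\alpha^2\langle\alpha,\alpha\rangle}{\langle\alpha,\cdot\rangle^2}$, which is the potential of $L^{cl}$.

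The one genuinely non-routine point — and the step I expect to be the main obstacle — is the vanishing of the residual off-diagonal double sum
$$\sum_{\substack{\alpha,\beta\in R_+\\ \alpha\neq\beta}}\frac{c_\alpha c_\beta\langle\alpha,\beta\rangle}{\langle\alpha,\cdot\rangle\langle\beta,\cdot\rangle}.$$
This is the same structural obstruction already encountered in Proposition \ref{heckmanProp} and in the quantum gauge computation. I would dispatch it by the anti-invariance argument used there: a short check — splitting $w\alpha = \pm\gamma$ with $\gamma\in R_+$ and observing that the two sign changes in numerator and denominator cancel, while $c_\alpha$ and $\langle\cdot,\cdot\rangle$ are $W$-invariant — shows the sum is $W$-invariant. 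Multiplying it by the discriminant $\pi = \prod_{\alpha\in R_+}\langle\alpha,\cdot\rangle$ clears all denominators and produces a $W$-anti-invariant polynomial of degree $|R_+|-2$. Since every nonzero anti-invariant is divisible by $\pi$ and hence has degree at least $|R_+|$, this product must vanish, forcing the sum to be zero.

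Assembling the surviving terms then yields
$$\theta_c\bigl(\bar L^{cl}\bigr) = \sum_j p_j^2 - \sum_{\alpha\in R_+}\frac{c_\alpha^2\langle\alpha,\alpha\rangle}{\langle\alpha,\cdot\rangle^2} = L^{cl},$$
as claimed. I would remark that the entire computation is the precise $\hbar\to 0$ degeneration of the quantum conjugation identity $\delta_c^{-1}\circ\bar L\circ\delta_c = L$: the terms $c_\alpha(c_\alpha+1)$ appearing quantum-mechanically become $c_\alpha^2$ here, reflecting the absence of the commutator contribution $[\partial_a,\langle\alpha,\cdot\rangle^{-1}]$ once derivatives are replaced by commuting classical momenta.
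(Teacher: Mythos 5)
Your proposal is correct and follows essentially the same route as the paper: rewrite $\bar{L}^{cl}=\sum_j p_j^2-\sum_{\alpha\in R_+}\tfrac{2c_\alpha}{\langle\alpha,\cdot\rangle}p_\alpha$, apply the momentum shift $p_j\mapsto p_j+\partial_{e_j}\log\delta_c$, cancel the linear-in-$p$ contributions, and identify the remaining multiplicative terms with the potential $-\sum_\alpha c_\alpha^2\langle\alpha,\alpha\rangle/\langle\alpha,\cdot\rangle^2$. You are in fact more careful than the paper at the one delicate point: the paper passes silently from the double sum of cross terms to the diagonal answer, whereas you explicitly dispatch the off-diagonal sum $\sum_{\alpha\neq\beta}c_\alpha c_\beta\langle\alpha,\beta\rangle/(\langle\alpha,\cdot\rangle\langle\beta,\cdot\rangle)$ by the anti-invariance and degree argument already used in Proposition \ref{heckmanProp}, which is the right justification.
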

\begin{proof}
First we rewrite $\bar{L}^{cl}$  as follows
\begin{align*}
\bar{L}^{cl} =  \sum_{j} p_{j}^{2} - \sum_{\alpha \in R_{+}} \dfrac{2 c_{\alpha} }{\langle \alpha , \cdot \rangle } p_{\alpha}
\end{align*}
Next we apply $\theta_{c}$ to $\bar{L}^{cl}$ which yields
\begin{align*}
\theta_{c}  \left( \bar{L}^{cl} \right) &= \theta_{c} \left(\sum_{j} p_{j}^{2} \right) -  \sum_{\alpha \in R_{+}} \dfrac{2 c_{\alpha} }{\langle \alpha , \cdot \rangle } \theta_{c}(p_{\alpha}) \\
&= \sum_{j} \left(\left( p_{j}+\partial_{j} \log \delta_{c}\right)\left(  p_{j}+\partial_{j} \log \delta_{c}\right) \right) -  \sum_{\alpha \in R_{+}} \dfrac{2 c_{\alpha} }{\langle \alpha , \cdot \rangle } \left( p_{\alpha} + \partial_{\alpha} \log \delta_{c} \right).
\end{align*}
Expanding the brackets we get
\begin{align}
\theta_{c}  \left( \bar{L}^{cl} \right) &= \sum_{j} \left( p_{j}^{2} +  p_{j} \partial_{j} \log \delta_{c} +  \partial_{j} \log \delta_{c} p_{j}+ \partial_{j} \log \delta_{c}\partial_{j} \log \delta_{c} \right)-   \sum_{\alpha \in R_{+}} \dfrac{2 c_{\alpha} }{\langle \alpha , \cdot \rangle }p_{\alpha}  -   \sum_{\alpha \in R_{+}} \dfrac{2 c_{\alpha} }{\langle \alpha , \cdot \rangle } \partial_{\alpha} \log \delta_{c} . \label{etingofclass}
\end{align}
Recall that the logarithm of a product is equal to the sum of the logarithms. Hence, we have
\begin{align*}
\sum_{j} \partial_{j} \log \delta_{c} = \sum_{\alpha \in R_{+}} \dfrac{c_{\alpha}\sum_{j} \alpha_{j}}{\langle \alpha , \cdot\rangle}.
\end{align*}
Thus, we get
\begin{align*}
\theta_{c}  \left( \bar{L}^{cl} \right) &= \sum_{j}   p_{j}^{2} +  \cancel{\sum_{j}   p_{j}  \sum_{\alpha \in R_{+}} \dfrac{c_{\alpha}\sum_{j} \alpha_{j}}{\langle \alpha , \cdot\rangle}} + \cancel{\sum_{\alpha \in R_{+}} \dfrac{c_{\alpha}\sum_{j} \alpha_{j}}{\langle \alpha , \cdot\rangle} \sum_{j}   p_{j} }\\
& \quad \quad + \sum_{\alpha \in R_{+}} \dfrac{c_{\alpha}\sum_{j} \alpha_{j}}{\langle \alpha , \cdot\rangle} \sum_{\beta \in R_{+}} \dfrac{c_{\beta}\sum_{j} \beta_{j}}{\langle \beta , \cdot\rangle}  - \cancel{\sum_{\alpha \in R_{+}} \dfrac{2 c_{\alpha} }{\langle \alpha , \cdot \rangle }p_{\alpha} } - \sum_{\alpha \in R_{+}} \dfrac{2 c_{\alpha} }{\langle \alpha , \cdot \rangle }  \sum_{\beta \in R_{+}} \dfrac{c_{\beta}\langle \beta , \beta \rangle}{\langle \beta , \cdot\rangle}  = \\
&= \sum_{j}   p_{j}^{2}  - \sum_{\alpha \in R_{+}} \dfrac{c_{\alpha}^{2} \langle \alpha, \alpha \rangle }{\langle \alpha , \cdot \rangle^{2} } = L^{cl}.
\end{align*}
\end{proof}

\par Analogously to the quantum setting, we use Chevalley's theorem in order to extract a classical integrable sense from the classical Olshanetsky-Perelomov operator.

\begin{theorem}
Suppose by the Chevalley theorem that $\mathbb{C}[V]^{W} \cong \mathbb{C}[p_{1}, \dots, p_{n}]$ with $p_{1}, \dots, p_{n}$ being homogeneous polynomial functions of degrees $d_{1} \leq \dots \leq d_{n}$. Then the set of operators $L^{cl}_{i} = \rescl (\theta (p_{i} (D^{cl}_{e_{1}}, \dots, D^{cl}_{e_{n}} )))$ forms a classical integrable system.
\end{theorem}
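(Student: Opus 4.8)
The plan is to mirror the quantum argument of Theorem \ref{heckmanTheorem}, replacing the associative commutator by the Poisson bracket and invoking Theorem \ref{poisson_dunkl} (Poisson commutativity of the classical Dunkl operators) in place of the commutativity of their quantum counterparts. Classical integrability in the sense of Definition \ref{liouvilleintegrabilitiy} demands two things: that the $n$ functions $L^{cl}_{i}$ be in pairwise involution, $\{L^{cl}_{i}, L^{cl}_{j}\}=0$, and that they be functionally independent. I will treat these in turn.

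First I would establish involution at the level of the unrestricted operators, before $\theta_{c}$ and $\rescl$ are applied. By Theorem \ref{poisson_dunkl} we have $\{D^{cl}_{e_{k}}, D^{cl}_{e_{l}}\}=0$ for all $k,l$, and since $p_{i}$ and $p_{j}$ are polynomials in these mutually Poisson-commuting quantities, repeated application of the Leibniz rule together with the antisymmetry of the Poisson bracket (Proposition \ref{poissonBrackets}) yields $\{p_{i}(D^{cl}_{e_{1}},\dots,D^{cl}_{e_{n}}),\, p_{j}(D^{cl}_{e_{1}},\dots,D^{cl}_{e_{n}})\}=0$ as an identity in the crossed product $\mathbb{C}W \ltimes \mathcal{O}(T^{*}(V_{reg}))$. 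Next I would pass through the shift automorphism $\theta_{c}$ of Proposition \ref{propclassicalOP}. Because $\theta_{c}$ fixes positions and shifts momenta by the exact one-form $\partial_{a}\log\delta_{c}$, it is a canonical transformation and hence preserves Poisson brackets, so $\{\theta(p_{i}(D^{cl})),\theta(p_{j}(D^{cl}))\}=\theta(\{p_{i}(D^{cl}),p_{j}(D^{cl})\})=0$.

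It then remains to descend to the restricted operators. The essential structural fact is that on the subalgebra of $W$-invariant elements the restriction $\rescl$ is compatible with the Poisson bracket, namely $\rescl(\{P,Q\})=\{\rescl(P),\rescl(Q)\}$ whenever $P,Q$ are $W$-invariant; this is the classical shadow of the multiplicativity $\res(PQ)=\res(P)\res(Q)$ used in Theorem \ref{heckmanTheorem}. Granting it, I conclude
\begin{align*}
\{L^{cl}_{i}, L^{cl}_{j}\}=\{\rescl(\theta(p_{i}(D^{cl}))),\,\rescl(\theta(p_{j}(D^{cl})))\}=\rescl(\{\theta(p_{i}(D^{cl})),\theta(p_{j}(D^{cl}))\})=0.
\end{align*}
For functional independence I would observe that each $L^{cl}_{i}$ is a polynomial function on phase space whose top-degree-in-momenta part is exactly $p_{i}(p_{1},\dots,p_{n})$; since the $p_{i}$ are algebraically independent by Chevalley's Theorem \ref{chevalley}, so are their leading symbols, whence the $n$ functions $L^{cl}_{i}$ are functionally independent. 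This verifies both clauses of Definition \ref{liouvilleintegrabilitiy}.

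The main obstacle is precisely the compatibility of $\rescl$ with the Poisson bracket on the $W$-invariant subalgebra. The restriction extracts the coefficient of the identity $e\in W$ in an element of the crossed product, and in a bracket $\{P,Q\}$ one must show that the nontrivial group-algebra components contribute nothing to the identity part once $P$ and $Q$ are $W$-invariant. I would prove this by writing $P=\sum_{g\in W}P_{g}\cdot g$ and $Q=\sum_{h\in W}Q_{h}\cdot h$, using $W$-invariance to relate the coefficient functions $P_{g}$, $Q_{h}$ under the $W$-action, and verifying directly that the identity component of $\{P,Q\}$ reduces to $\{P_{e},Q_{e}\}=\{\rescl(P),\rescl(Q)\}$, the cross terms cancelling in conjugate pairs exactly as the components paired off in Proposition \ref{heckmanProp}. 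Everything else in the argument is the routine classical transcription of the quantum proof.
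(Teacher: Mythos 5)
Your argument follows the same route as the paper's own proof, which is a single sentence citing Chevalley's Theorem \ref{chevalley} and the Poisson commutativity of the classical Dunkl operators (Theorem \ref{poisson_dunkl}); you supply the details the paper leaves implicit (the canonical-transformation property of $\theta_{c}$, the compatibility of $\rescl$ with the Poisson bracket on $W$-invariants, and functional independence via leading symbols), and they are sound. One minor correction: the paper defines $\rescl\left(\sum f_{g}\cdot g\right)=\sum_{g} f_{g}$, the sum of \emph{all} coefficients rather than the coefficient of the identity element alone, though this does not affect the validity of the key identity $\rescl(\{P,Q\})=\{\rescl(P),\rescl(Q)\}$ on the invariant subalgebra.
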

\begin{proof}
The proof follows from Chevalley's Theorem \ref{chevalley} and the Poisson commutativity of the classical Dunkl operators as in Theorem \ref{poisson_dunkl}.
\end{proof}

\par Finally, we can confirm the classical integrability of the rational Calogero-Moser system. We summarize this statement below.
\begin{theorem}
Let $R = A_{n-1}$. Then $L^{cl} = H$, where $H$ is the Hamiltonian of the classical rational Calogero-Moser system \eqref{CM}, and the system is integrable.
\end{theorem}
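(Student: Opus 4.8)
The plan is to specialize the classical Olshanetsky-Perelomov operator \eqref{opOperator} to the root system $A_{n-1}$ and verify that it reproduces the Calogero-Moser Hamiltonian \eqref{CM} term by term, then deduce integrability from the fact that $L^{cl}$ already sits inside the commuting family constructed in the preceding theorem. First I would recall from Example \ref{normRootSysEx1} that the Weyl group of $A_{n-1}$ is the symmetric group $\mathcal{S}_{n}$, acting by permuting coordinates, with positive roots $R_{+} = \{ e_{i} - e_{j} : 1 \leq i < j \leq n \}$. Since $\mathcal{S}_{n}$ acts transitively on these roots, they form a single $W$-orbit, so the $W$-invariant multiplicity function is forced to be constant; writing $c_{\alpha} = k$ makes this explicit and matches the identification $c_{\alpha} = k$ already used in the quantum case.

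Next I would substitute this data into the classical operator. For $\alpha = e_{i} - e_{j}$ the normalization gives $\langle \alpha, \alpha \rangle = 2$ and $\langle \alpha, \cdot \rangle = q_{i} - q_{j}$, so the potential term becomes
\begin{align*}
\sum_{\alpha \in R_{+}} \dfrac{c_{\alpha}^{2} \langle \alpha, \alpha \rangle}{\langle \alpha, \cdot \rangle^{2}} = \sum_{1 \leq i < j \leq n} \dfrac{2 k^{2}}{(q_{i} - q_{j})^{2}} = \sum_{\substack{i,j=1 \\ i \neq j}}^{n} \dfrac{k^{2}}{(q_{i} - q_{j})^{2}},
\end{align*}
where the last equality trades the sum over unordered pairs for one over ordered pairs, absorbing the factor of two coming from $\langle \alpha, \alpha \rangle = 2$. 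This reproduces exactly the interaction potential of \eqref{CM}. Matching the kinetic part $\sum_{j} p_{j}^{2}$ against $\tfrac{1}{2}\sum_{j} p_{j}^{2}$ then identifies $L^{cl}$ with $H$ up to the overall normalization conventions (an ambient factor of two, and the sign convention on the potential) already tacitly in force in the quantum identification $L = \widehat{H}$; the honest main subtlety of the argument is precisely this bookkeeping, not any structural difficulty.

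For integrability I would observe that $L^{cl}$ is not an isolated object but the degree-two member of the commuting family produced in the preceding theorem. For $\mathcal{S}_{n}$ the lowest-degree Chevalley invariant beyond the linear one is the power sum $p_{2} = \sum_{j} x_{j}^{2}$, and by the foregoing proposition together with Proposition \ref{propclassicalOP} one has $L^{cl} = \theta_{c}(\bar{L}^{cl}) = \rescl\big(\theta_{c}(p_{2}(D^{cl}_{e_{1}}, \dots, D^{cl}_{e_{n}}))\big)$, so $L^{cl}$ coincides with the integral $L^{cl}_{2}$ of that family. Hence $L^{cl}$ belongs to a set of $n$ algebraically independent operators $L^{cl}_{i}$ which Poisson-commute, ultimately because the classical Dunkl operators are in involution by Theorem \ref{poisson_dunkl}. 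This furnishes the required integrals of motion in involution, establishing Liouville integrability of the classical rational Calogero-Moser system in the sense of Definition \ref{liouvilleintegrabilitiy}.
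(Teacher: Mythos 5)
Your proposal follows essentially the same route as the paper, which simply notes that the Weyl group of $A_{n-1}$ is $\mathcal{S}_{n}$, so $c_{\alpha}=k$, and that integrability follows from the Poisson commutativity of the classical Dunkl operators. In fact you supply considerably more detail than the paper does --- the explicit computation of the potential term, the honest acknowledgment of the factor-of-two and sign conventions relating $L^{cl}$ to $H$, and the identification of $L^{cl}$ as the degree-two member of the commuting family --- all of which the paper leaves implicit.
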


\begin{proof}
Once again, the corresponding group $W$ of the root system of type $A_{n-1}$ is the symmetric group $\mathcal{S}_{n}$. Thus, $c_{\alpha} = k$. The integrability is clear from the Poisson commutativity of the classical Dunkl operators.
\end{proof} 

\section*{Conclusion}
\indent \par We conclude this paper with several remarks. While the integrability results for both the classical and the quantum Calogero-Moser systems are familiar, there are no works available at present to the best knowledge of the author with the explicit computations for any cases where $n>2$, most likely due to the complexity and tediousness of the calculations involving a large number of indeces. This in turn is overcome with by the ability of the software \emph{Mathematica} to manipulate expressions symbolically. That being said, it is in the best hopes of the author that more members of the mathematical community worldwide would become interested and inspired to utilize either \emph{Mathematica} or other symbolic packages, for instance, \texttt{python}'s package \texttt{SymPy}. One can take advantage of many features, including, but not limited to non-commutative multiplication, which is essential for manipulations in the quantum setting.
\par Finally, a short remark on potential further studies. The most logical generalization and continuation would be the explicit study of the trigonometric Calogero-Moser system, which is done again via the Lax pair formalism \cite{Rujsenaars} or by means of the Dunkl-Cherednik operators \cite{Cherednik2}. Even further, one can investigate the integrability of the model with potential described by the Weierstrass $\wp$-function via Elliptic Dunkl operators \cite{Cherednik3}.

\appendix
\section{Appendices} \label{app}
\subsection{Computations for the Classical Rational Case.}\label{appRational}
\begin{figure}[H]
\includegraphics[scale=0.8]{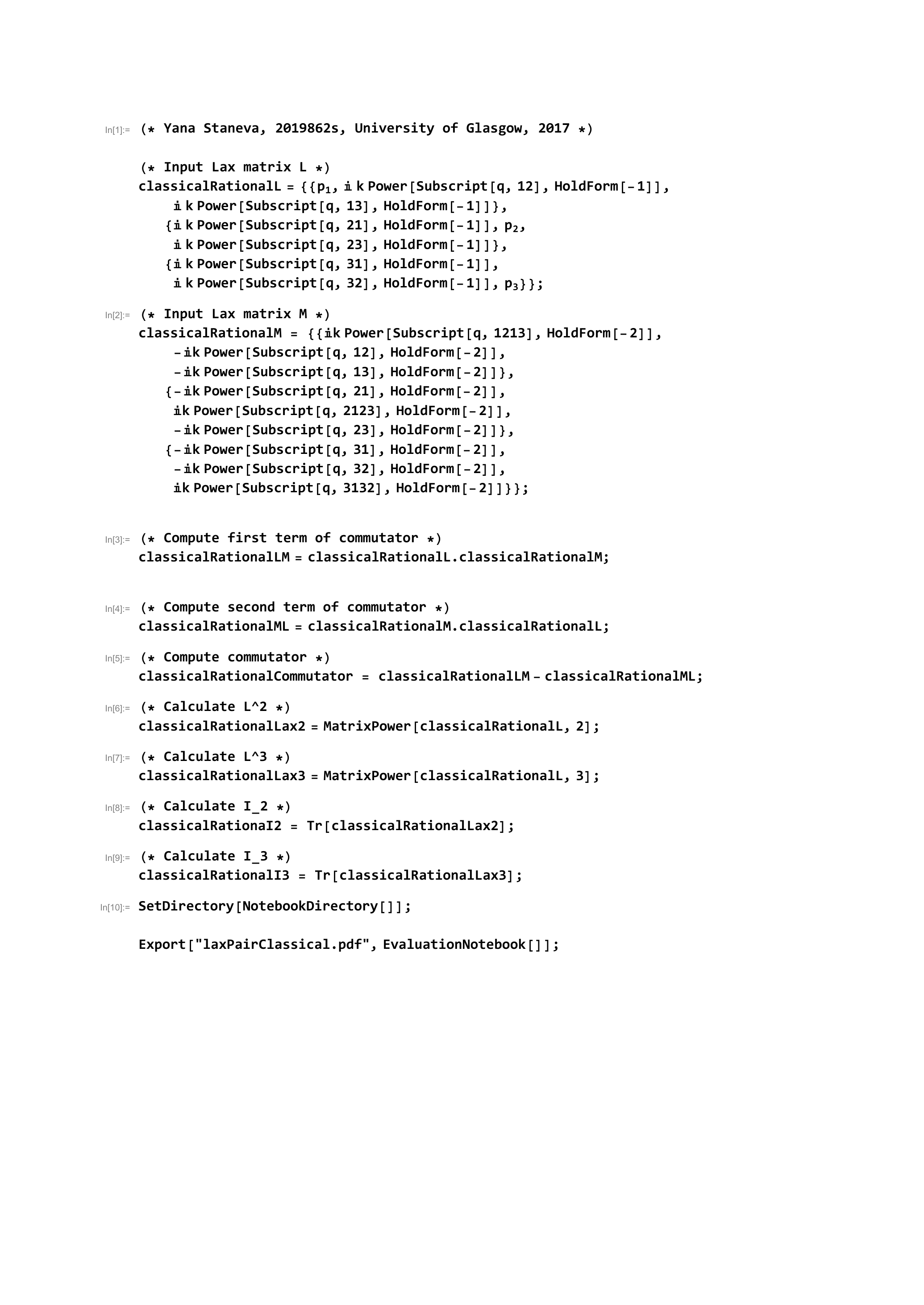}
\end{figure}
\subsection{Code for the Quantum Rational Case.}\label{appQuantumRational}
\begin{figure}[H]
\includegraphics[scale=0.8]{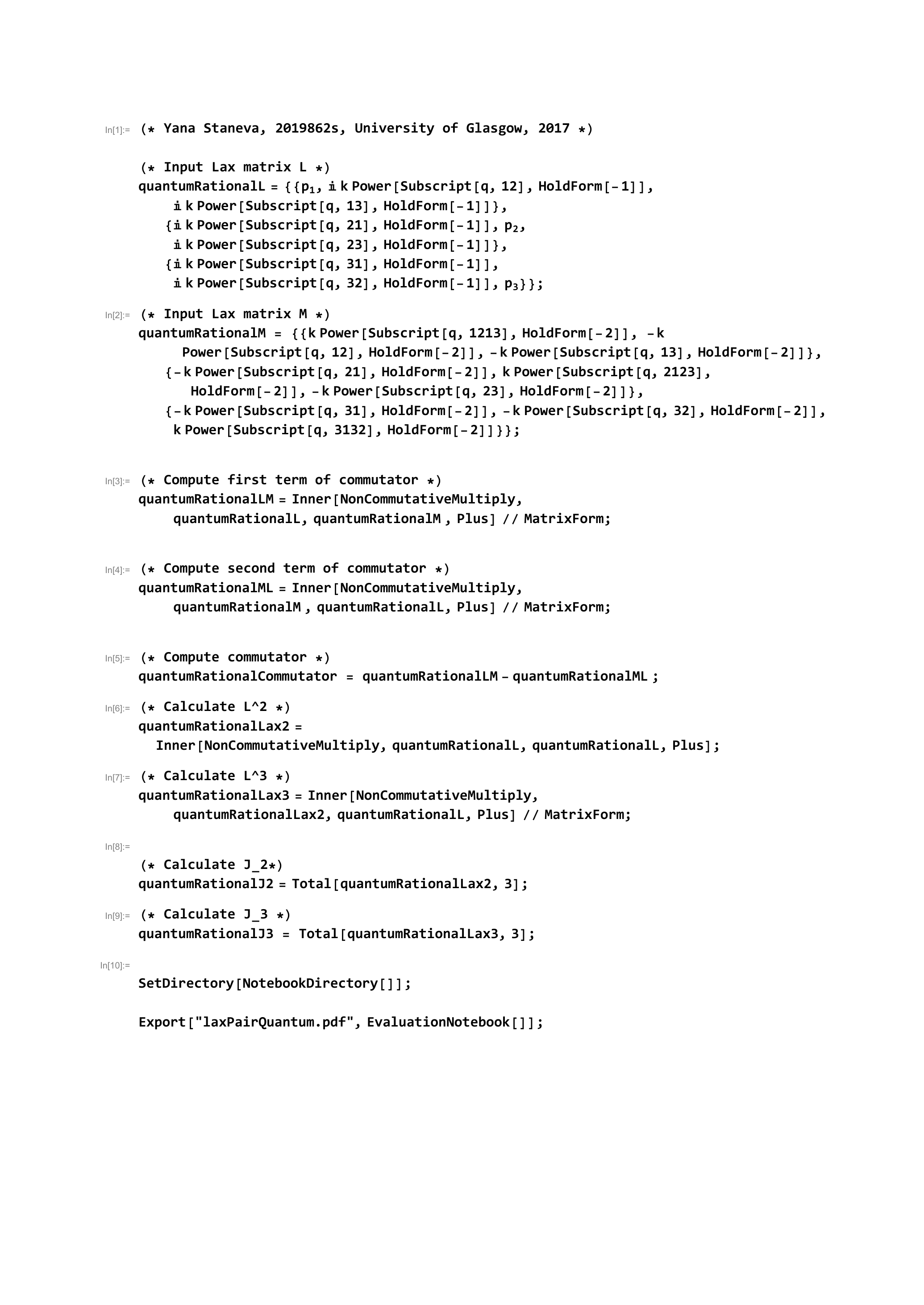}
\end{figure}
\newpage

\bibliography{Yana_Staneva_Explicit_Computations_for_the_Classical_and_Quantum_Integrability_of_the_3-Dimensional_Rational_Calogero-Moser_System} 
\bibliographystyle{amsplain}
\end{document}